\newcommand{\numbereqn}{\addtocounter{equation}{1}\tag{\theequation}} 
\renewcommand{\epsilon}{\varepsilon}
\newtheorem{theorem}{Theorem}[section]
\newtheorem{lemma}{Lemma}[section]
\newtheorem{prop}{Proposition}[subsection]
\theoremstyle{remark}
\newtheorem{remark}{Remark}[section]
\newcommand\widecheck[1]{%
	\savestack{\tmpbox}{\stretchto{%
			\scaleto{%
				\scalerel*[\widthof{\ensuremath{#1}}]{\kern-.6pt\bigwedge\kern-.6pt}%
				{\rule[-\textheight/2]{1ex}{\textheight}}
			}{\textheight}%
		}{0.5ex}}%
	\stackon[1pt]{#1}{\scalebox{-1}{\tmpbox}}%
}
\renewcommand{\tilde}{\widetilde}
\renewcommand{\hat}{\widehat}
\newcommand{\R}{\mathbb{R}}
\newcommand{\X}{\mathscr{X}}
\newcommand{\B}{\mathscr{B}}
\newcommand{\Z}{\mathcal{Z}}
\newcommand{\mcx}{\widetilde{X}}
\renewcommand{\sp}{\lambda}
\newcommand{\one}{\mathbbm{1}}
\newcommand{\hs}{\text{HS}}
\newcommand{\N}{\text{N}}
\newcommand{\pg}{\text{Polya-Gamma}}
\newcommand{\fb}{\text{F}}
\DeclareMathOperator{\tr}{trace}
\DeclareMathOperator{\E}{E}
\DeclareMathOperator{\var}{var}
\newcommand{\diag}{\text{diag}}
\newcommand{\psw}{Polson, Scott and Windle\xspace}
\newcommand{\betab}{\bm{\beta}}
\newcommand{\thetab}{\bm{\theta}}
\newcommand{\betat}{\tilde{\bm{\beta}}}
\newcommand{\wb}{\bm{w}}
\newcommand{\yb}{\bm{y}}
\newcommand{\bbo}{\bm{b}}
\newcommand{\ut}{\tilde{\bm{u}}}
\newcommand{\mub}{\bm{\mu}}
\newcommand{\gt}{\tilde{g}}
\newcommand{\Ct}{\tilde{C}}
\newcommand{\zb}{\bm{z}}
\newcommand{\It}{\tilde{I}}
\newcommand{\ub}{\bm{u}}
\newcommand{\nut}{\tilde{\nu}}
\newcommand{\Ut}{\tilde{U}}
\newcommand{\betas}{{\betab^*}}
\newcommand{\mustar}{{\mub^*}}
\newcommand{\Hmn}{{\hat{H}^{(N)}_m}}
\newcommand{\Hmnc}{{\check{H}^{(N)}_m}}
\newcommand{\Smn}{{\hat{S}^{(N)}_m}}
\newcommand{\pt}{\tilde{p}}
\newcommand{\pit}{\tilde{\pi}}
\newcommand{\fs}{Fr{\"u}hwirth-Schnatter }
\DeclareMathOperator{\spmax}{\lambda_{\max}}
\newcommand*{\supplproofname}{Proof}
\newenvironment{supplproof}[1][\supplproofname]{\begin{proof}[#1]}{\end{proof}}
\begin{document}
	
	\begin{frontmatter}
		\title{Consistent estimation of the spectrum of trace class data augmentation algorithms}
		\runtitle{Spectrum estimation for trace class DA algorithms}
		
		\begin{aug}
		\author{\fnms{Saptarshi} \snm{Chakraborty}\ead[label=e1]{c7rishi@ufl.edu}}
		\and
		\author{\fnms{Kshitij} \snm{Khare}\ead[label=e2]{kdkhare@stat.ufl.edu}}
		\address{}
		
		\thankstext{t2}{Kshitij Khare (\printead{e2}) is Associate Professor, Department of Statistics, University of Florida. Saptarshi Chakraborty (\printead{e1}) is PhD candidate, Department of Statistics, University of Florida.}
		
		\affiliation{University of Florida}

		\runauthor{Chakraborty and Khare}
		
		\end{aug}
	
	\begin{abstract}
		\quad Markov chain Monte Carlo is widely used in a variety of scientific applications to generate approximate samples from intractable distributions. A thorough understanding of the convergence and mixing properties of these Markov chains can be obtained by studying the spectrum of the associated Markov operator. While several methods to bound/estimate the second largest eigenvalue are available in the literature, very few general techniques for consistent estimation of the entire spectrum have been proposed. Existing methods for this purpose require the Markov transition density to be available in closed form, which is often not true in practice, especially in modern statistical applications.  In this paper, we propose a novel method to consistently estimate the entire spectrum of a general class of Markov chains arising from a popular and widely used statistical approach known as Data Augmentation. The transition densities of these Markov chains can often only be expressed as intractable integrals. We illustrate the applicability of our method using real and simulated data. 
	\end{abstract}

		\begin{keyword}[class=MSC]
			\kwd[Primary ]{60J22}
			\kwd[; secondary ]{62F15}
		\end{keyword}
		
		\begin{keyword}
			\kwd{MCMC convergence}
			\kwd{trace class Markov operators}
			\kwd{eigenvalues of Markov operators}
			\kwd{Data Augmentation algorithms}
		\end{keyword}

\end{frontmatter}

\section{Introduction}

\noindent
Markov chain Monte Carlo (MCMC) techniques have become an indispensable tool in modern computations. With major applications in high dimensional settings, MCMC methods are routinely applied in various scientific disciplines. A major application of MCMC is to evaluate intractable integrals. To elaborate, let $(\X, \B, \nu)$ be an arbitrary measure space and let $\Pi$ be a probability measure on $\X$, with associated probability density $\pi(\cdot)$ (with respect to the measure $\nu$).  The quantity of interest is the integral
\[
\pi g := \int_{\X} g(x) \:d\Pi(x)
\]
where $g$ is a well-behaved function. In many modern applications, the above integral is highly intractable. In particular, it is not available in closed form, a (deterministic) numerical integration is extremely inefficient (often due to the high dimensionality of $\X$), and it can not be estimated by classical Monte Carlo techniques, as random (IID) generation from $\pi$ is not feasible. In such cases, one typically resorts to Markov Chain Monte Carlo (MCMC) methods. Here, a Markov chain $\mcx = (X_n)_{n\geq0}$  with equilibrium probability distribution $\Pi$ is generated (using any standard MCMC strategies such as Metropolis Hastings, Gibbs sampler etc.) and then a Monte Carlo average based on those Markov chain realizations is used to estimate 
$\pi g$. 

If the Markov chain $\mcx = (X_n)_{n \geq 0}$ is Harris ergodic (which is the case if the corresponding Markov transition density is strictly positive everywhere), then the cumulative averages based on the Markov chain realizations consistently estimate the integral of interest (see \citet{asmussen:glynn:2011}). The accuracy of the estimate depends on two factors: (a) the convergence behavior of the Markov chain to its stationary distribution, and (b) the dependence between the successive 
realizations of the chain at stationarity. An operator theoretic framework provides a 
unified way of analyzing these two related factors. Let us consider the Hilbert space $L^2 (\pi)$ of real valued functions $f$ with finite second moment with respect to $\pi$.  This is a Hilbert space where the inner product of $f, h \in L^2(\pi)$ is defined as
\[
\langle f, h \rangle = \int_{\X} f(x)\: {h(x)}\: \pi(x)\: d\nu(x) =  \int_{\X} f(x)\: {h(x)}\: d\Pi(x)
\]
and the corresponding norm is defined by $\|f\|_{L^2(\pi)} = \sqrt{\langle f,f \rangle}$. Then the Markov transition density $k(\cdot, \cdot)$ corresponding to the Markov chain $\mcx$ defines an operator $K: L^2(\pi) \rightarrow L^2(\pi)$  that maps $f$ to
\begin{equation} \label{K_defn}
(Kf)(x) = \int_{\X} \: k(x, x')  \: f(x') \:d\nu(x') = \int_{\X} \: \frac{k(x, x')}{\pi(x')} \: f(x') \:d\Pi(x').
\end{equation}

We will assume that the Markov chain $\mcx$ is reversible. In terms of the associated operator $K$, this means that $K$ is self-adjoint. The spectrum of the self-adjoint operator $K$,  denoted by $\sp(K)$, is the set of $\lambda$ for which $K -\lambda I$  is non-invertible (here $I$ denotes the identity operator that leaves a function unchanged). It is known that if $K$ is positive, i.e., if $\langle Kf, f \rangle \geq 0$ for all $f \in L^2(\pi)$, (which is the case when $K$ is the operator corresponding to a Data Augmentation (DA) Markov chain, see Section~\ref{sec_mcrma}), then $\sp(K) \subseteq  [0, 1]$  (see, e.g., \citet{retherford:1993}).  

In this paper, we will focus on situations when the (positive, self-adjoint) operator $K$ is trace class, i.e., 
$\sp(K)$ is countable and its elements are summable (\citet[p. 214]{conway:1990}). All finite state space Markov chains trivially correspond to trace class operators. Also, in recent years, an increasingly large class of continuous state space Markov chains from statistical applications have been shown to correspond to trace class operators  (see, e.g., \citet*{choi:roman:2017, chakraborty:khare:2017, pal:khare:hobert:2017,  qin:hobert:2016, hobert:jung:khare:qin:2015, rajarantnam:sparks:khare:zhang:2017}). Let $\sp(K) = \{\lambda_i\}_{i=0}^\infty$, where $(\lambda_i)_{i=0}^\infty$ are the decreasingly ordered eigenvalues of $K$. Then $\lambda_0 = 1$ and the difference $\gamma = \lambda_0 - \lambda_1 = 1 - \lambda_1$ is called the spectral gap for the compact Markov operator $K$. The spectral gap plays a major role in determining the convergence behavior of the Markov chain. In particular, any $g \in L^2(\pi)$ can be expressed as $g = \sum_{i=0}^\infty \eta_i \phi_i $  where $(\phi_i)_{i=0}^\infty$ is the sequence of eigenfunctions corresponding to $K$, and 
\begin{equation} \label{K_lambda1_reln}
\| K^m g - \pi g \|_{L^2 (\pi)}  =  \left(\sum_{i=1}^\infty \eta_i^2 \lambda_i^{2m}\right)^{1/2} \leq  \|g\| \lambda_1^m = \|g\| (1-\gamma)^m
\end{equation}
for any positive interger $m$. Hence, $\gamma$ determines the asymptotic rate of convergence of $\mcx$ to the stationary distribution. Furthermore, $(1-\gamma)^m$ provides maximal absolute correlation between $X_j$ and $X_{j+m}$ when $j$ is large (i.e., $X_j$ is sufficiently close to the target), and enables us to compute upper bounds of the asymptotic variance of MCMC estimators based on ergodic averages. 

There is a substantial literature devoted to finding a theoretical bound for the second largest eigenvalue $\lambda_1 = 1 - \gamma$ of a Markov operator. For finite state 
space Markov chains, see \citet{lawler:sokal:1988, sinclair:jerrum:1989, diaconis:stroock:1991, saloff:laurent:2004, yuen:kong:2000, diaconis:saloff:1996, franccois:2000, diaconis:saloff:1993} to name just a few. In many statistical applications, the Markov chains move on large continuous state spaces, and 
techniques based on drift and minorization (see \cite{rosenthal:1995, jones:hobert:2001}) have been used to get bounds on $\lambda_1$ for some of these Markov chains. However, these bounds can in many cases be way off. Techniques to estimate the spectral gap based on simulation have been developed in \citet{garren:smith:2000, raftery:lewis:1992}, and more recently in \citet{qin:hobert:khare:2017} for trace class data augmentation Markov chains. 

While bounding or estimating the spectral gap is clearly useful, a much more detailed and accurate picture of the convergence can be obtained by analyzing the entire spectrum of the Markov operator, as explained below. 
\begin{enumerate}[label=(\roman*)]
\item If we have two competing Markov chains to sample from the same stationary density, having knowledge of their respective spectra allows for a detailed and careful comparison (see Section~\ref{sec_illus_finite} for an illustration).

\item For positive integer $m$, let $k^m(\cdot, \cdot)$ denote the $m$-step transition density of the associated Markov chain $\mcx$. The chi-square distance to stationarity after $m$  steps, starting at state $x$ is defined as:
\[
\chi^2_x(m) := \int_{\X} \frac{|k^m(x, x') - \pi(x')|^2}{\pi(x')}\:d\nu(x').
\] 
Since $K$ is assumed to be trace class (and hence Hilbert Schmidt), it follows that \citep{diaconis:khare:saloff:2008} $\chi^2_x(m) = \sum_{i=1}^\infty \lambda_i^{2m} \phi_i(x)^2$. The average or expected chi-square distance to stationarity after $m$  steps is therefore $\pi \chi^2(m) := \int_x \chi^2_m \: d\Pi = \sum_{i=1}^\infty \lambda_i^{2m}$ (since $\pi \phi_i^2 = 1$ for all $i$). Thus, having knowledge of the entire spectrum enables one to compute these average or expected chi-square distances.

\item From (\ref{K_lambda1_reln}), it is apparent that if $\eta_i$'s are known, then the knowledge of the entire spectrum enables us to compute the exact $L^2$ distance to stationarity. While finding the exact 
$\eta_ i$'s in general will be difficult, specific  examples can be found  in \citet{diaconis:khare:saloff:2008, hobert:roy:robert:2011,khare:zhou:2009}.    
\end{enumerate}

The literature for general methods to evaluate/estimate the entire spectrum (all the eigenvalues) of a Markov operator is, however, rather sparse. \citet{adamczak:bednorz:2015} provide an elegant and simple way of consistently estimating the spectrum of a general Hilbert-Schmidt integral operator with symmetric kernel using approximations based on random matrices simulated from a Markov chain. The approach in \citet{adamczak:bednorz:2015} can in particular be adapted for estimating the spectra of Markov operators. In fact, as we show in Section~\ref{sec_rma_exact}, in this context, the regularity condition needed for their method is exactly equivalent to the underlying Markov operator being trace class. 

However, in order for the approach (and the technical consistency results) in \cite{adamczak:bednorz:2015} to be applicable, the Markov transition density $k(\cdot, \cdot)$ and the stationary density $\pi(\cdot)$ are required to be available in closed form. These assumptions are not satisfied by an overwhelming majority of Markov chains arising in modern statistical applications. This is particularly true for the so-called Data Augmentation (DA) algorithm, which is a widely used technique for constructing Markov chains by introducing unobserved/latent random variables. In this context, often, (a) the transition density can only be expressed as an intractable high-dimensional integral, and/or (b) the stationary density is only available up to an unknown normalizing constant\footnote{one would typically need to evaluate a complicated high-dimensional integral to obtain this constant, which is often infeasible}, see \citet{Albert:Chib:1993, hobert:roy:robert:2011, roy:2012, polson:scott:windle:2013, choi:hobert:2013,  hobert:jung:khare:qin:2015, qin:hobert:2016,  pal:khare:hobert:2017} to name just a few. 

The main objective of this paper is to develop a random matrix approximation method to consistently estimate the spectrum of DA Markov operators for situations where (a) and/or (b) holds. In particular, we show that if the transition densities in the method of \cite{adamczak:bednorz:2015} are replaced by appropriate Monte Carlo based approximations, the spectrum of the resulting random matrix consistently estimates the spectrum of the underlying Markov operator (Theorem~\ref{thm_mcrma_consist}). More generally, we show that the method and the result can be easily adapted to situations where the stationary density is known only up to a normalizing constant (Theorem~\ref{thm_mcrma_ext_consist}). 

No regularity conditions are needed for our results if the state space $\X$, or the latent variable space $\Z$ is finite. We would like to mention that in many statistical applications with finite state spaces,  the state space can be extremely large, with millions/billions of states. The intractability of the transition density and the size of the state space often make numerical techniques for eigenvalue estimation completely infeasible. However, as we show in the context of the example in Section~\ref{sec_illus_finite}, our method can provide reasonable answers in less than 5 minutes using modern parallel processing machinery. If both the state space $\X$ and the latent variable space $\Z$ are infinite, two regularity conditions need to be verified in order to use our results. One of them requires the Markov operator to be trace class, and the other one is a variance condition; each require checking that an appropriate integral is finite. An illustration is provided in Section~\ref{sec_illus_infinite} for the Gibbs sampler of \citet*{polson:scott:windle:2013}. 

The remainder of the article is organized as follows. In Section~\ref{sec_rma_exact} we first review the approach developed by \citet{adamczak:bednorz:2015}, which is applicable  when the Markov transition densities have closed form expressions. Then we show that in the context of Markov operators, their regularity condition for consistency is equivalent to assuming that the operator is trace class.  In Section~\ref{sec_mcrma} we introduce our approach for estimating the spectrum of DA Markov operators with intractable Markov transition densities and establish weak and strong consistency of the resulting estimates under a mild regularity assumption. In Section~\ref{sec_toy_normal} we consider a toy normal-normal DA Markov chain \citep{diaconis:khare:saloff:2008}, where all the eigenvalues are known, and examine the accuracy of the eigenvalue estimates provided by our algorithm. We then compare the convergence rates of the estimated spectrum  to those of an estimated functional of interest (mean second Hermite polynomial), and also make a comparative analysis of the performances of our method to the method of \citet{qin:hobert:khare:2017} in estimating the second largest eigenvalue. We then move on to real applications. In Section~\ref{sec_illus_infinite} we illustrate our method on the Polya Gamma Markov chain of \citet{polson:scott:windle:2013}. We verify that this Markov chain satisfies the regularity condition needed for consistency and work out the first few eigenvalue estimates for the \texttt{nodal} dataset provided in the \texttt{boot} (\citet{canty:ripley:2017}) R package. In Section~\ref{sec_illus_finite} we consider a Bayesian analysis of the two component normal mixture model and examine two competing DA Markov chains proposed in \citet{hobert:roy:robert:2011} to sample from the resulting posterior distribution. We illustrate the usefulness and applicability of our method by estimating and comparing the first few eigenvalues of the two DA chains for simulated data. We end with a discussion in Section~\ref{sec_discuss}. Proofs of all theorems and lemmas introduced in this paper are provided in the Appendix.

\section{Random Matrix Approximation method of \citet{adamczak:bednorz:2015}} 
\label{sec_rma_exact}
The objective of this section is to describe the method of operator spectra estimation via random matrices, first proposed in \citet{koltchinskii:gine:2000} and then in \citet{adamczak:bednorz:2015} in the context of Markov operators. We begin this section with a brief description of the general method, and then discuss how one can  potentially use it to estimate spectra of trace class Markov operators. This discussion is followed by a short lemma that establishes an equivalence between the regularity condition used in \citet{adamczak:bednorz:2015},  and  the condition of the Markov chain being trace class.

Let $H: L^2(\pi) \rightarrow L^2(\pi)$ be a Hilbert-Schmidt integral operator (an integral operator whose eigenvalues are square summable) defined through a symmetric (in argmuents) kernel $h(\cdot, \cdot)$ as:
\begin{equation} \label{general_H_def}
(Hg)(x) = \int_{\X} h(x, x') g(x')\:d\Pi(x'),
\end{equation}
and interest lies in obtaining $\sp(H)$, the spectrum of $H$. In general, there does not exist any method of evaluating $\sp(H)$ for arbitrary $H$. However, \citet{koltchinskii:gine:2000} suggest a novel, elegant and simple approach of \emph{estimating} $\sp(H)$ via random matrices.  Let $X_0, \cdots, X_{m-1}$ denote an IID sample of size $m$  $(\geq 1)$ from the distribution $\Pi$. Then the authors show that a (strong) consistent estimator of $\sp(H)$ is given by the set of eigenvalues of the random matrix
\[
H_{m} = \frac{1}{m} \left((1-\delta_{jj'}) h(X_j, X_{j'})\right)_{0\leq j,j' \leq m-1}
\]
for large $m$, where $\delta_{jj'}$ denotes the Dirac delta function. The strength of the result lies in the fact that it works for \emph{any} Hilbert Schmidt operator, irrespective of the dimension and structure of $\X$, as long as an IID sample from $\Pi$ can be drawn. Unfortunately, IID simulations are not always feasible, especially in high dimensional settings (otherwise there would be no need for MCMC!), thus limiting the applicability of the method. \citet{adamczak:bednorz:2015} generalize \citet{koltchinskii:gine:2000}'s result by allowing $X_0, \cdots, X_{m-1}$ to be an MCMC sample (i.e., realizations of  a Markov chain with equilibrium distribution $\Pi$), and prove consistency of the resulting estimates. 

Let $K$ denote a positive self-adjoint trace class Markov operator as defined in (\ref{K_defn}). Of course $K$ is Hilbert Schmidt (eigenvalues are summable \emph{implies} they are square summable), and $h(x, x') = k(x, x')/\pi(x')$ is symmetric in its argument due to reversibility of the associated Markov chain. Thus  by expressing $K$ in the form (\ref{general_H_def})  $\sp(K)$ can potentially be  estimated by \citet{adamczak:bednorz:2015}'s method, which only requires an MCMC sample from $\Pi$. The resulting method, which uses the same random data generated during the original run of the Markov chain in the recipe proposed in \citet{adamczak:bednorz:2015} to estimate the spectrum, will be called the \emph{Random Matrix Approximation (RMA)} method henceforth, and is described below.

\begin{algorithm}[h]
	
	\begin{enumerate}[label = Step \arabic*:, start = 0, leftmargin=1.5cm]
		\item Given a starting point $X_0$, draw realizations $X_1, X_2, \dots, X_{m-1}$ from the Markov chain $\mcx$ associated with $K$.
		
		\item Given $X_0, \dots, X_{m-1}$, for each pair $(j, j')$ with $0 \leq j , j' \leq m-1$, compute the Markov transition densitys $k(X_j, X_{j'})$ and the kernels $h(X_j, X_{j'}) = k(X_j, X_{j'})/\pi(X_{j'})$, and construct the matrix
		\begin{equation} \label{Hmn}
		H_{m} = \frac{1}{m} \left((1-\delta_{jj'})\: h(X_j, X_{j'})\right)_{0 \leq j,  j' \leq m-1} 
		\end{equation}
		where $\delta_{jj'} = \one(j = j')$ is the Dirac delta function.
		
		\item Calculate the eigenvalues $\hat \lambda_0 \geq  \hat \lambda_1 \geq  \cdots \hat \lambda_{m-1}$ of $H_m$ and estimate $\sp(K)$ by $\hat \sp(K) = \sp(H_m):= \{\hat \lambda_0, \hat \lambda_1, \dots, \hat \lambda_{m-1} \}$.
	\end{enumerate}	
	\caption{Random Matrix Approximation (RMA) method of estimating $\sp(K)$ for a Markov operator $K$ with  Markov transition density $k(\cdot, \cdot)$ and stationary density $\pi(\cdot)$ available in closed form.}
	\label{algo_erma}
\end{algorithm}

Sacrificing independence and identicalness of the random sample in \citet{adamczak:bednorz:2015}'s RMA method, however, comes at a price (as compared to \cite{koltchinskii:gine:2000}'s method, which uses IID samples). In particular, to ensure strong consistency in \citet{adamczak:bednorz:2015}'s method, an additional regularity condition is required to be satisfied by the Markov operator $K$, namely, a $L^2(\pi)$ function $F : \X \rightarrow \R$ needs to exist for which $|h(x,x')|  \leq F(x)F(x')$ for all $x, x' \in X$.  Interestingly, as we show in the following lemma (Lemma~\ref{lemma_eqv}), this condition for $K$ is equivalent to that of $K$ being trace-class in the current setting. The proof of Lemma~\ref{lemma_eqv} is provided in Section~\ref{supp_sec_rma_exact} of the Appendix. At the core of the proof, the following two alternative characterizations of trace class and Hilbert Schmidt operators (see, e.g., \citet{jorgens:1982}) are used. The operator $K$ as defined in (\ref{K_defn}) is trace class if and only if
\begin{equation} \label{tr_cond}
\int_{\X} k(x, x) \: d\nu(x) = \int_{\X} \frac{k(x, x)}{\pi(x)} \: d\Pi(x) < \infty 
\end{equation}
whereas it is Hilbert Schmidt if and only if  
\begin{equation} \label{hs_cond}
\int_{\X} \int_{\X} k(x, x')^2\: \frac{\pi(x)}{\pi(x')} \: d\nu(x) \:d\nu(x') =  \int_{\X} \int_{\X} \: \left[\frac{k(x, x')}{\pi(x')}\right]^2  \: d\Pi(x) \:d\Pi(x') < \infty. 
\end{equation}

\begin{lemma} \label{lemma_eqv}	
	Consider a reversible Markov operator $K$ as defined in (\ref{K_defn}). Define $h(x, x') = k(x, x')/\pi(x')$ for  $x, x' \in \X$. Then the following two conditions are equivalent:
	\begin{enumerate}[label = (\roman*)]
		\item there exists $F : \X \rightarrow \R$ such that $\pi F^2 < \infty$ and $|h(x,x')|  \leq F(x)F(x')$ for all $x, x' \in X$.
		
		\item $K$ is trace class.
	\end{enumerate}
	
\end{lemma}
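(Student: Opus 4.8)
The plan is to prove the two implications separately, using the characterization \eqref{tr_cond} of the trace class property throughout.

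For the direction (i) $\Rightarrow$ (ii), suppose such an $F \in L^2(\pi)$ exists with $|h(x,x')| \le F(x) F(x')$. Setting $x' = x$ gives $h(x,x) = k(x,x)/\pi(x) \le F(x)^2$ (note $h(x,x) \ge 0$ since $K$ is a positive operator, so the absolute value is harmless). Integrating against $\Pi$,
\[
\int_{\X} \frac{k(x,x)}{\pi(x)} \, d\Pi(x) \le \int_{\X} F(x)^2 \, d\Pi(x) = \pi F^2 < \infty,
\]
which is exactly condition \eqref{tr_cond}. Hence $K$ is trace class. This direction is essentially immediate.

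For the direction (ii) $\Rightarrow$ (i), the natural candidate is $F(x) := \sqrt{h(x,x)} = \sqrt{k(x,x)/\pi(x)}$. Then $\pi F^2 = \int_\X h(x,x)\, d\Pi(x) < \infty$ by \eqref{tr_cond}, so $F \in L^2(\pi)$. It remains to establish the pointwise bound $|h(x,x')| \le F(x) F(x') = \sqrt{h(x,x)\, h(x',x')}$. The idea here is that $h$ is (up to the reversibility reweighting) the kernel of a positive semidefinite operator, and positive semidefinite kernels obey a Cauchy--Schwarz-type inequality between diagonal and off-diagonal entries. More precisely, I would argue that for any $x, x'$ the $2\times 2$ matrix with entries $h$ evaluated at the pairs drawn from $\{x, x'\}$ — suitably symmetrized — is positive semidefinite, which forces $h(x,x')^2 \le h(x,x) h(x',x')$. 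To make this rigorous one should work with the symmetric kernel $\tilde h(x,x') := k(x,x')/\sqrt{\pi(x)\pi(x')}$, which satisfies $\tilde h(x,x) = h(x,x)$ on the diagonal and $|\tilde h(x,x')| = |h(x,x')| \sqrt{\pi(x')/\pi(x)}$; since $k$ is a genuine (sub-)probability transition density and $K$ is a positive operator, $\tilde h$ is a symmetric positive semidefinite kernel, and the desired inequality $\tilde h(x,x')^2 \le \tilde h(x,x)\tilde h(x',x')$ is the classical diagonal-dominance property of such kernels; rewriting in terms of $h$ recovers $|h(x,x')| \le \sqrt{h(x,x) h(x',x')}$.

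The main obstacle is justifying that $\tilde h$ is a positive semidefinite kernel in the pointwise sense needed for the $2\times 2$ argument. Positivity of the operator $K$ only gives $\langle Kf, f\rangle \ge 0$ for $f \in L^2(\pi)$, i.e. a statement about integrals, not about finite Gram matrices of the kernel evaluated at fixed points. Bridging this gap requires either (a) an approximate-identity / mollification argument — take $f$ concentrated near $x$ and near $x'$ and pass to the limit to extract the pointwise inequality, which needs mild regularity/continuity of $k$ — or (b) invoking a structural fact specific to DA chains: the transition density of a DA chain factors as $k(x,x') = \int_\Z f_{X\mid Z}(x'\mid z) f_{Z\mid X}(z\mid x)\, d\zeta(z)$, from which one can write $h(x,x') = \int_\Z a(x,z) b(x',z)\, d\mu(z)$ for suitable nonnegative functions, and then $|h(x,x')| \le \sqrt{\int a(x,z)^2 d\mu}\sqrt{\int b(x',z)^2 d\mu}$ by Cauchy--Schwarz, with the two factors coinciding (up to the $\pi$ reweighting) with $\sqrt{h(x,x)}$ and $\sqrt{h(x',x')}$ by reversibility. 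I expect the cleanest route is (b), exploiting the DA structure directly, since it sidesteps any continuity assumptions and makes the Cauchy--Schwarz step completely transparent; the remaining care is only in matching the $\pi$-weights so that the bounding function $F$ is genuinely symmetric and in $L^2(\pi)$.
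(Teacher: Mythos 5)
Your direction (i) $\Rightarrow$ (ii) is exactly the paper's argument: set $x'=x$, deduce $h(x,x)\le F(x)^2$, and integrate using the trace class characterization (\ref{tr_cond}). No issues there.

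For (ii) $\Rightarrow$ (i) you have identified the right bounding function --- the paper's $F=\sqrt{\sum_m\lambda_m\varphi_m^2}$ agrees on the diagonal with your $\sqrt{h(x,x)}$ --- but the step you yourself flag as the ``main obstacle'' is a genuine gap, and the paper closes it by a route you did not consider: the spectral theorem. The paper writes $h(x,x')=\sum_m\lambda_m\varphi_m(x)\varphi_m(x')$ pointwise for all $x,x'$, with $(\varphi_m)$ an orthonormal eigenbasis and $\lambda_m\ge 0$, and then a single application of Cauchy--Schwarz to this series gives $h(x,x')\le \sqrt{\sum_m\lambda_m\varphi_m(x)^2}\,\sqrt{\sum_m\lambda_m\varphi_m(x')^2}=F(x)F(x')$, while $\pi F^2=\sum_m\lambda_m<\infty$ is precisely the trace class hypothesis. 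This eigenfunction expansion is exactly the pointwise positive-semidefiniteness you were trying to extract from operator positivity; without it, your $2\times 2$ Gram-matrix argument does not go through, and your route (a) would require continuity of $k$ that is nowhere assumed. Your route (b) is correct and completable: using $f_{Z\mid X}(z\mid x)=f_{X\mid Z}(x\mid z)\pi^*(z)/\pi(x)$ one gets the symmetric representation $h(x,x')=\int_{\Z}g(x,z)g(x',z)\pi^*(z)\,d\zeta(z)$ with $g(x,z)=f_{X\mid Z}(x\mid z)/\pi(x)$, and Cauchy--Schwarz yields $h(x,x')\le\sqrt{h(x,x)h(x',x')}$, with trace class needed only to make $\pi F^2$ finite. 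However, this proves the implication only for DA operators, whereas Lemma~\ref{lemma_eqv} is stated --- and used, to match the regularity condition of \citet{adamczak:bednorz:2015} in Theorem~\ref{thm_rma_consis} --- for an arbitrary reversible positive Markov operator of the form (\ref{K_defn}). In that generality the spectral expansion is the missing ingredient, so as written your proof of (ii) $\Rightarrow$ (i) is incomplete.
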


As a consequence of Lemma~\ref{lemma_eqv}, we are now in a position to adapt the consistency result from \citet{adamczak:bednorz:2015} for the RMA method described above in Algorithm~\ref{algo_erma}. Before stating the result, we introduce required notations from \citet{koltchinskii:gine:2000} and \citet{adamczak:bednorz:2015}. Recall that for any operator $A$ (finite or infinite) we use the notation $\sp(A)$ to denote its spectrum. Thus, for a finite matrix $A$, $\sp(A)$ will denote the set of its eigenvalues.  Since the Markov operators we consider are trace class (and therefore, Hilbert Schmidt), their spectra can be identified with the sequences $(\lambda_m)_{m = 0}^\infty \in \ell_2$ of eigenvalues, where $\ell_2$ is the Hilbert space of all square summable real sequences. Because our goal is to approximate the (possibly infinite) spectrum of an integral operator by the finite spectrum of a matrix, we will identify the latter with an element of $\ell_2$, by appending an infinite sequence of zeros to it. 
As in \citet{koltchinskii:gine:2000}, the metric we use for comparing spectra is the $\delta_2$ metric, which is defined for $x, y \in \ell_2$ as 
\begin{equation} \label{delta2_defn}	
\delta_2(x,y) = \inf_{\zeta \in \mathcal{P}} \left[\sum_{m=0}^\infty (x_m - y_{\zeta(m)})^2\right]^{1/2}
\end{equation}
where $\mathcal{P}$ is the set of all permutations of natural numbers.  Note that for any two points on $\ell_2$, the above metric can be expressed as an $\ell_2$ distance of the sorted versions of the two points, as explained below. Following \citet{koltchinskii:gine:2000}, for any $x = (x_m)_{m=0}^\infty \in \ell_2$, we set $x = x_{+} + x_{-}$, where $x_{+} = (\max\{x_m, 0\})_{m=0}^\infty$ and $x_{-} = x - x_{+}$. We denote by $x_{+}^\downarrow$ ($x_{-}^\uparrow$) the point in $\ell_2$ with the same coordinates of $x_{+}$ ($x_{-}$), but arranged in non-increasing (non-decreasing) order, and set $x^{\uparrow \downarrow} = x_{-}^{\uparrow} \oplus x_{+}^{\downarrow}$, where  $u \oplus v = (u_0, \cdots, u_m, \cdots, v_0, \cdots, v_m, \cdots) \in \ell_2$ for $u = (u_m)_{m=0}^\infty$ and  $v = (v_m)_{m=0}^\infty \in \ell_2$. Then
\begin{equation} \label{delta2_l2}
\delta_2(x,y) = \left\| x^{\uparrow\downarrow} - y^{\uparrow\downarrow} \right\|_{\ell_2}.
\end{equation}

\noindent From the Hoffman–Wielandt inequality (\citet{hoffman:wielandt:1953}, \citet[Theorem 2.2]{koltchinskii:gine:2000}), it follows that for normal operators $A$ and $B$, 
\begin{equation} \label{hoff_wiel_ineq}
\delta_2(\sp(A),\sp(B)) \leq \|A-B\|_\hs.
\end{equation}
where $\|A\|_\hs$ denotes the Hilbert Schmidt norm of an operator $A \in L^2(\pi)$ defined by 
\[
\|A\|_\hs =  \left(\sum_{m=0}^\infty \|A\varphi_m\|^2\right)^{1/2},
\] 
for any orthonormal basis $(\varphi_m)_{m=0}^\infty$ of $L^2(\pi)$. Note that if $A$ is finite (i.e., a matrix), say $A = (a_{ij})$, then  $\|A\|_\hs = \|A\|_\fb$ where $\|A\|_\fb$ denotes the Frobenious norm of $A$ defined as
\[
\|A\|_\fb = \left(\sum_i \sum_j a_{ij}^2 \right)^{1/2}.
\]

The following theorem, a rephrasing of Theorem 2.1 from \citet{adamczak:bednorz:2015} adapted to the current setting using Lemma~\ref{lemma_eqv},  establishes (strong) consistency of the spectrum estimator obtained by RMA method for a positive self-adjoint trace class Markov operator.

\begin{theorem} \label{thm_rma_consis}
	Let $\mcx$ = $(X_n)_{n \geq 0}$ be a  reversible Markov chain with Markov transition density $k(\cdot, \cdot)$, invariant measure $\Pi$, and suppose the associated Markov operator $K$ as given in (\ref{K_defn}) is positive and trace class. Let  $\Phi_m = \{X_0, \dots, X_{m-1}\}$ denote the first $m$ realizations of the Markov chain, and given $\Phi_m$, construct the matrix $H_m$ as given in (\ref{Hmn}).
	Then, for every initial measure $\nu_0$ for the chain $\mcx$, with probability one, as $m \rightarrow \infty$,
	\[
	\delta_2 (\sp(H_m),\sp(K)) \rightarrow 0.
	\]
\end{theorem}

\section{A Novel Monte Carlo Based Random Matrix Approximation Method for DA Markov Chains} \label{sec_mcrma}

As we see in Section~\ref{sec_rma_exact}, the RMA method of \citet{adamczak:bednorz:2015} requires evaluation of the ratio $k(X_j, X_{j'})/\pi(X_{j'})$ for every pair $(j, j')$. Unfortunately, as mentioned in the introduction, one or both of 
$k(\cdot, \cdot)$ and $\pi(\cdot)$ are often intractable and do not have closed form expressions in many statistical applications. This is particularly true in the context of the Data Augmentation (DA)  algorithm, where along with the variable of interest $X$, one introduces a latent variable $Z$ such that generations from the conditional distributions of $X \mid Z$ ($X$ given $Z$) and $Z \mid X$ are possible. Then, given a starting point $X_0$, at each iteration $m \geq 1$, one first simulates $z$ from the distribution of $Z \mid X=X_{m-1}$ and then generates $X_{m}$, from the distribution of $X \mid Z = z$. The $X_m$'s generated in this method are retained and used as the required MCMC sample. Hence, the Markov transition density can be written as 
\begin{equation} \label{mtd_da_general}
k(x, x') = \int_{\Z} f_{X \mid Z}(x' \mid z) f_{Z \mid X}(z \mid x)\:d\zeta(z).
\end{equation} 
Here $f_{Z \mid X}$ and  $f_{X \mid Z}$ are conditional densities with respect to the measures $\xi$ and $\nu$ respectively, and are simple and easy to sample from in a typical DA algorithm. A DA Markov chain is necessarily reversible, which means the associated Markov operator is self-adjoint. The operator is also positive with a positive spectrum, as shown in \citet{liu:wong:kong:1994}.

However, the integral in (\ref{mtd_da_general}) providing  the Markov transition density of a DA Markov chain often does not have a closed form expression, and cannot be efficiently approximated via deterministic numerical integration (usually due to high dimensionality). Intractability of the integral precludes applicability of \citet{adamczak:bednorz:2015}'s RMA method of estimating spectrum  (Algorithm~\ref{algo_erma}) in such cases. In this section we propose a Monte Carlo based random matrix approximation (MCRMA)  algorithm to estimate the spectrum of DA algorithms with intractable transition densities (Algorithm~\ref{algo_mcrma}). To contrast with MCRMA, we shall call the RMA method of \citet{adamczak:bednorz:2015}  (Algorithm~\ref{algo_erma})  the exact RMA. Consistency of MCRMA spectrum estimates is established in Theorem~\ref{thm_mcrma_consist}. 

Often, in addition to the intractability of the transition density, the stationary density is also available only up to an unknown normalizing constant (which is again hard to estimate in many modern applications as the stationary density is supported on a high-dimensional space). We adapt our algorithm to this situation (Algorithm~\ref{algo_mcrma_ext}), and establish consistency of the resulting estimates as well (Theorem~\ref{thm_mcrma_ext_consist}).

\subsection{Monte Carlo Random Matrix Approximation (MCRMA) Method} \label{sec_mcrma_w_pi}

In this section, we will present a method to estimate the spectrum of a DA Markov operator where the transition density in (\ref{mtd_da_general}) is intractable, but the staionary density $\pi$ is available in closed form. Given  $m$ realizations $\Phi_m = \{X_0, X_1, \dots, X_{m-1} \}$ of the positive trace class reversible  Markov chain $\mcx$  with transition density in the form (\ref{mtd_da_general}), the key idea is to approximate $k(X_j, X_{j'})$ for each pair $(j,j')$ using  classical Monte Carlo technique, and then construct an analogue of the RMA estimator that uses the  approximate kernels instead of the original. The details of the method are provided in Algorithm~\ref{algo_mcrma} below. 

\begin{algorithm}[h]
	\DontPrintSemicolon 
	\begin{enumerate}[label = Step \arabic*:, start = 0]
		\item Given a starting point $X_0$, draw realizations $X_1, X_2, \dots, X_{m-1}$ from the associated Markov chain $\mcx$. Call $\Phi_m = \{X_0, \dots, X_{m-1}\}$.
		
		\item Given $\Phi_m$, for each $j=0,1,\cdots,m-2$, generate generate $N = N(m)$ IID observations $Z_1^{(j)}, \dots, Z_N^{(j)}$ from the density $f_{Z \mid X}(\cdot\mid X_j)$.
		
		\item For each pair $(j, j')$ with $0 \leq j < j' \leq m-1$, construct the Monte Carlo estimate 
		\[
		\hat{k}_N(X_j, X_{j'}) = \frac{1}{N} \sum_{l=1}^N f_{X \mid Z} \left(X_{j'}|Z_l^{(j)}\right),
		\]
		define the estimated kernel 
		\[
		\hat h_N(X_j, X_{j'}) = 
		\begin{cases}
		\frac{\hat k_N \left( X_{j}, X_{j'} \right)}{ \pi(X_{j'})} & \text{ if } j < j' \\
		0 & \text{ if } j = j'\\
		\hat h_N(X_{j'}, X_{j}) & \text{ if } j >j' 	
		\end{cases},
		\] 
		and construct the matrix 
		\begin{equation} \label{Hmnhat}
		\Hmn = \frac{1}{m} \left((1-\delta_{jj'})\: \hat h_N(X_j, X_{j'})\right)_{0 \leq j,  j' \leq m-1}. 
		\end{equation}
		where $\delta_{jj'} = \one(j = j')$ is the Dirac delta function. 
		Observe that $\Hmn$ is  symmetric by construction, with zero diagonal entries. 
		
		\item Calculate the eigenvalues $\hat \lambda_0 \geq  \hat \lambda_1 \geq  \cdots \geq \hat \lambda_{m-1}$ of $\Hmn$ and estimate $\sp(K)$ by $\hat \sp(K) = \sp(\Hmn):= \{\hat \lambda_0, \hat \lambda_1, \dots,  \hat \lambda_{m-1} \}$.
	\end{enumerate}
	
	\caption{Monte Carlo Random Matrix Approximation (MCRMA) method of estimating $\sp(K)$ for a positive reversible Markov operator $K$ with associated Markov transition density $k(x, x') = \int_{\Z} f_{X \mid Z}(x'\mid z) f_{Z \mid X}(z \mid x)\:d\zeta(z)$}
	\label{algo_mcrma}
\end{algorithm}

It is to be noted that for Step 1 in the MCRMA algorithm to be feasible, the density $f_{Z \mid X}$ should be easy to sample from. This is typically true for DA algorithms that are used in practice. In fact, the major motivation for using a DA algorithm is that the conditional densities $f_{X \mid Z}$ and $f_{Z \mid X}$ are easy to sample from, whereas it is hard to directly generate samples from $\pi$. For Step 2 to be feasible, we need $f_{X \mid Z}$ to be available in closed form. Again, this is true in most statistical applications, where $f_{X \mid Z}$ is typically a standard density such as multivariate normal, gamma etc. Another crucial thing to note, from a computational point of view, is that the rows of the matrix $\Hmn$ can be constructed in an embarrassingly parallel fashion (since no relationship is assumed among the elements of $\Hmn$), thereby reducing the running time of the algorithm significantly.

Note that the MCRMA algorithm (Algorithm~\ref{algo_mcrma}) provides a coarser approximation to the spectrum of $K$ as compared to the the exact RMA algorithm  (Algorithm~\ref{algo_erma}). This is because we use the additional Monte Carlo based approximation $\hat k_N$ (for $k$) in the constructed random matrices. An obvious and important question is: does an analog of the consistency result for the RMA algorithm (Theorem~\ref{thm_rma_consis}) hold in this more complex setting of the MCRMA algorithm? We state a consistency result below, and the proof is provided in Section~\ref{supp_sec_mcrma} of the Appendix.

\begin{theorem}\label{thm_mcrma_consist}
	Let $\mcx$ = $(X_n)_{m \geq 0}$ be a positive, reversible Markov chain with transition density $k(\cdot, \cdot)$ in the form (\ref{mtd_da_general}), invariant measure $\Pi$ and associated Markov operator $K$ as given in (\ref{K_defn}). Let  $\Phi_m = \{X_0, \dots, X_{m-1}\}$ denote the first $m$ realizations of the Markov chain, and given $\Phi_m$, construct the matrix $\Hmn$ as given in (\ref{Hmnhat}). Then the following  hold:
	\begin{enumerate}[label=(\Roman*)]
		\item \label{case_finite} If $\X$ is finite, then 	(strong consistency) for every initial measure $\nu_0$ of the chain $\mcx$, as $m \rightarrow \infty$ and $N \rightarrow \infty$, 
		\[
		\delta_2\left(\sp\left(\Hmn\right), \sp(K)\right) \rightarrow 0 \text{ almost surely.}
		\]
		
		\item \label{case_infinite} If $\X$ is infinite (countable or uncountable), and
		\begin{enumerate}[label=(\Alph*)]
			\item \label{trace_condn} $K$ is trace class, and
			
			\item \label{var_condn}  (variance condition)
			\begin{align*}
			\sup_{m \geq 1} \: \max_{0 \leq j < j' \leq m-1} \int_{\X} \int_{\X} \int_{\Z}  & \left( \frac{ f_{X \mid Z}\left( x_{j'}\mid z \right) }{\pi(x_{j'})} \right)^2  f_{Z \mid X}\left(z \mid x_{j} \right) \\ & \qquad q_{jj'}(x_j, x_{j'}) \:d\zeta(z) \:d\nu(x_j)\: d\nu(x_{j'}) < \infty.
			\end{align*}
			where $q_{j_1 j_2 \cdots j_k}$ denotes the joint density of $X_{j_1}, \dots, X_{j_k}$, $0 \leq j_1 < \cdots < j_k \leq m-1$, $1 \leq k \leq m$,
		\end{enumerate}
		then 
		\begin{enumerate}[label = (\roman*)]
			\item \label{erma_consist_cond_i} (weak consistency) if $\frac{1}{N(m)} \rightarrow 0$ as $m \rightarrow \infty$, then $\delta_2\left(\sp\left(\Hmn\right), \sp(K)\right) \xrightarrow{P} 0$,
			\item \label{erma_consist_cond_ii} (strong consistency) if $\sum_{m=0}^\infty \frac{1}{N(m)} < \infty$, then $\delta_2\left(\sp\left(\Hmn\right), \sp(K)\right) \rightarrow 0$ almost surely.
		\end{enumerate}
	\end{enumerate}
\end{theorem}

\begin{remark}\label{rem_z_finite}
	Let $K^*$ denote the Markov operator associated with the $Z$ chain (the Markov chain of the generated latent data), defined for $f \in L^2(\pi^*)$ as
	\begin{equation} \label{K*_defn}
	(K^*f)(z) = \int_{\Z} \: k^*(z, z')  \: f(z') \:d\zeta(z') = \int_{\X} \: \frac{k^*(z, z')}{\pi^*(z')} \: f(z') \:d\Pi^*(z')
	\end{equation}
	where $k^*(\cdot, \cdot)$ denotes the Markov transition density of the $Z$ chain, $\pi^*$ denotes the stationary density for $Z$ (associated with $k^*$) and $\Pi^*$ is the probability measure associated with $\pi^*$. Then  \citet{khare:hobert:2011} show that $\sp(K) = \sp(K^*)$, which implies that instead of estimating $\sp(K)$, one can equivalently estimate $\sp(K^*)$. Note that,  
	\[
	k^*(z, z') = \int_{\X} f_{Z \mid X}(z'|x) f_{X \mid Z}(x\mid z) \:d\nu(x)
	\]  
	with $f_{X \mid Z}$ and $f_{Z \mid X}$ being the same conditional densities as before. Therefore, an analogous MCRMA algorithm for estimating $\sp(K^*)$ can be similarly formulated. Here, given the realizations $Z_0$,  $Z_1$,  $\cdots, Z_{m-1}$, one first finds the Monte Carlo approximates of $k^*(z, z')$ (via IID samples generated from $f_{X \mid Z}(\cdot\mid z)$) at every paired realization $(Z_j, Z_{j'})$, then defines a random matrix with the ratios $k^*(z,z')/\pi^*(z')$ (times an adjustment factor $1/m$), similar to the MCRMA with $X$ observations, and finally evaluates eigenvalues of the resulting random matrix. Consequently, an analogous consistency theorem will also hold for the resulting algorithm.
	
	Because the $Z$ chain is automatically generated as a by-product of the DA algorithm, from a practitioner's point of view, using $Z$ instead of $X$ makes little difference in MCRMA.  However, substantial simplifications on the regularity conditions may be achieved by using $Z$. This is particularly true in cases where the latent variable space $\Z$ is finite (however large). In such cases, no regularity condition is required to be satisfied (case~\ref{case_finite} in Theorem~\ref{thm_mcrma_consist}) to achieve strong consistency. See Section~\ref{sec_illus_finite} for examples.
\end{remark}

\begin{remark} \label{rem_var_condn}
The variance condition \ref{var_condn} is more restrictive than the trace class condition \ref{trace_condn} because of the square term $\left\{ {f_{X \mid Z}( x_{j'}\mid z) }/{\pi(x_{j'})} \right\}^2$. These types of second moment conditions are often necessary to guarantee good behavior  of eigenvalue estimators; a somewhat similar second moment condition appears in \citet[equation 14 and Theorem 2]{qin:hobert:khare:2017} to ensure finite variance of their second largest eigenvalue estimator. Proofs of Theorems~\ref{tr_thm_psw} and \ref{var_thm_psw}  in Section~\ref{supp_sec_illus} of the Appendix provide illustrations on how the integrals in conditions \ref{trace_condn} and \ref{var_condn} can be handled.
\end{remark}

\begin{remark} \label{rem_thm_mcrma_consis}
	When $\X$ is finite, strong consistency is guaranteed as long as $m \rightarrow \infty$ and $N \rightarrow \infty$ (no relationship between the rate of growth of $m$ and $N$ is necessary).  When $\X$ is infinite and the conditions \ref{trace_condn} and \ref{var_condn} hold, the conditions \ref{case_infinite}\ref{erma_consist_cond_i} and \ref{erma_consist_cond_ii} on in Theorem~\ref{thm_mcrma_consist} are required to justify weak and strong consistency respectively. These  conditions on $N$ and $m$, are however, not very demanding.  For example,  when $N(m) = O(m)$ or even, $N(m) = O(\log m)$, \ref{case_infinite}\ref{erma_consist_cond_i} is satisfied, and weak convergence holds. On the other hand  when $N(m) = O(m^{1+\delta})$, for some $\delta > 0$, condition~\ref{erma_consist_cond_ii} is satisfied, ensuring strong convergence. In practice, as long as both $N$ and $m$ are sufficiently large, reasonable results can be expected.
\end{remark}

\subsection{MCRMA  with $\pi$  Specified Only up to a Constant}
Note that Step 2 of MCRMA method requires construction of a symmetric matrix whose $(j,j')$th entry has $\pi(X_{j'})$ in the denominator. This is clearly not feasible in cases where $\pi$ is known up to a constant, i.e., $\pi$ is of the form $\pi(\cdot) = \eta(\cdot)/c$, where $c \in (0, \infty)$ is an unknown constant, and the functional form of $\eta(\cdot)$ is completely known.  In this section, we propose a simple strategy that adapts  Algorithm~\ref{algo_mcrma} for such cases. The basic idea, displayed formally in Algorithm~\ref{algo_mcrma_ext}, is to follow the steps of Algorithm~\ref{algo_mcrma} but now with $\eta(\cdot)$ in the denominator of the random matrix instead of $\pi(\cdot)$, and then simply rescale the eigenvalues so that the largest eigenvalue is 1. Clearly, this nullifies any estimation/evaluation of the normalizing constant. Theorem~\ref{thm_mcrma_ext_consist} establishes consistency for the resulting estimator by exploiting the fact that the largest eigenvalue of any Markov operator is 1. A detailed proof is given in the Appendix (Section~\ref{supp_sec_mcrma}).

\begin{algorithm}[h]
	
	\begin{enumerate}[label = Step \arabic*:, start = 0]
		\item Given a starting point $X_0$, draw realizations $X_1, X_2, \dots, X_{m-1}$ from the associated Markov chain $\mcx$. Call $\Phi_m = \{X_0, \dots, X_{m-1}\}$.
		
		\item Given $\Phi_m$, for each $j=0,1,\cdots,m-1$, generate generate $N =N(m)$ IID observations $Z_1^{(j)}, \dots, Z_N^{(j)}$ from the density $f_2(\cdot\mid X_j)$.
		
		\item For each pair $(j, j')$ with $0 \leq j < j' \leq m$, construct the Monte Carlo estimate 
		\[
		\hat{k}_N(X_j, X_{j'}) = \frac{1}{N} \sum_{l=1}^N f_1 \left(X_{j'}|Z_l^{(j)}\right),
		\]
		define the estimated kernel 
		\[
		\hat s_N(X_j, X_{j'}) = 
		\begin{cases}
		\frac{\hat k_N \left( X_{j}, X_{j'} \right)}{ \eta(X_{j'})} & \text{ if } j < j' \\
		0 & \text{ if } j = j'\\
		\hat s_N(X_{j'}, X_{j}) & \text{ if } j >j' 	
		\end{cases},
		\] 
		and construct the matrix  
		\begin{equation} \label{Smnhat}
		\Smn = \frac{1}{m+1} \left((1-\delta_{jj'}) \:\hat s_N(X_j, X_{j'})\right)_{0 \leq j,  j' \leq m-1} 
		\end{equation}
		
		\item Calculate the eigenvalues $\hat \kappa_0 \geq \hat \kappa_1 \geq \cdots \geq \hat \kappa_{m-1}$ of $\Smn$, and estimate $\sp(K)$ by $\sp\left( \Smn \right)/  \spmax\left( \Smn \right) := \{1, \hat \kappa_1/\hat \kappa_0, \dots, \hat \kappa_{m-1} / \hat \kappa_0 \}$, where $\spmax\left( \Smn \right) = \hat \kappa_0$ is the largest eigenvalue of $\Smn$.	
		
	\end{enumerate}
	
	\caption{MCRMA estimation of $\sp(K)$ for a trace class Markov operator $K$ when $\pi(\cdot) \propto \eta(\cdot)$, and the functional form for $\eta(\cdot)$ is known} 	
	\label{algo_mcrma_ext}    
\end{algorithm}

\begin{theorem} \label{thm_mcrma_ext_consist}
	Let $\mcx$ = $(X_n)_{m \geq 0}$ be a positive, reversible Markov chain with transition density $k(\cdot, \cdot)$ in the form (\ref{mtd_da_general}), invariant measure $\Pi$ and associated Markov operator $K$ as given in (\ref{K_defn}). Further, suppose that $\pi(\cdot) = \eta(\cdot)/c$, where $c \in (0, \infty)$ is a possibly unknown normalizing constant, and the functional form for $\eta(\cdot)$ is known. Let  $\Phi_m = \{X_0, \dots, X_{m-1}\}$ denote the first $m$ realizations of the  chain. Given $\Phi_m$, construct the matrix $\Smn$ as given in (\ref{Smnhat}). Then 
	
	\begin{enumerate}[label=(\Roman*)]
		\item \label{case_finite_ext} if $\X$ is finite, then 	(strong consistency) for every initial measure $\nu_0$ for the chain $\mcx$, as $m \rightarrow \infty$ and $N \rightarrow \infty$, 
		\[
		\textstyle \delta_2\left(\frac{\sp\left(\Smn\right)}{\spmax \left(\Smn\right)}, \sp(K)\right) \rightarrow 0 \text{ almost surely.}
		\]
		
		\item \label{case_infinite_ext} if $\X$ is infinite (countable or uncountable), and condition \ref{trace_condn} and \ref{var_condn} in Theorem~\ref{thm_mcrma_consist} hold, then 
		\begin{enumerate}[label = (\roman*)]
			\item \label{mcrma_ext_consist_cond_i} (weak consistency) if $\frac{1}{N(m)} \rightarrow 0$ as $m \rightarrow \infty$, then $\delta_2\left(\frac{\sp\left(\Smn\right)}{\spmax \left(\Smn\right)}, \sp(K)\right) \xrightarrow{P} 0$,
			\item \label{mcrma_ext_consist_cond_ii} (strong consistency) if $\sum_{m=0}^\infty \frac{1}{N(m)} < \infty$, then $\delta_2\left(\frac{\sp\left(\Smn\right)}{\spmax \left(\Smn\right)}, \sp(K)\right) \rightarrow 0$ almost surely.
		\end{enumerate}
	\end{enumerate}

\end{theorem}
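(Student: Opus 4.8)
The plan is to reduce this statement to Theorem~\ref{thm_mcrma_consist} via a simple scaling argument, exploiting the fact that the top eigenvalue of any Markov operator equals $1$. First I would observe that the matrix $\Smn$ built with $\eta(\cdot) = c\,\pi(\cdot)$ in the denominator is exactly $c$ times the matrix one would build using $\pi(\cdot)$ in the denominator; that is, $\Smn = c\,\Hmn'$, where $\Hmn'$ is the MCRMA matrix of Algorithm~\ref{algo_mcrma} (modulo the immaterial difference between the normalizing factors $1/(m+1)$ and $1/m$, which I would handle either by harmonizing the index count or by absorbing the ratio $m/(m+1)\to 1$ into the scaling and noting it does not affect the limit). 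Consequently $\sp(\Smn) = c\,\sp(\Hmn')$, and therefore the \emph{ratios} are scale-invariant: $\sp(\Smn)/\spmax(\Smn) = \sp(\Hmn')/\spmax(\Hmn')$. So it suffices to prove the conclusion with $\Smn$ replaced by $\Hmn'$, i.e. to show $\delta_2\big(\sp(\Hmn')/\spmax(\Hmn'),\,\sp(K)\big)$ converges (in probability, resp.\ a.s.) to $0$ under the stated conditions on $N(m)$.

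Next I would invoke Theorem~\ref{thm_mcrma_consist} directly: under conditions~\ref{trace_condn} and \ref{var_condn} (or unconditionally when $\X$ is finite), $\delta_2(\sp(\Hmn'),\sp(K)) \to 0$ in probability (resp.\ almost surely). Since $\delta_2$ on $\ell_2^+$ is the $\ell_2$ distance between sorted sequences (see (\ref{delta2_l2})), convergence in $\delta_2$ forces coordinatewise convergence of the sorted spectra; in particular $\spmax(\Hmn') = \hat\lambda_0 \to \lambda_0 = 1$, because the top eigenvalue of $K$ is $1$ and it is isolated enough that the sorted-sequence convergence pins down the leading coordinate. This is the one genuine (if small) point to argue carefully: I would note that $|\hat\lambda_0 - 1| \le \delta_2(\sp(\Hmn'),\sp(K)) \to 0$, so $\spmax(\Hmn')\to 1$ in the relevant mode of convergence, and in particular $\spmax(\Hmn')$ is eventually bounded away from $0$.

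Finally I would combine these via the triangle inequality for $\delta_2$ together with a Lipschitz-type estimate for the scaling map $x \mapsto x/\spmax(x)$. Concretely, writing $a_m = \spmax(\Hmn')$,
\[
\delta_2\!\left(\frac{\sp(\Hmn')}{a_m},\,\sp(K)\right)
\le \delta_2\!\left(\frac{\sp(\Hmn')}{a_m},\,\sp(\Hmn')\right) + \delta_2\big(\sp(\Hmn'),\,\sp(K)\big),
\]
and the first term equals $\big|\tfrac{1}{a_m}-1\big|\,\|\sp(\Hmn')^\downarrow\|_{\ell_2} = \big|\tfrac{1}{a_m}-1\big|\,\|\Hmn'\|_\hs$. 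Since $\|\Hmn'\|_\hs = \|\Hmn'\|_\fb$ is bounded in probability (resp.\ a.s.\ along the relevant subsequence) — it converges, being dominated by $\|H_m\|_\hs + \|\Hmn' - H_m\|_\hs$, the first term convergent to $\|K\|_\hs = (\sum \lambda_i^2)^{1/2}<\infty$ by Theorem~\ref{thm_rma_consis} and trace-class-ness, the second $\to 0$ — and since $|1/a_m - 1|\to 0$, the product vanishes. Both terms thus go to $0$ in the appropriate sense, giving the claim. The main obstacle, such as it is, is purely bookkeeping: making sure the mode of convergence (in probability vs.\ almost sure) is propagated consistently through the product $|1/a_m-1|\cdot\|\Hmn'\|_\hs$ — for the a.s.\ case this is immediate; for the in-probability case one uses that a product of an $o_P(1)$ term and an $O_P(1)$ term is $o_P(1)$.
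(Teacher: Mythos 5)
Your proposal is correct and follows essentially the same route as the paper's proof: both reduce the claim to Theorem~\ref{thm_mcrma_consist} via the observation that $\Smn$ is a constant multiple of the MCRMA matrix built with $\pi$, use the convergence of the top eigenvalue to $1$ (equivalently, the paper's $\hat c_m \to c$) to control the rescaling factor, and close with the triangle inequality for $\delta_2$ plus a bound on $\left\|\Hmn\right\|_\hs$ obtained from $\delta_2\left(\sp\left(\Hmn\right),\sp(K)\right)+\left\|K\right\|_{\hs}$, propagating the almost-sure versus in-probability modes exactly as you describe. The only differences are cosmetic (you normalize by $\spmax$ directly rather than introducing $\hat c_m$), so there is nothing further to add.
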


\begin{remark}
The quantity $\hat\kappa_0 = \max \sp (\Smn)$, obtained as a by-product of Algorithm~\ref{algo_mcrma_ext}, is in fact a consistent estimator of the normalizing constant $1/c$ (see the proof of Theorem~\ref{thm_mcrma_ext_consist} in Section~\ref{supp_sec_mcrma} of the Appendix). This estimator is implicitly used during spectrum estimation in Algorithm~\ref{algo_mcrma_ext}, and nullifies the need of any separate estimation of the normalizing constant. It is to be noted that estimation of the constant $1/c$ is an interesting problem on its own, and appears in many statistical and machine learning problems; one notable example being marginal likelihood estimation in Bayesian model selection. Clearly, $\hat \kappa_0$ can be used for estimating $1/c$ outside the context of eigenvalue estimation, where the only goal is to estimate the normalizing constant; consistency of the estimator is however guaranteed only when the assumptions in Theorem~\ref{thm_mcrma_ext_consist} are met. Comparative assessment of the estimator $\hat \kappa_0$ with other estimators of the normalizing constant, such as the bridge sampling estimator \citep{bennett:1976, meng:1996}, is a topic of further research.
\end{remark}

\section{Illustrations} \label{sec_illus}
The purpose of this section is to illustrate the applicability and usefulness of the MCRMA algorithm in practical settings. We shall consider two separate examples, one with a finite, and one with an infinite state space. However, before proceeding to these applications, we will start with a toy normal-normal DA Markov chain to understand/illustrate the performance of the MCRMA algorithm in a setting where the entire spectrum is already known. All computations in this section are done in R \citep{R} with some parts written in C++ to speed up computation. We used the R packages \texttt{Rcpp} \citep{eddelbuettel:2013} to call C++ functions inside R, and \texttt{ggplot2} \citep{wickham:2016} and \texttt{reshape2} \citep{wickham:2007} to create the plots.

\subsection{Toy Example: The Normal-Normal DA Chain} \label{sec_toy_normal}

In this section we consider a toy normal-normal DA Markov chain considered in \cite{diaconis:khare:saloff:2008} and then in \cite{qin:hobert:khare:2017} with known eigenvalues to illustrate the performance of the MCRMA method. Here, given a starting point $x_0$, one iterates between the following two steps:
\begin{enumerate}[label=(\roman*)]
	\item generate $z$ from $\N(x/2, 1/8)$,
	\item generate $x$ from $\N(z, 1/4)$.
\end{enumerate}
Of course, the stationary density of $x$ is just $\N(0, 1/2)$, and there is no practical need for this MCMC algorithm. However, the spectrum of the corresponding Markov operator $K$ has been studied thoroughly in \citet{diaconis:khare:saloff:2008} and therefore it can be used as a nice toy example to exhibit the performance of MCRMA. It is easy to see that both the trace class condition \ref{trace_condn} and the variance condition \ref{var_condn} hold for the operator $K$ (since all the full conditional densities are just normal densities). From \citet{diaconis:khare:saloff:2008} it follows that the eigenvalues of $K$ are given by $(\lambda_n)_{n = 0}^\infty$ with $\lambda_n = 1/2^n$.

\begin{figure}
	\centering	
	\subfloat[The largest 11 eigenvalues. There are 10 curves, each corresponding to the choices $m = 1000, \cdots, 10,000$ in the MCRMA algorithm. The true eigenvalues are shown as red dots.]{\includegraphics[height = 0.25\textheight]{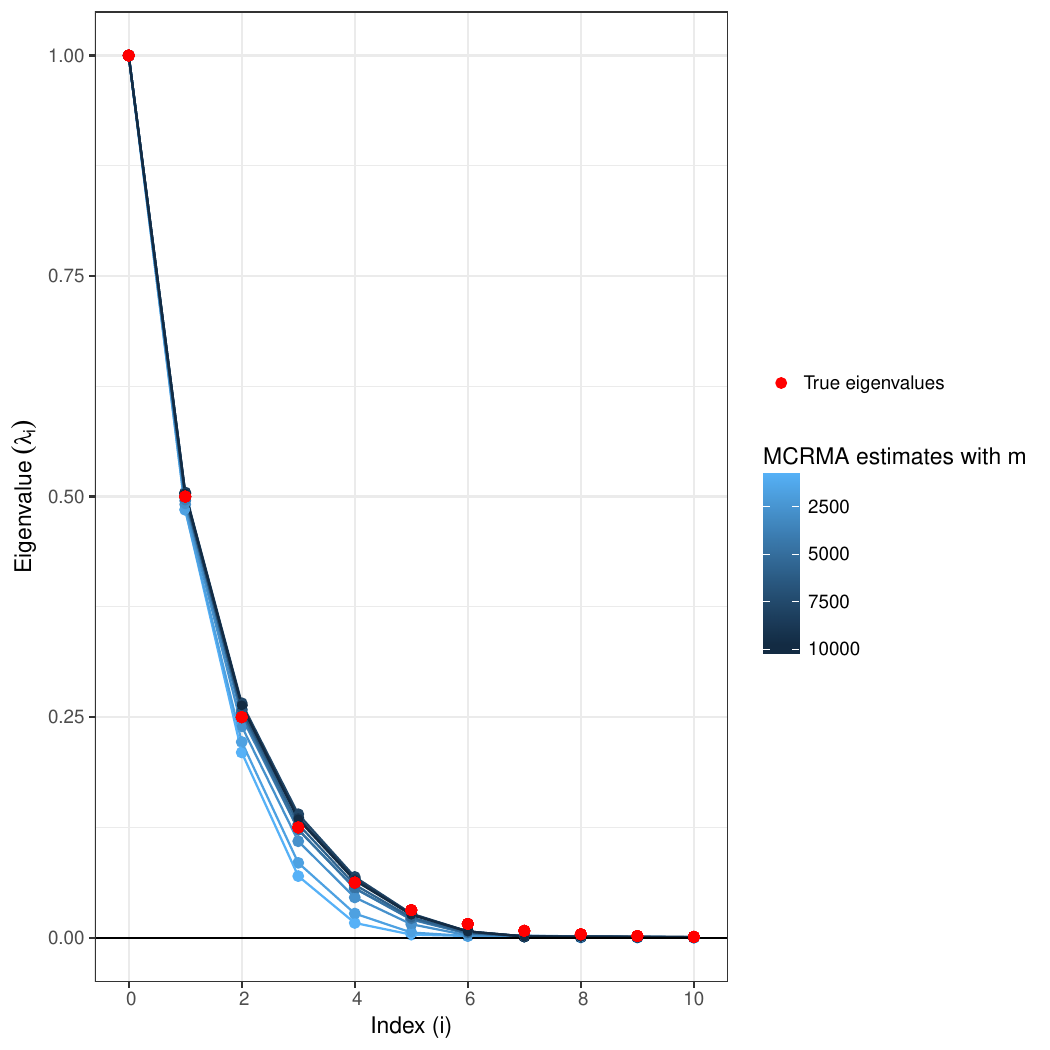}\label{toy_normal_all}} 
	\qquad 
	\subfloat[Second largest eigenvalue as a function of iterations $m$ in the MCRMA algorithm. The true eigenvalue is shown as a horizontal line.]{\includegraphics[height = 0.25\textheight]{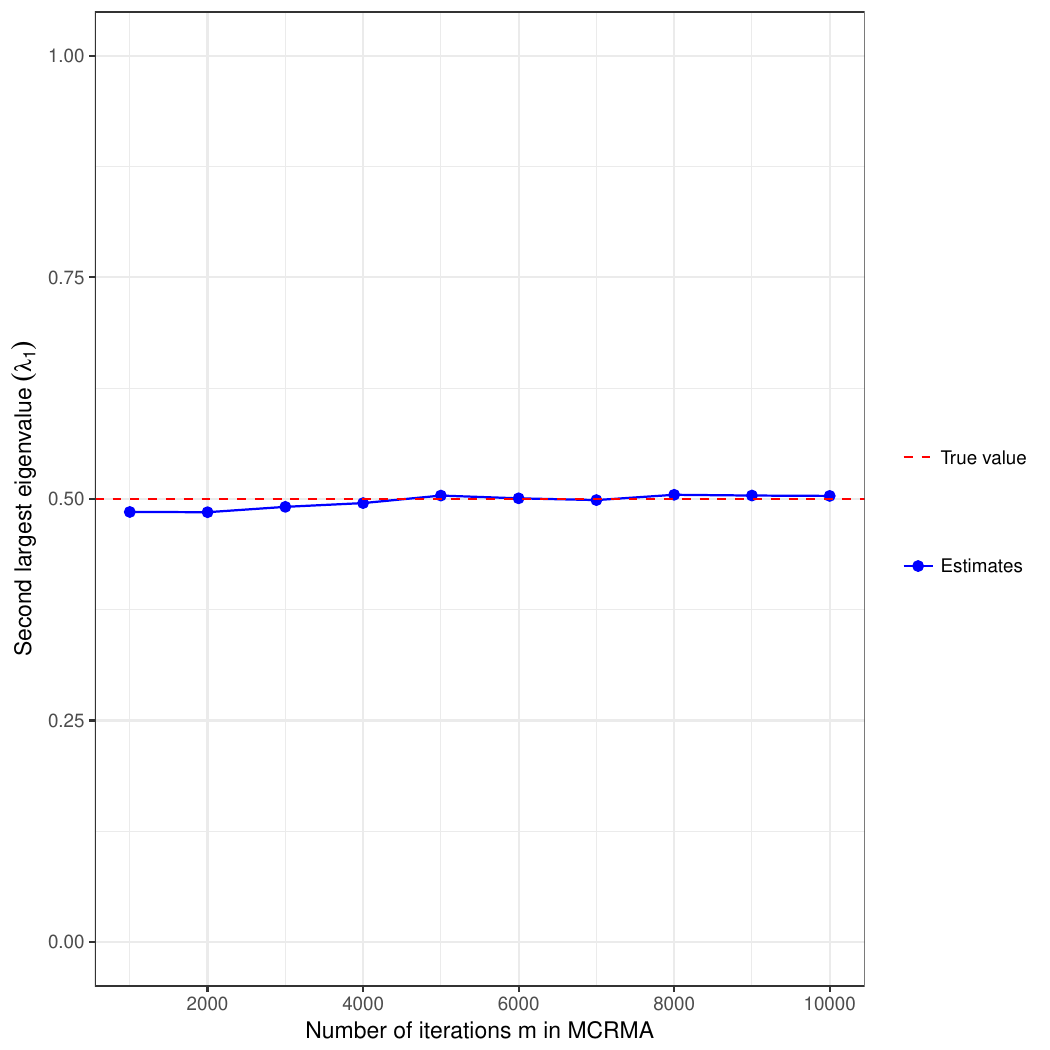} \label{toy_normal_lambda_1}} \\
	\subfloat[Third largest eigenvalue as a function of iterations $m$ in the MCRMA algorithm. The true eigenvalue is shown as a horizontal line.]{\includegraphics[height = 0.25\textheight]{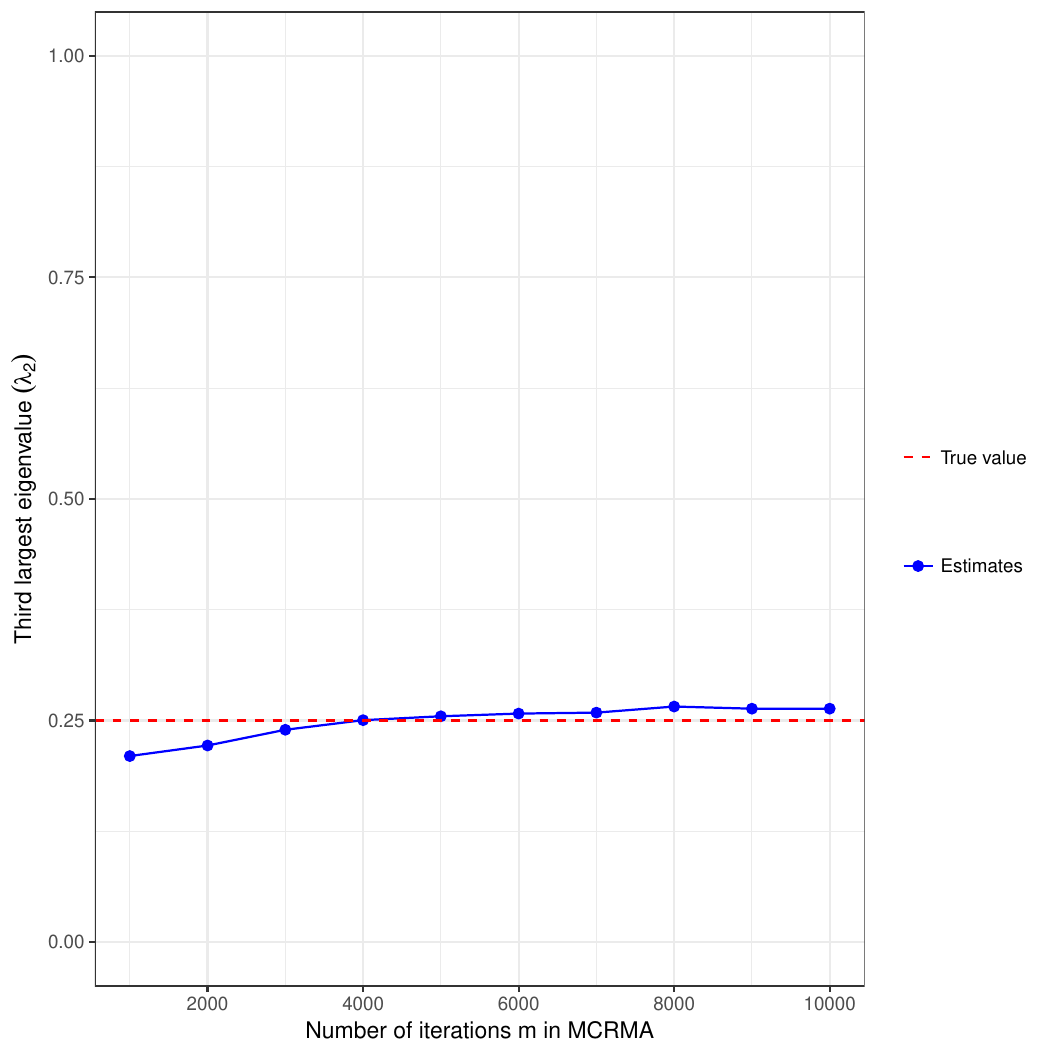} \label{toy_normal_lambda_2}}
	\qquad
	\subfloat[Fourth largest eigenvalue as a function of iterations $m$ in the MCRMA algorithm. The true eigenvalue is shown as a horizontal line.]{\includegraphics[height = 0.25\textheight]{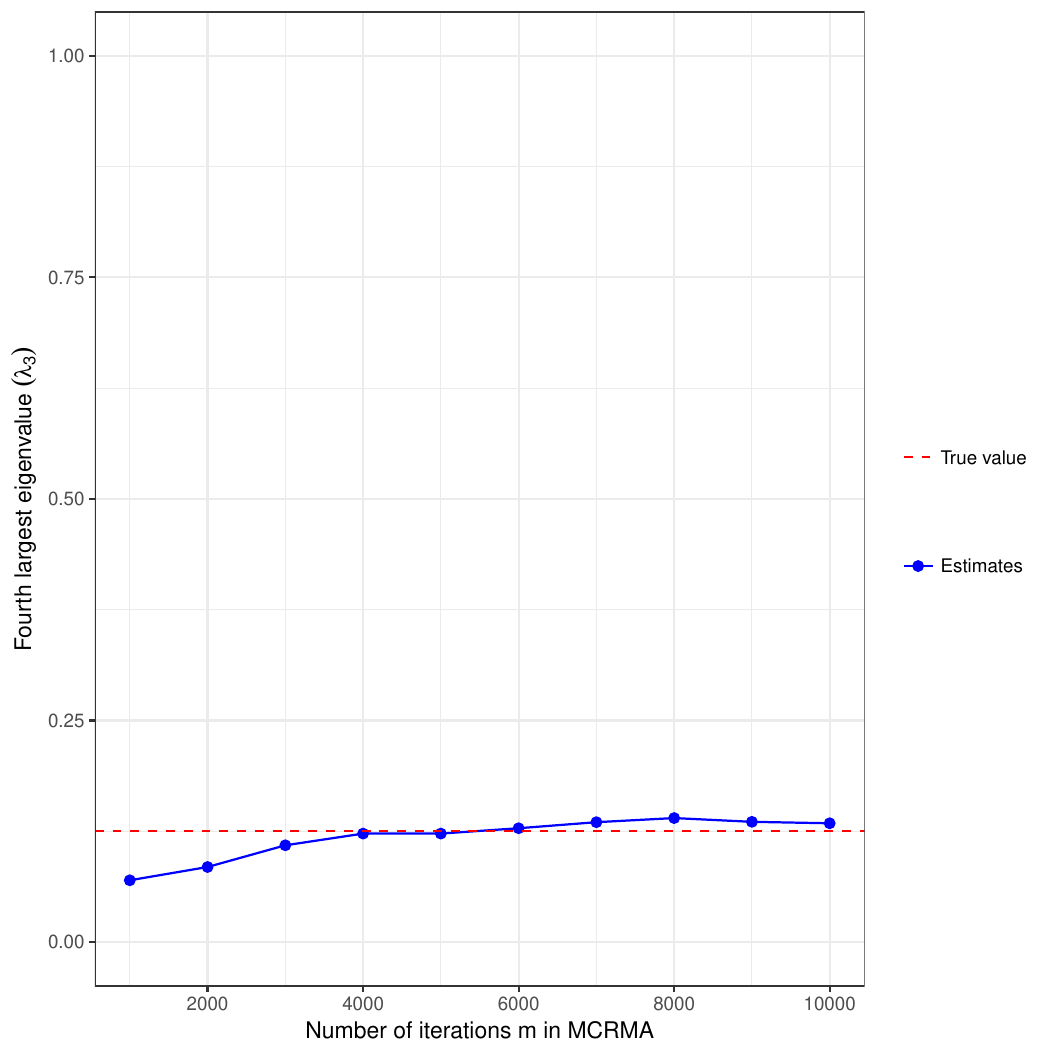} \label{toy_normal_lambda_3}} \\ 
	\subfloat[Fifth largest eigenvalue as a function of iterations $m$ in the MCRMA algorithm. The true eigenvalue is shown as a horizontal line.]{\includegraphics[height = 0.25\textheight]{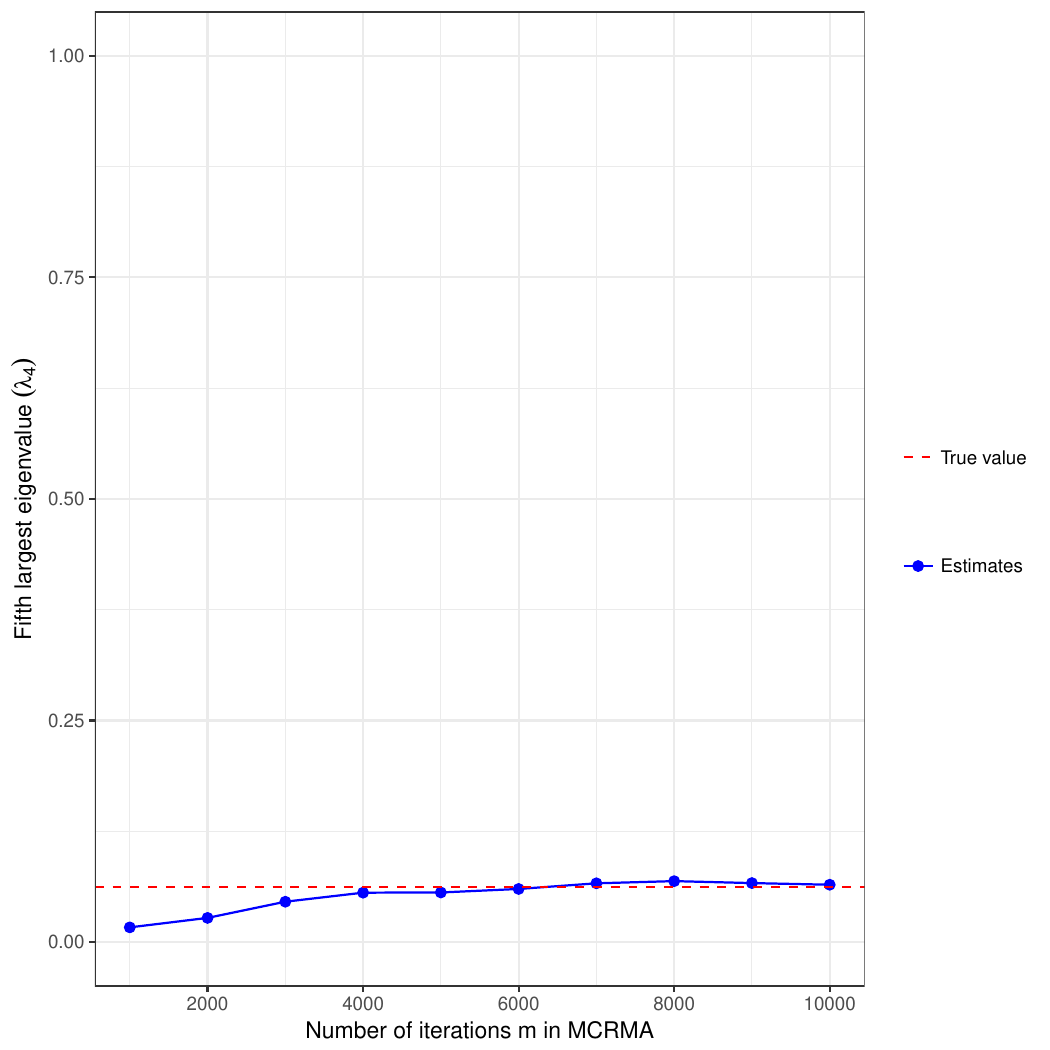} \label{toy_normal_lambda_4}}
	\qquad
	\subfloat[Sixth largest eigenvalue as a function of iterations $m$ in the MCRMA algorithm. The true eigenvalue is shown as a horizontal line.]{\includegraphics[height = 0.25\textheight]{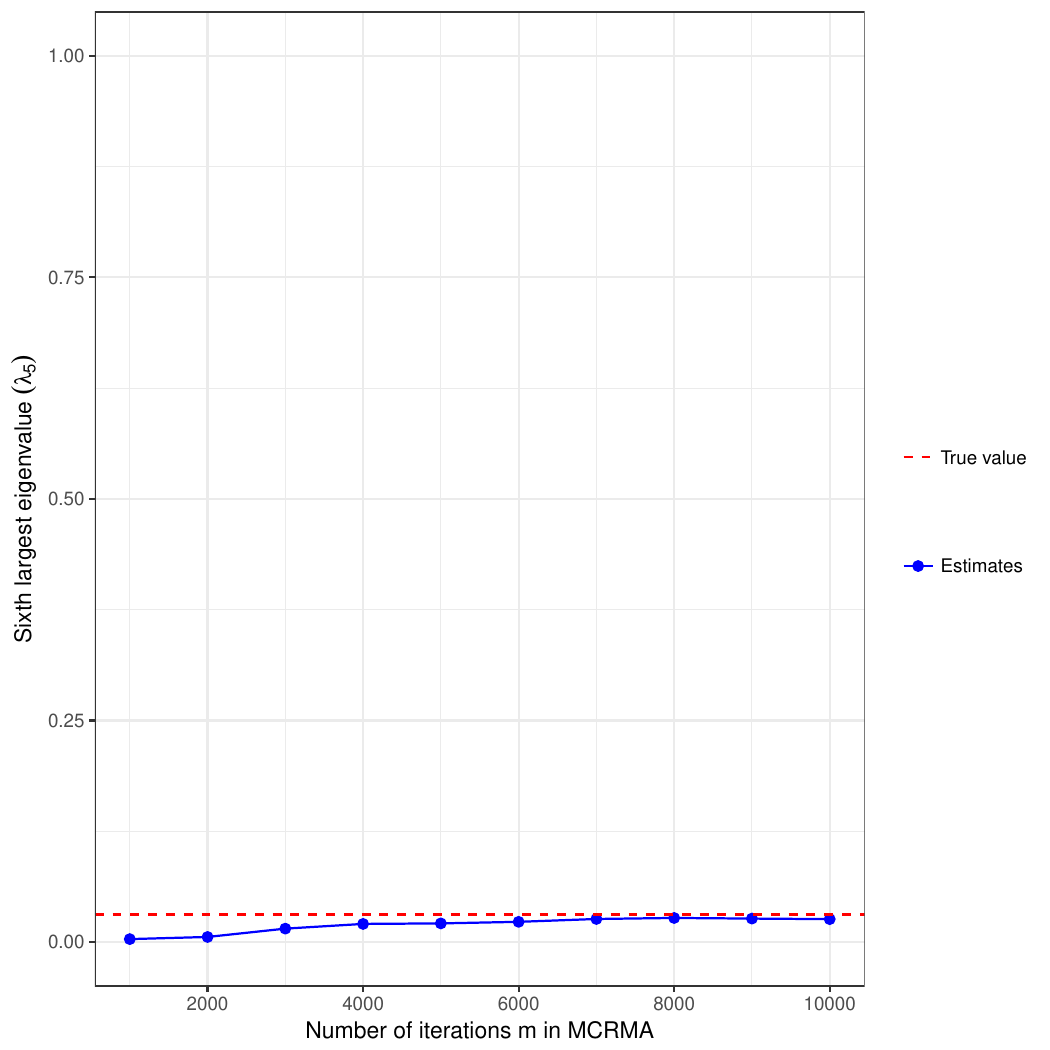} \label{toy_normal_lambda_5}}
	\\
	\caption{Eigenvalue estimates for the toy normal-normal DA Markov chain using the MCRMA algorithm.}
	\label{toy_normal_plots}
\end{figure}

Starting from $x = 0$, we first generate 10,000 realizations of the above Markov chain, after discarding a burn-in of size 10,000, and then extract the $x$ chain. Then we run 10 instances of the MCRMA algorithm \ref{algo_mcrma_ext} with $m = 1000, 2000, \cdots, 10,000$ (by using the first $1000, 2000, \cdots, 10,000$ iterations of the already generated Markov chain), and $N = N(m) = \left\lceil{m^{1+10^{-6}}}\right\rceil$, where for a real number $x$, $\lceil x \rceil$ denotes the ``ceiling'' of $x$, i.e., the smallest integer bigger than $x$. Then we look at the largest 11 eigenvalues (including $\lambda_0 = 1$) obtained from each instance of MCRMA. Note that the choice of $m$ and $N$ used here ensures strong consistency of the MCRMA estimates; weak consistency only requires $N(m) \rightarrow \infty$ as $m \rightarrow \infty$ (see Remark~\ref{rem_thm_mcrma_consis}). To understand the accuracy of MCRMA, the estimated eigenvalues are compared to the true eigenvalues, by displaying the estimates and truths on the same plot.  The resulting plots are shown in Figure~\ref{toy_normal_plots}. Figure~\ref{toy_normal_all} displays all 11 eigenvalues obtained from each of the 10 MCRMA instances (shown as 10 curves, one for each MCRMA instance), along with the true eigenvalues (shown as red dots). The second, third, fourth, fifth and sixth largest estimated eigenvalues, viewed  as functions of the MCRMA iteration size $m$, are shown separately in Figures~\ref{toy_normal_lambda_1} through \ref{toy_normal_lambda_5}, with the dotted line indicating the corresponding the true eigenvalue. The noticeable similarity between the truth and the estimates, (especially for the instances with $m \geq 5000$, where the estimates show satisfactory signs of convergence),  illustrates accuracy of the MCRMA method.

In spectral based diagnostics of MCMC algorithms, interest often lies in comparing the convergence rates of the estimated spectra, and that of the estimated functionals of interest (such as the posterior mean in Bayesian statistical analysis).  Here we consider the estimated (ergodic averages) mean of associated second Hermite polynomial for $x$, $H_2(x) = \frac{1}{\sqrt{2}}(2x^2 - 1)$, and compare its convergence to $\pi H_2 = 0$ with the convergence of MCRMA estimate of the second largest eigenvalue $\lambda_1$ to 0.5. In particular, for $m = 1000, 2000, \cdots, 10,000$, we compute
\begin{enumerate}[label = (\roman*)]
	\item $\hat{\pi H_2}{(m)} := m^{-1} \sum_{i=0}^{m-1} H_2(x_i)$  using the already generated Markov chain realizations $\{x_i: i = 0, \cdots, 9999\}$
	
	\item $\hat \lambda_1 (m) - 0.5$, where $\hat \lambda_1(m)$ is the estimated second largest eigenvalue obtained from the MCRMA instance ran with iteration size $m$
\end{enumerate}
and plot $|\hat{\pi H_2}{(m)}|$ and $|\hat \lambda_1 (m) - 0.5|$, both as functions of $m$, in the same diagram. The resulting plots are shown in Figure~\ref{toy_normal_cvg}, which shows that the convergence rates of the estimated spectrum and the estimated mean of second Hermite polynomial are comparable when $m \geq 5000$, and neither convergence is strictly than the other.  

\begin{figure}
	\centering	
	\includegraphics[width=0.7\linewidth]{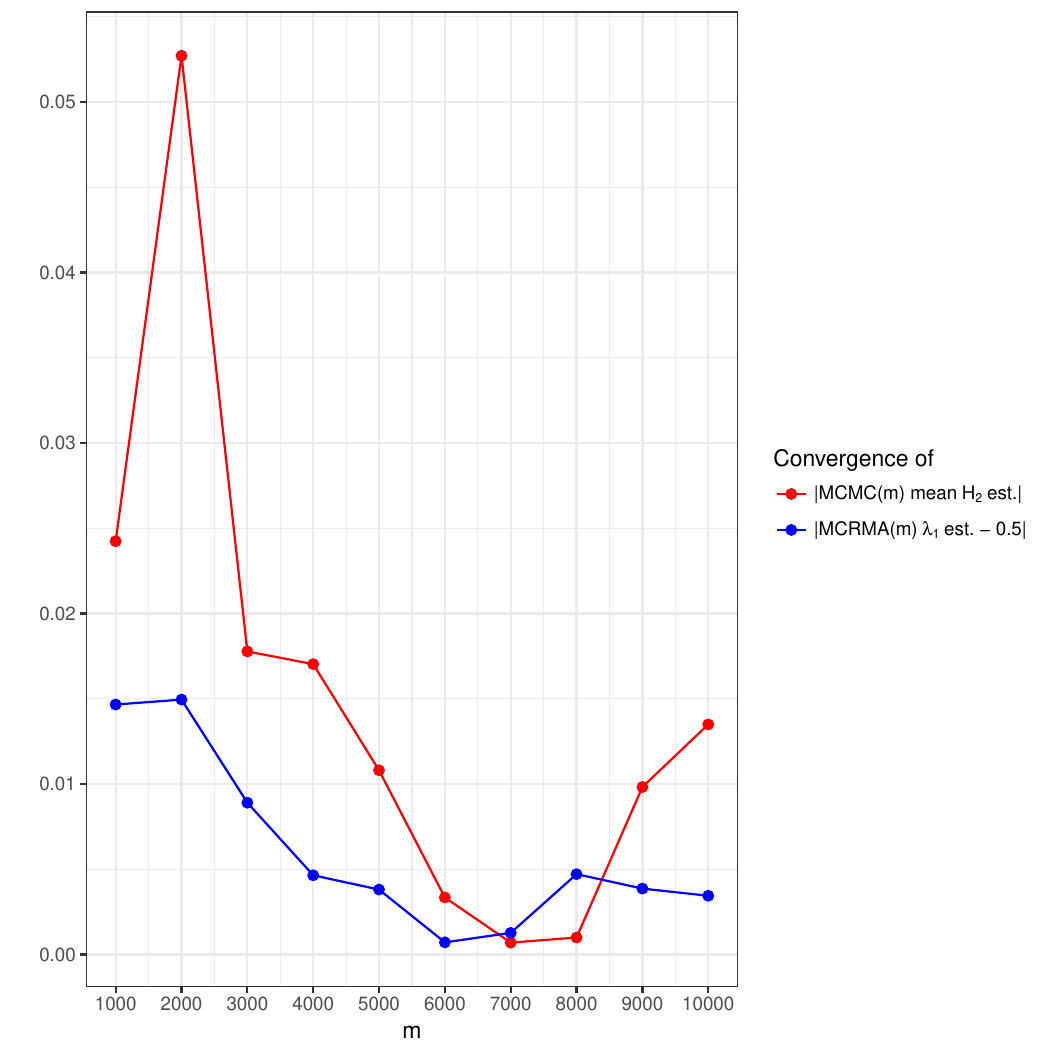} 
	\caption{Convergences of the MCMC estimate of mean second Hermite polynomial and MCRMA estimate of second largest eigenvalue, both viewed as functions of iteration size $m$. The absolute estimated means $|\hat{\pi H_2}{(m)}|$ are displayed as red dots, and the absolute differences $|\hat{\lambda}_{1}{(m)} - 0.5|$ are shown as blue dots.} \label{toy_normal_cvg}
\end{figure}

We end this example by comparing the MCRMA estimates of $\lambda_1$ to the estimates obtained using the power sum estimation technique of \cite{qin:hobert:khare:2017}, which we briefly describe in the following.  For a positive integer $r$, define the power sum $s_r := \sum_{i=0}^\infty \lambda_i^r$ of  eigenvalues $(\lambda_i)_{i \geq 0}$ of the associated trace class Markov operator $K$. Then for any $r \geq 1$
\[
l_r := \frac{s_r - 1}{s_{r-1} - 1} \leq  \lambda_1 \leq (s_r - 1)^{1/r} =: u_r  
\]
(with $s_0 = \infty$), and in addition, \citet[Proposition 1]{qin:hobert:khare:2017} show that as $1 \leq r \rightarrow \infty$, $l_r \uparrow \lambda_1$ and $u_r \downarrow \lambda_1$. For DA Markov operators, the authors provide an IID Monte Carlo based estimation technique for $s_r$, which in turn, provides estimates of $l_r$ and $u_r$, thus providing asymptotically consistent interval estimates of $\lambda_1$. The authors note that $r$ is not required to be very large in practice (in fact, very large $r$ cause instability in estimation, see \citet[Section~6]{qin:hobert:khare:2017}), and they recommend using $r$ large enough so that the difference between estimated $s_r$ and 1 is small.

\begin{figure}
	\centering	
	\includegraphics[width=0.7\linewidth]{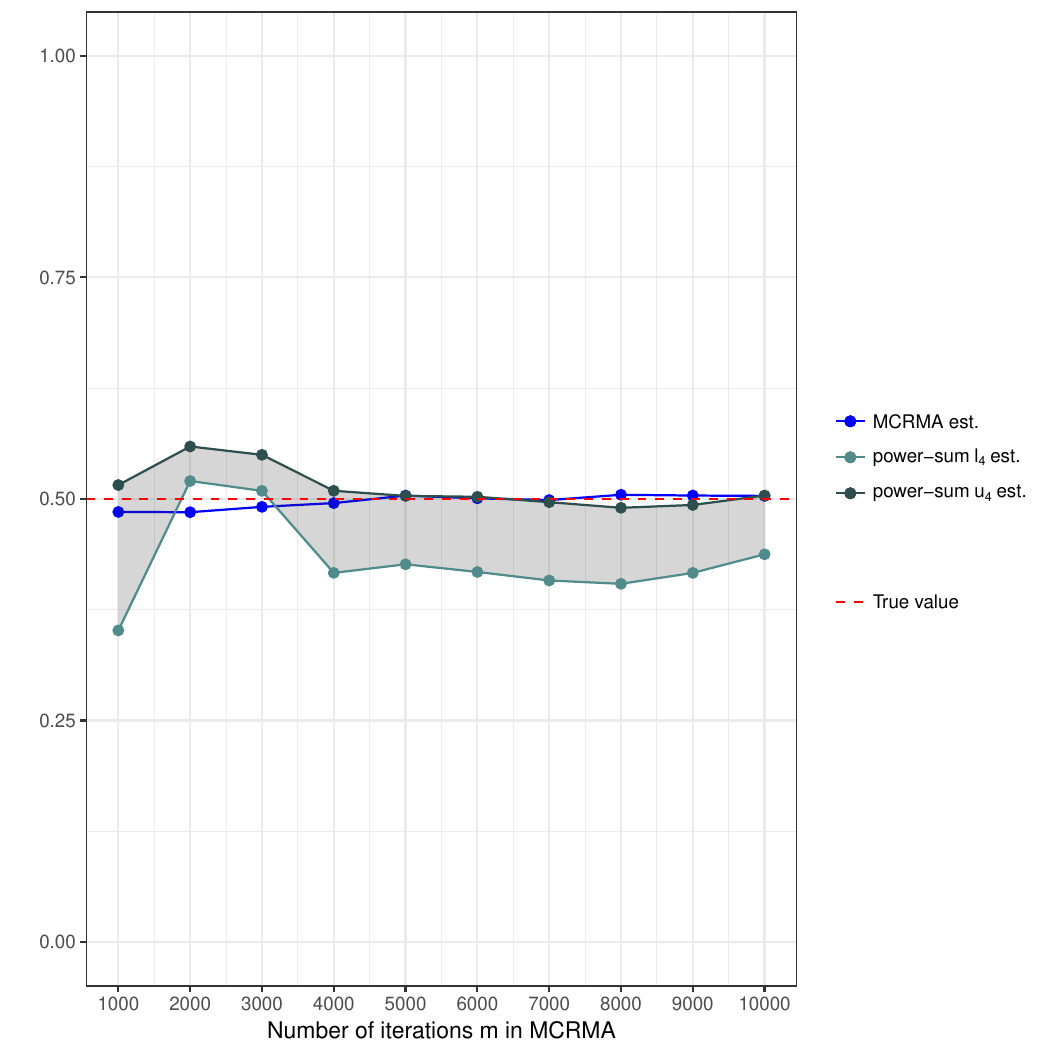} 
	\caption{Comparing MCRMA estimates with power-sum estimates of $\lambda_1 = 0.5$. The 10 MCRMA estimates with $m = 1000, \cdots, 10,000$ and $N(m) = \left\lceil{m^{1+10^{-6}}}\right\rceil$ are shown as blue dots, the light and dark gray dots are the power-sum estimates of $l_r$ and $u_r$ with $r = 4$ and Monte Carlo sample size $N(m)$, and the shaded gray region provides power-sum interval estimates of $\lambda_1$. The true $\lambda_1$ is shown as a horizontal red line.} \label{plot_mcrma_tracesum_compare}
\end{figure}   

The key step in the power-sum estimation method of \citet{qin:hobert:khare:2017} is the step of Monte Carlo estimation of $s_r$. To aid comparability with MCRMA, we set the associated Monte Carlo sample size to be the same $N(m) = \left\lceil{m^{1+10^{-6}}}\right\rceil$, and run 10 instances of  trace sum estimation method with $m = 1000, \cdots, 10,000$ and $r=4$. The estimated $(l_r, u_r)$ are then plotted as functions of $m$ together with the MCRMA estimates of $\lambda_1$, and displayed in Figure~\ref{plot_mcrma_tracesum_compare}, which shows that MCRMA gives slightly better estimates in the current settings.

\subsection{Infinite State Space Application: Polson and Scott DA Gibbs Sampler} \label{sec_illus_infinite}
We consider the Data Augmentation algorithm for Bayesian logistic regression proposed in \citet{polson:scott:windle:2013}. Let $Y_1, Y_2, 
\cdots, Y_n$ be independent Bernoulli random variables with $P(Y_i = 1 \mid \betab) = F(\ub_i^T \betab)$.  Here $\ub_i \in \R^{p}$ is a vector of known covariates associated with $Y_i$, $i = 1,\dots, n$, $\betab \in \R^p$ is a vector of unknown regression coefficients, and $F : \R \rightarrow [0,1] : t \mapsto e^t /(1 + e^t)$ is the distribution function of a standard logistic distribution.  For $y_i \in \{0,1\}$, the likelihood function under this model is given by:
\[
P(Y_1 = y_1, \dots, Y_n = y_n|\betab) = \prod_{i=1}^n \left[F(\ub_i^T\betab)\right]^{y_i} \left[1- F(\ub_i^T\betab)\right]^{1- y_i}
\]
The objective is to make inferences on the regression parameter $\betab$ and we intend to adopt a Bayesian approach, which requires a  prior density for $\betab$ to be specified. To keep parity with the literature, in this section we shall slightly abuse our notation by using $\betab$ (not $X$) to denote the parameter of interest, $U$ to denote the non-stochastic design matrix with $i$th row $\ub_i^T$, and $\pi(\betab)$ to denote the prior density for $\betab$.  Note that here our target distribution is not the prior density $\pi(\cdot)$, but the posterior density $\pi(\cdot \mid \yb)$ given the data $\yb = (y_1, \dots, y_n)^T$, which is given by
\begin{equation} \label{post_beta_inc}
\pi(\betab \mid \yb) = \frac{1}{c(\yb)}\: \pi(\betab) \left(\prod_{i=1}^n \left[F(\ub_i^T\betab)\right]^{y_i} \left[1- F(\ub_i^T\betab)\right]^{1- y_i} \right)
\end{equation}
where
\begin{equation} \label{norm_const}
c(\yb) = \int_{\R^p} \pi(\betab) \left(\prod_{i=1}^n \left[F(\ub_i^T\betab)\right]^{y_i} \left[1- F(\ub_i^T\betab)\right]^{1- y_i} \right) d\betab
\end{equation}
is the normalizing constant dependent of $\yb$ only.  We shall consider a proper $\N_p(\bbo, B)$ prior for $\betab$, as in \citet{choi:hobert:2013}. Note that the posterior density $\pi(\betab \mid \yb)$ is intractable; it does not have a closed form, and IID sampling is very inefficient even for moderately large $p$. \citet{polson:scott:windle:2013} proposed a data augmentation Gibbs sampling algorithm for approximate sampling  from  $\pi(\betab \mid \yb)$, which only requires random generation from easy to sample univariate distributions. In the following, we borrow notations from \citet{choi:hobert:2013} where the uniform ergodicity of the Markov chain produced by the \psw DA algorithm is proved.  Let $\R_+ = (0, \infty)$, and for fixed $\wb = (w_1, \dots, w_n)^T \in \R^n_+$
\begin{align*} 
\Omega(\wb) &= \diag(w_1, \dots, w_n), \\
\Sigma(\wb) &= \left(U^T \Omega(\wb) U + B^{-1}\right)^{-1}, \\
\text{and }\mub(\yb) = \mub &= U^T\left(\yb - \frac12 \bm{1}_n\right) + B^{-1}\bbo.
\end{align*}

\noindent Then the Polson, Scott and Windle DA Gibbs sampler (Algorithm~\ref{algo_psw_gs}) for generating MCMC samples from the posterior distribution $\pi(\betab \mid \yb)$  is obtained by iteratively sampling independent $w_i$ from (univariate) $\pg \left(1, \left|\ub_i^T \betab  \right|\right)$ distribution, for $i = 1, \cdots, n$, and then sampling $\betab$ from $\N_p \left(\Sigma(\wb) \mub, \Sigma(\wb) \right)$. Here $\pg(1,c)$ denotes the Polya-Gamma distribution with parameters 1 and $c$, which is defined as follows. Let $\displaystyle (E_k)_{k\geq 1}$ be a sequence of IID standard Exponential random variables, and let
\begin{equation} \label{w_from_E}
W = \frac{2}{\pi^2} \sum_{l=1}^\infty \frac{E_l}{(2l-1)^2}
\end{equation}
which has density
\begin{equation} \label{gt_defn}
\gt(w) = \sum_{l=1}^\infty (-1)^l \: \frac{(2l + 1)}{\sqrt{2\pi w^3}}\: e^{-\frac{(2l + 1)^2}{8w}}; \; w \geq 0.
\end{equation} 
Then the Polya-Gamma family of densities $\{\gt_c : c \geq 0\}$ is obtained through an exponential tilting of the density $\gt$: 
\[
\gt_c(x) = \cosh(c/2)\: e^{-\frac{c^2x}{2}} \: \gt(x),
\]
and a random variable is said to have a $\pg(1, c)$ distribution if it has density $\gt_c$. (Recall that $\cosh(t) = (e^t + e^{-t})/2$.) An efficient data generating algorithm from $\pg(1,c)$ is provided in \citet{polson:scott:windle:2013}.

\begin{algorithm}  [h]  
	Given a starting value $\betab_0$, iterate between the following two steps:	
	\begin{enumerate}[label = (\roman*)]
		\item Draw independent $w_1, \cdots, w_n$ with 	$\displaystyle w_i \sim \pg \left(1, \left|\ub_i^T \betab  \right|\right),\; i = 1,\cdots,n$, and define
		\begin{align*} 
		\Omega(\wb) &= \diag(w_1, \dots, w_n), \\
		\Sigma(\wb) &= \left(U^T \Omega(\wb) U + B^{-1}\right)^{-1}, \\
		\text{and }\mub(\yb) = \mub &= U^T\left(\yb - \frac12 \bm{1}_n\right) + B^{-1}\bbo.
		\end{align*}.
		
		\item Draw $\displaystyle \betab \sim \N_p \left(\Sigma(\wb) \mub, \Sigma(\wb) \right)$.
	\end{enumerate}
	\caption{The Polson, Scott and Windle DA Gibbs Sampler}
	\label{algo_psw_gs}  
\end{algorithm}

\noindent From the \psw Gibbs sampler (Algorithm~\ref{algo_psw_gs}), it follows that
\begin{enumerate}
	\item For $i = 1, \cdots, n$, the (full) conditional posterior distribution of $w_i$ given $\betab$ is independent $\pg \left(1, \left|\ub_i^T \betab  \right|\right)$, so that the  conditional joint density of $\wb = (w_1, \dots, w_n)^T$  given $\betab, \yb$ is given by 
	\begin{equation} \label{pi_w_given_beta}
	\pi(\wb \mid\betab, \yb) = \prod_{i=1}^{n} \left\lbrace  \cosh\left(\frac{|\ub_i^T\betab \mid }{2}\right) \: \exp\left[-\frac12 (\ub_i^T\betab)^2 w_i\right] \: \gt(w_i) \right\rbrace
	\end{equation}
	where $\gt$ is as given in (\ref{gt_defn}).
	
	\item The full conditional distribution of $\betab$ given $\wb, \yb$ is $\N_p \left(\Sigma(\wb) \mub, \Sigma(\wb) \right)$ with density
	\begin{align*} \label{pi_beta_given_w}
	\pi(\betab \mid \wb, \yb) &= \left({2\pi}\right)^{-p/2} \: \left|U^T \Omega(\wb) U + B^{-1}\right|^{1/2} \\
	& \qquad \qquad \times    \exp \left[-\frac12 \left(\betab - \Sigma(\wb) \mub \right)^T \Sigma(\wb)^{-1} \left(\betab - \Sigma(\wb) \mub \right) \right]. \numbereqn
	\end{align*}
\end{enumerate}

\noindent Note that the transition density of the associated Markov chain $\Phi$ for $\betab$ is given by
\[
k(\betab, \betab') = \int_{\R_+^n} \pi(\betab'|\wb, \yb) \:\pi(\wb \mid\betab, \yb) \:d\wb
\]
where $\pi(\betab \mid \wb, \yb)$ and $\pi(\wb \mid\betab, \yb)$ are as given in (\ref{pi_beta_given_w}) and (\ref{pi_w_given_beta}) respectively. It is clear that this transition density cannot be evaluated in closed form. Moreover, a closed form expression for the normalizing constant $c(\yb)$ in (\ref{norm_const}) is not available, which means the posterior density $\pi(\betab \mid \yb)$ in (\ref{post_beta_inc}) can only be specified up to a constant factor. Thus, exact RMA cannot be applied in this example. However, by letting  $\wb$ play the role of the augmented data $z$, $f_{Z|X}(\cdot \mid \cdot)$  the conditional density $\pi(\wb \mid\betab, \yb)$ (from which random sampling is easy due to the efficient simulation algorithm from $\pg(1, c)$ proposed in \citet{polson:scott:windle:2013}), and  $f_{X|Z}(\cdot \mid \cdot)$ the simple multivariate normal density $\pi(\betab \mid \wb, \yb)$, we can use the extended MCRMA method (Algorithm~\ref{algo_mcrma_ext}). Since the state space of $\betab$ (and $\wb$) is infinite, in order to ensure consistency of the MCRMA estimates, we need \ref{trace_condn},  and \ref{var_condn} in Theorem~\ref{thm_mcrma_consist} to hold. The following two theorems (Theorem~\ref{tr_thm_psw} and Theorem~\ref{var_thm_psw}) show that the \psw Markov chain does indeed satisfy these two conditions, thus guaranteeing consistency of MCRMA estimates in this case. Proofs of Theorem~\ref{tr_thm_psw} and Theorem~\ref{var_thm_psw} are provided in Section~\ref{supp_sec_illus} of the Appendix.

\begin{theorem} \label{tr_thm_psw}
	For any choice of the (proper multivariate normal) prior parameters $\bbo$ and $B$, the Markov operator associated with \psw Markov chain $\Phi$ is trace class.
\end{theorem}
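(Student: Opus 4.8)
The plan is to verify the trace class condition using the characterization in~(\ref{tr_cond}): the operator $K$ associated with $\Phi$ is trace class if and only if $\int_{\R^p} k(\betab, \betab) \, d\betab < \infty$. First I would write out $k(\betab,\betab)$ explicitly using the DA representation,
\[
k(\betab, \betab) = \int_{\R_+^n} \pi(\betab \mid \wb, \yb) \, \pi(\wb \mid \betab, \yb) \, d\wb,
\]
and substitute the closed forms from~(\ref{pi_beta_given_w}) and~(\ref{pi_w_given_beta}). The key simplification is that in $\pi(\betab \mid \wb, \yb)$ evaluated at $\betab'=\betab$, the quadratic form in the exponent reduces to $-\frac12(\betab - \Sigma(\wb)\mub)^T \Sigma(\wb)^{-1}(\betab - \Sigma(\wb)\mub)$, which is bounded above by $0$, so $\pi(\betab \mid \wb, \yb) \leq (2\pi)^{-p/2} |U^T\Omega(\wb) U + B^{-1}|^{1/2}$. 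Using $B^{-1} \preceq U^T\Omega(\wb)U + B^{-1}$ one gets the cruder but $\wb$-free bound that won't quite work; instead I would keep the determinant factor and pair it against the $\wb$-integral.

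Next I would interchange the order of integration (Tonelli, everything nonnegative) to get
\[
\int_{\R^p} k(\betab,\betab)\,d\betab = \int_{\R_+^n} \left[ \int_{\R^p} \pi(\betab \mid \wb, \yb)\, \pi(\wb \mid \betab, \yb)\, d\betab \right] d\wb,
\]
and bound the inner integral. The trick is that $\pi(\wb \mid \betab, \yb) = \prod_i \cosh(|\ub_i^T\betab|/2)\exp[-\frac12(\ub_i^T\betab)^2 w_i]\,\gt(w_i)$ contains a factor $\exp[-\frac12 \betab^T U^T\Omega(\wb)U\betab]$ times $\prod_i \cosh(|\ub_i^T\betab|/2)\gt(w_i)$, and multiplying by $\pi(\betab\mid\wb,\yb) \leq (2\pi)^{-p/2}|U^T\Omega(\wb)U+B^{-1}|^{1/2}$ gives an integrand in $\betab$ that looks like a Gaussian kernel with precision $\succeq U^T\Omega(\wb)U + B^{-1}$ (from the product of the two exponential-quadratic pieces, the $\pi(\betab\mid\wb)$ piece being centered at $\Sigma(\wb)\mub$ and the $\pi(\wb\mid\betab)$ piece at $0$) dressed with the bounded-by-$1$-after-rescaling $\cosh$ factors. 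One should be able to bound $\prod_i \cosh(|\ub_i^T\betab|/2) \leq \exp[\frac14 \sum_i |\ub_i^T\betab|] \leq \exp[c\|\betab\|]$ for a constant $c$ depending on $U$, and then complete the square: the Gaussian integral of $\exp[-\frac12 \betab^T A(\wb)\betab + \text{(linear)}]$ over $\R^p$ is $(2\pi)^{p/2}|A(\wb)|^{-1/2} \exp[\text{(quadratic in the linear term)}]$, where $A(\wb) \succeq U^T\Omega(\wb)U + B^{-1}$. The determinant factors $|U^T\Omega(\wb)U+B^{-1}|^{1/2}$ and $|A(\wb)|^{-1/2}$ then largely cancel, leaving something integrable against $\prod_i \gt(w_i)\,d\wb$, which is a probability density and hence has mass $1$, modulo the residual exponential factor from the linear terms which I would need to check stays bounded or integrable in $\wb$.

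The main obstacle I anticipate is controlling the interaction between the $\cosh$ factors (which grow like $e^{c\|\betab\|}$) and the Gaussian decay: one must ensure the effective precision matrix $A(\wb)$ stays bounded below by the fixed positive-definite $B^{-1}$ uniformly in $\wb$ so that the $\betab$-integral converges with a bound that is integrable in $\wb$. Since the $\pi(\betab \mid \wb, \yb)$ factor alone already supplies precision $U^T\Omega(\wb)U + B^{-1} \succeq B^{-1}$, completing the square against just that factor (discarding the extra $\exp[-\frac12\betab^T U^T\Omega(\wb)U\betab]$ from $\pi(\wb\mid\betab)$, which only helps) should give $\int_{\R^p}\pi(\betab\mid\wb,\yb)\,e^{c\|\betab\|}\,d\betab \leq |U^T\Omega(\wb)U+B^{-1}|^{1/2}(2\pi)^{-p/2}\cdot(2\pi)^{p/2}|U^T\Omega(\wb)U+B^{-1}|^{-1/2} e^{c'}$ for a $\wb$-free constant, because the linear-term contribution is controlled by the smallest eigenvalue of the precision, which is $\geq \lambda_{\min}(B^{-1}) > 0$. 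After the determinants cancel this leaves $\int_{\R^p}k(\betab,\betab)\,d\betab \leq e^{c'}\int_{\R_+^n}\prod_{i=1}^n\big[\cosh(|\ub_i^T\betab|/2)\text{-type bound absorbed}\big]\gt(w_i)\,d\wb$; being careful that the remaining $\cosh$ factors from $\pi(\wb\mid\betab)$ were already used and that what is left integrates to a finite quantity since each $\gt$ is a proper density and the tilting constants $\cosh(|\ub_i^T\betab|/2)$ were the ones bounded. I would also double-check whether a sharper argument (e.g. the one in \citet{choi:hobert:2013} establishing uniform ergodicity, or comparison with the chain's $W$-marginal via Remark~\ref{rem_z_finite}) gives a cleaner route, but the direct determinant-cancellation computation above should suffice.
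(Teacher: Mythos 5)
Your proposal is correct, but it takes a genuinely different route from the paper's proof. The paper bounds the determinant via $\left|U^T \Omega(\wb) U + B^{-1}\right| \leq a \left(\sum_{i} w_i + 1\right)^{p}$ (Proposition~\ref{prop_det_ineq}), discards all $\wb$-dependence in the exponent, and thereby factors the bound on $\int k(\betab,\betab)\,d\betab$ into a product $I_1 \cdot I_2$ of a pure $\betab$-integral and a pure $\wb$-integral; the price is that $I_2$ requires showing the Polya--Gamma density $\gt$ has a finite $p$-th moment, which the paper does via the series representation $W = \frac{2}{\pi^2}\sum_l E_l/(2l-1)^2$. You instead integrate out $\betab$ first conditionally on $\wb$, retaining the determinant $\left|U^T\Omega(\wb)U+B^{-1}\right|^{1/2}$ and cancelling it against the $\left|\,\cdot\,\right|^{-1/2}$ produced by the Gaussian integral over $\betab$ (whose precision is $\succeq U^T\Omega(\wb)U+B^{-1} \succeq B^{-1}$ uniformly in $\wb$), so the residual $\wb$-integral is just $\int \prod_i \gt(w_i)\,d\wb = 1$ and no moment computation is needed. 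This is cleaner, and it is in fact the same cancellation the paper itself deploys later at step $(\dagger\dagger)$ in the proof of the variance condition (Theorem~\ref{var_thm_psw}). Two small points to tighten: $\cosh(u)\le e^u$ gives $\prod_i\cosh(|\ub_i^T\betab|/2)\le \exp[\frac12\sum_i|\ub_i^T\betab|]$, not $\frac14$; and $\sum_i|\ub_i^T\betab|$ is not literally a linear term, so "completing the square" needs either an orthant-by-orthant argument or the AM--GM absorption $\exp[\betab^T\mub+\frac12\sum_i|\ub_i^T\betab|]\le a_0\exp[\frac14\betab^TB^{-1}\betab]$, which is exactly the paper's Proposition~\ref{exp_prod_upper} and works uniformly in $\wb$ precisely because the precision never drops below $B^{-1}$, as you anticipated.
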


\begin{theorem} \label{var_thm_psw}
	Let the initial distribution $\nu_0$ of $\betab$ be such that $\exp\left[\frac12 \: \sum_{i=1}^{n} |\ub_i^T\betab \mid  \right]$ is $\nu_0$-integrable for all $i=1,\cdots, n$. Then the operator $K$ associated with the \psw algorithm satisfies the variance condition \ref{var_condn}. 
\end{theorem}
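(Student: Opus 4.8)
The plan is to bound the triple integral in the variance condition~\ref{var_condn} by a finite constant that is uniform in $m$, $j$ and $j'$. In the PSW setting $X=\betab$, $Z=\wb$, $f_{X\mid Z}(\cdot\mid\wb)=\pi(\cdot\mid\wb,\yb)$, $f_{Z\mid X}(\cdot\mid\betab)=\pi(\cdot\mid\betab,\yb)$, and $\pi(x_{j'})$ is the posterior $\pi(\betab_{j'}\mid\yb)$; I would rewrite that integral as the expectation, over the chain started from $\nu_0$, of a function of $\betab_{j'}$ times the inner $\wb$-integral evaluated at $\betab_j$. Three ingredients then suffice: a pointwise upper bound on the squared density ratio $\bigl(\pi(\betab'\mid\wb,\yb)/\pi(\betab'\mid\yb)\bigr)^2$; the finiteness of the Polya-Gamma moment $I_3$ established in the proof of Theorem~\ref{tr_thm_psw}; and a ``one-step smoothing'' property of the PSW Markov transition.

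For the ratio I would use (\ref{pi_beta_upper}) for the numerator and combine it with a lower bound on the posterior obtained from $F(t),\,1-F(t)\ge\tfrac12 e^{-|t|}$ --- so the logistic likelihood is at least $2^{-n}\exp(-\sum_i|\ub_i^T\betab'|)$ --- together with the Gaussian prior factor. The quadratic terms cancel in the quotient, leaving a linear term $\betab'^T(\mub-B^{-1}\bbo)=\sum_i(y_i-\tfrac12)\ub_i^T\betab'$, which is dominated by $\tfrac12\sum_i|\ub_i^T\betab'|$ since $|y_i-\tfrac12|=\tfrac12$, so that
\[
\left(\frac{\pi(\betab'\mid\wb,\yb)}{\pi(\betab'\mid\yb)}\right)^2 \;\le\; C\,\Bigl(\,\textstyle\sum_{i=1}^n w_i+1\Bigr)^{p}\, g(\betab'),\qquad g(\betab'):=\exp\Bigl(3\textstyle\sum_{i=1}^n|\ub_i^T\betab'|\Bigr).
\]
Doing the $\wb$-integration against $\pi(\wb\mid\betab_j,\yb)$ and bounding $\pi(\wb\mid\betab_j,\yb)\le\exp(\tfrac12\sum_i|\ub_i^T\betab_j|)\prod_i\gt(w_i)$ via (\ref{pi_w_upper}), the surviving $\wb$-integral $\int(\sum_i w_i+1)^p\prod_i\gt(w_i)\,d\wb$ is exactly the finite constant $I_3$ from the proof of Theorem~\ref{tr_thm_psw}, and the integral in condition~\ref{var_condn} is bounded by $C\,I_3\,\E[g(\betab_{j'})\,h(\betab_j)]$, where $h(\betab):=\exp(\tfrac12\sum_i|\ub_i^T\betab|)$.

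It remains to bound $\E[g(\betab_{j'})h(\betab_j)]$ uniformly over $0\le j<j'\le m-1$. Conditioning on $\betab_j$ gives $\E[g(\betab_{j'})h(\betab_j)]=\E[h(\betab_j)\,(K^{j'-j}g)(\betab_j)]$, and the crucial point is that one application of $K$ already collapses $g$ to a bounded function: the PSW update has conditional law $\betab'\mid\wb\sim\N_p(\Sigma(\wb)\mub,\Sigma(\wb))$ with $\Sigma(\wb)\preceq B$ and $\|\Sigma(\wb)\mub\|\le\lambda_{\max}(B)\|\mub\|$ \emph{uniformly in $\wb$}, so splitting $g(\betab')\le\sum_{\bm{\sigma}\in\{\pm1\}^n}\exp\!\bigl(3(U^T\bm{\sigma})^T\betab'\bigr)$ and applying the Gaussian moment generating function gives $\int g(\betab')\pi(\betab'\mid\wb,\yb)\,d\betab'\le C_g$ for a constant $C_g$ free of $\wb$ and $\betab$; integrating against the probability density $\pi(\wb\mid\betab,\yb)$ yields $(Kg)(\betab)\le C_g$ for every $\betab$. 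Since $K$ fixes constant functions, $(K^{j'-j}g)(\betab)\le C_g$ for all $j'>j$, hence $\E[g(\betab_{j'})h(\betab_j)]\le C_g\,\E[h(\betab_j)]$. The identical argument produces $(Kh)(\betab)\le C_h$, so $\E[h(\betab_j)]\le C_h$ for $j\ge1$, while $\E[h(\betab_0)]=\E_{\nu_0}[\exp(\tfrac12\sum_i|\ub_i^T\betab|)]<\infty$ is exactly the hypothesis imposed on $\nu_0$. Thus $\sup_{j\ge0}\E[h(\betab_j)]<\infty$ and the integral in condition~\ref{var_condn} is bounded uniformly in $m,j,j'$, which is the assertion.

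The main obstacle, and the crux of the whole argument, is the one-step smoothing step: $g$ can grow as fast an exponential in $\sum_i|\ub_i^T\betab'|$ as one likes, so a naive Cauchy--Schwarz split of $\E[g(\betab_{j'})h(\betab_j)]$ would be too lossy --- the assumption on $\nu_0$ only controls the $\tfrac12$-power exponential moment. What saves the argument is the structural fact, also used in the uniform-ergodicity analysis of \citet{choi:hobert:2013}, that the Gaussian update $\betab'\mid\wb$ always has covariance dominated by the fixed matrix $B$ and a uniformly bounded mean, so all its exponential moments are controlled no matter where the chain currently sits. A secondary technical point is that the exponential growth rate in the lower bound for $\pi(\betab'\mid\yb)$ must be tracked precisely enough that the resulting $g$ still has a finite one-step-ahead conditional moment.
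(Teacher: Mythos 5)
Your proposal is correct and follows essentially the same route as the paper: bound the squared density ratio by $(\sum_i w_i+1)^p$ times an exponential of $\sum_i|\ub_i^T\betab'|$, integrate out $\wb$ against the Polya--Gamma moment $I_3$, and then control the remaining bivariate expectation via the key ``one-step smoothing'' bound $\sup_{\betas}\int \exp(c\sum_i|\ub_i^T\betab|)\,k(\betas,\betab)\,d\betab<\infty$, applied once to collapse the $\betab_{j'}$ factor and once more (together with the $\nu_0$-integrability hypothesis for $j=0$) to bound $\E[h(\betab_j)]$ uniformly. The only difference is cosmetic: you establish the one-step bound via the Gaussian moment generating function using $\Sigma(\wb)\preceq B$ and a uniformly bounded mean, whereas the paper absorbs the exponential-of-linear term into the quadratic form by an AM--GM argument (Proposition~\ref{exp_prod_upper}); both yield the same uniform constant.
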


\begin{remark}
	Note that the integrability condition assumed on the initial measure $\nu_0$ in Theorem~\ref{var_thm_psw} is not very restrictive, and can be easily ensured in practice for a number of families of distribution. For example, if the initial distribution of $\betab$ is Gaussian, integrability of $\exp\left[\frac12 \sum_{i=1}^{n} |\ub_i^T\betab \mid \right]$ is immediate.
\end{remark}

\subsubsection{Simulation Results}
For simulation, we used the the R package \texttt{BayesLogit} \citep{polson:scott:windle:2013} to efficiently draw random samples from the Polya Gamma density. We generated a \psw Markov chain on the  \texttt{nodal} dataset from the R package \texttt{boot} \citep{canty:ripley:2017, davidson:hinkley:1997}. The dataset consists of 53 observations on 5 binary predictors (aged,  stage, grade, xray and  acid) and one response which indicates whether cancer has spread from prostate to surrounding lymph nodes. Taking the maximum likelihood estimate as the starting value, we first generated 30,000 iterations of the \psw  chain for  the regression coefficient $\betab$ ($\in \R^6$, includes one intercept coefficient). We discarded the first 20,000 iterations as burn-in, and kept the remaining 10,000 as the MCMC sample. Then we ran 10 instances of the MCRMA algorithm with the MCMC sample already generated and with $m = 1000, 2000, \cdots, 10,000$ and $N(m) = \left\lceil{m^{1+10^{-6}}}\right\rceil$ (to ensure strong consistency), and recorded the 30 largest eigenvalues. Then we created plots similar to the toy normal-normal DA example, except, the true eigenvalues were of course unknown in this case. The resulting plots are shown in Figure~\ref{psw_plots}. 

\begin{figure}[htpb]
	\centering	
	\subfloat[The largest 30 eigenvalues. There are 10 curves, each corresponding to the choices $m = 1000, \cdots, 10,000$ in the MCRMA algorithm.]{\includegraphics[height = 0.25\textheight]{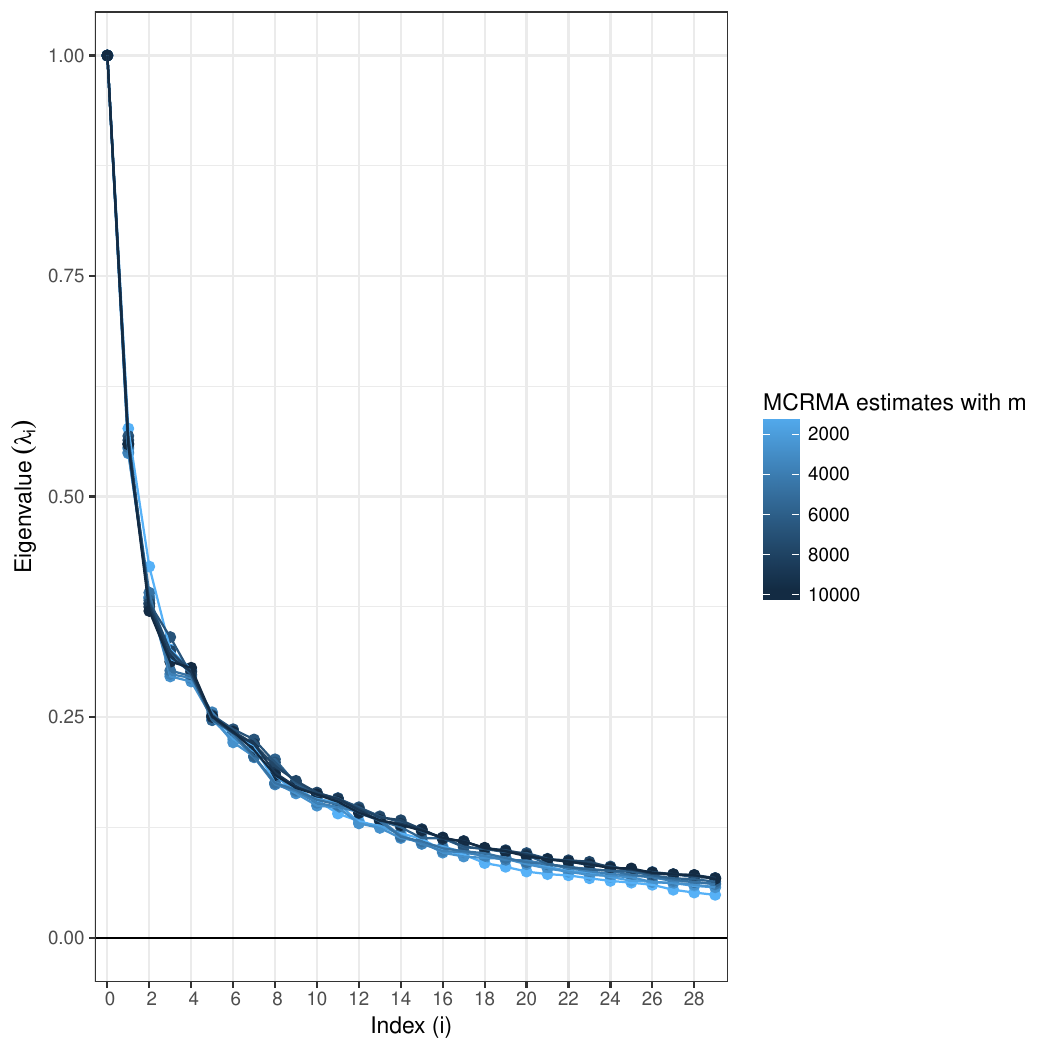} \label{psw_all}}
	\qquad
	\subfloat[Second largest eigenvalue as a function of iterations $m$ in the MCRMA algorithm.]{\includegraphics[height = 0.25\textheight]{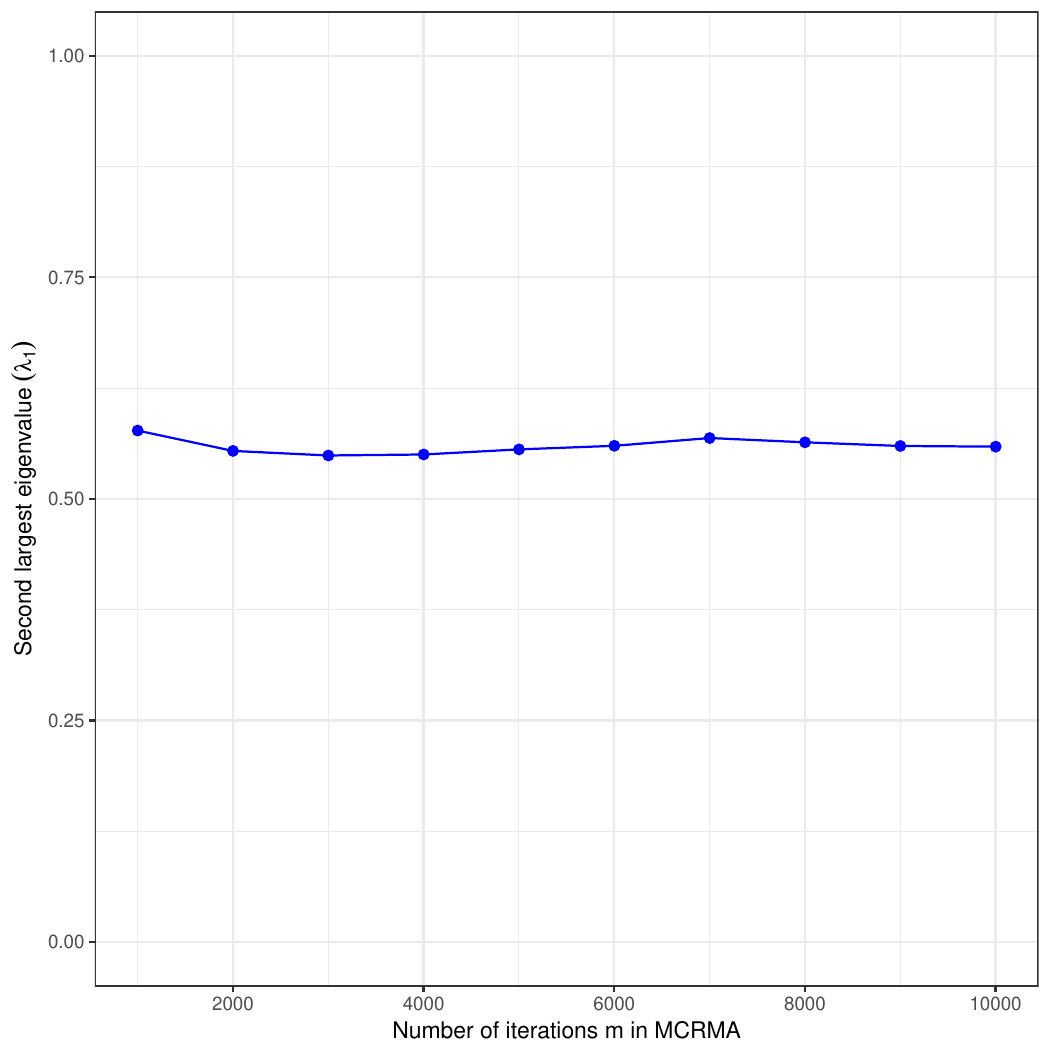} \label{psw_lambda_1}}\\
	\subfloat[Third largest eigenvalue as a function of iterations $m$ in the MCRMA algorithm.]{\includegraphics[height = 0.25\textheight]{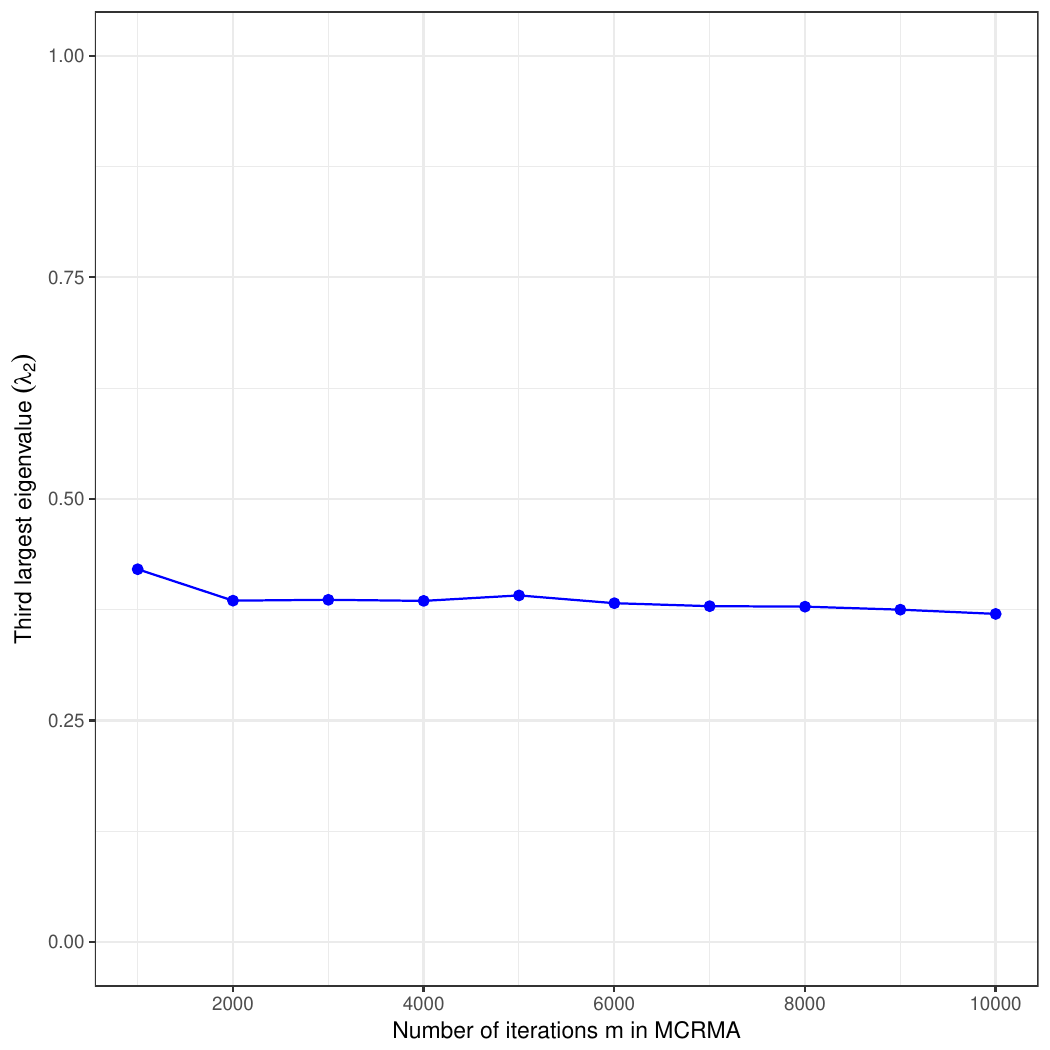} \label{psw_lambda_2}}
	\qquad
	\subfloat[Fourth largest eigenvalue as a function of iterations $m$ in the MCRMA algorithm.]{\includegraphics[height = 0.25\textheight]{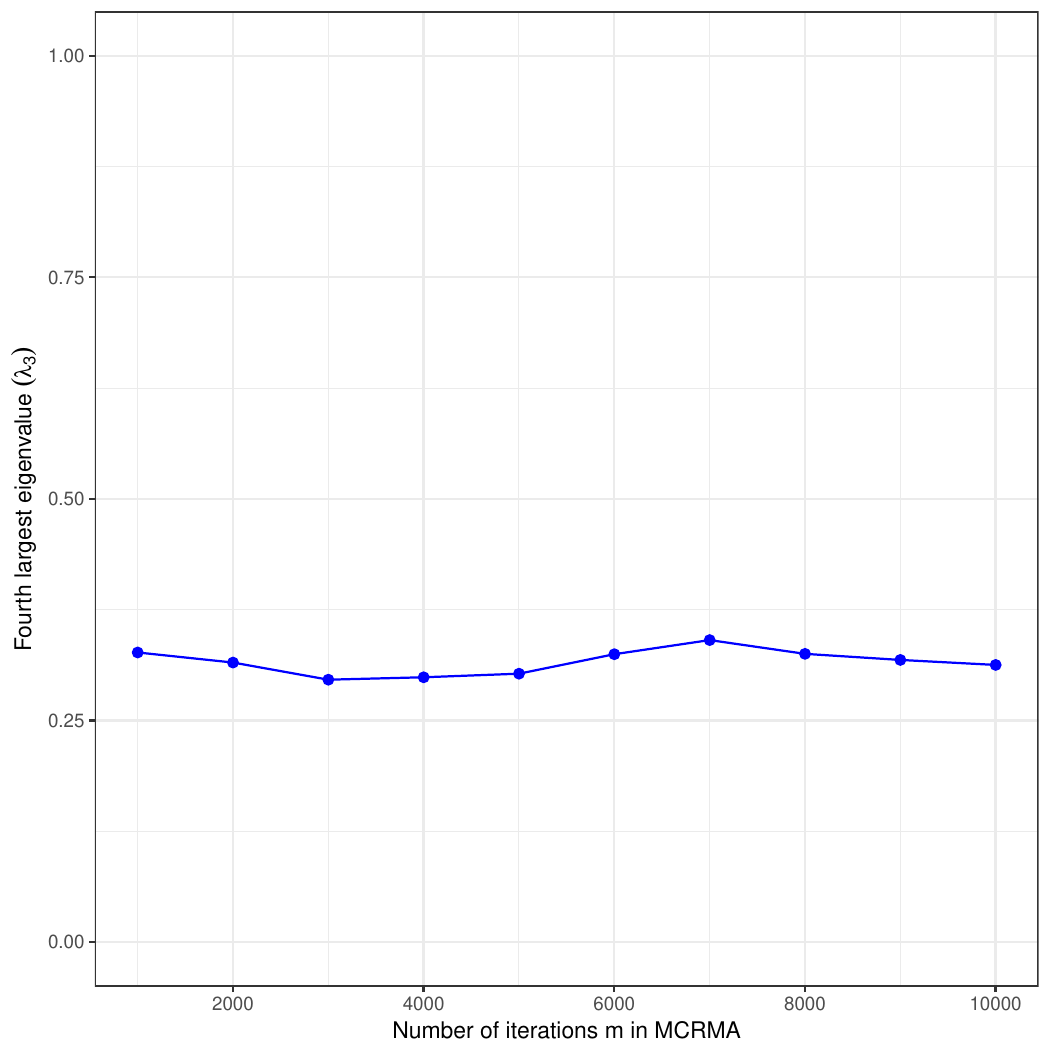} \label{psw_lambda_3}} \\ 
	\subfloat[Fifth largest eigenvalue as a function of iterations $m$ in the MCRMA algorithm.]{\includegraphics[height = 0.25\textheight]{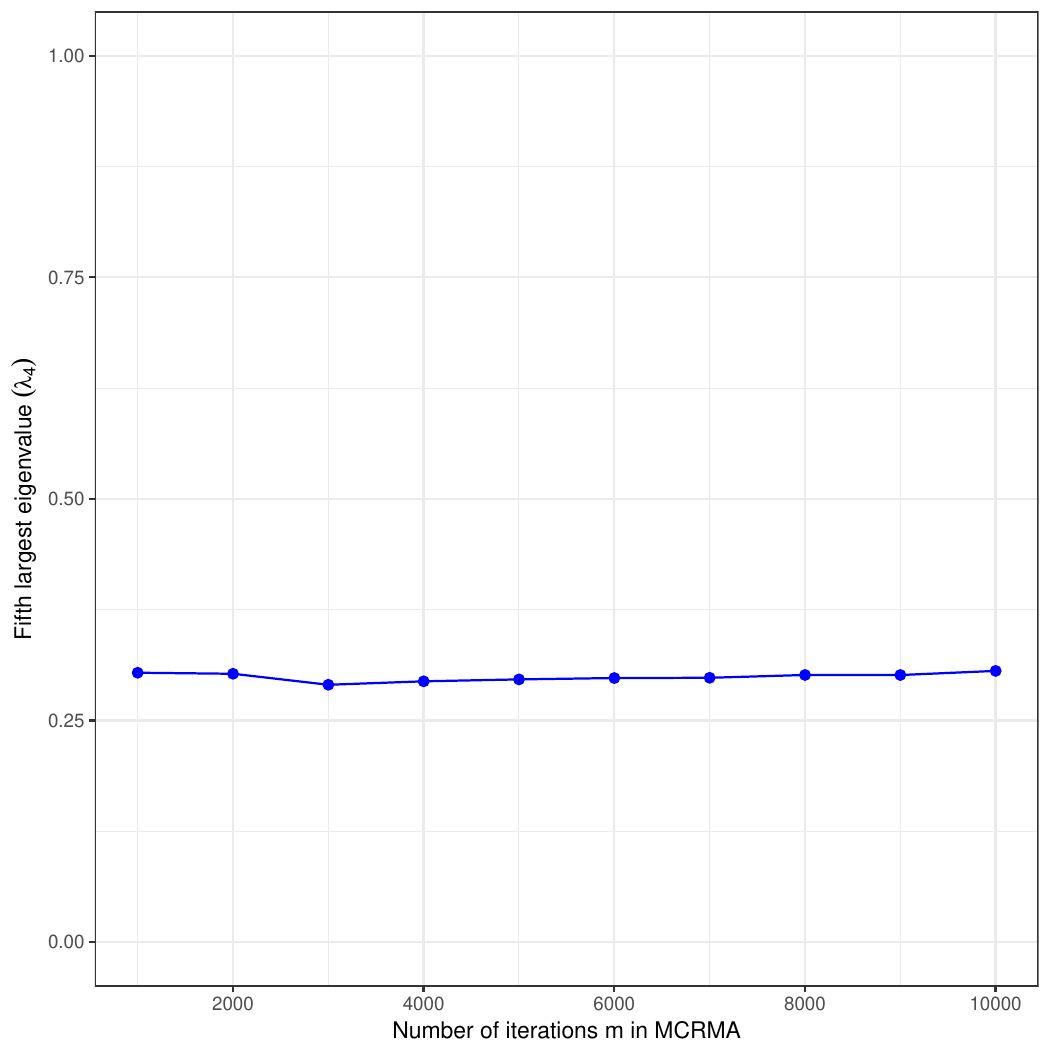} \label{psw_lambda_4}}
	\qquad
	\subfloat[Sixth largest eigenvalue as a function of iterations $m$ in the MCRMA algorithm.]{\includegraphics[height = 0.25\textheight]{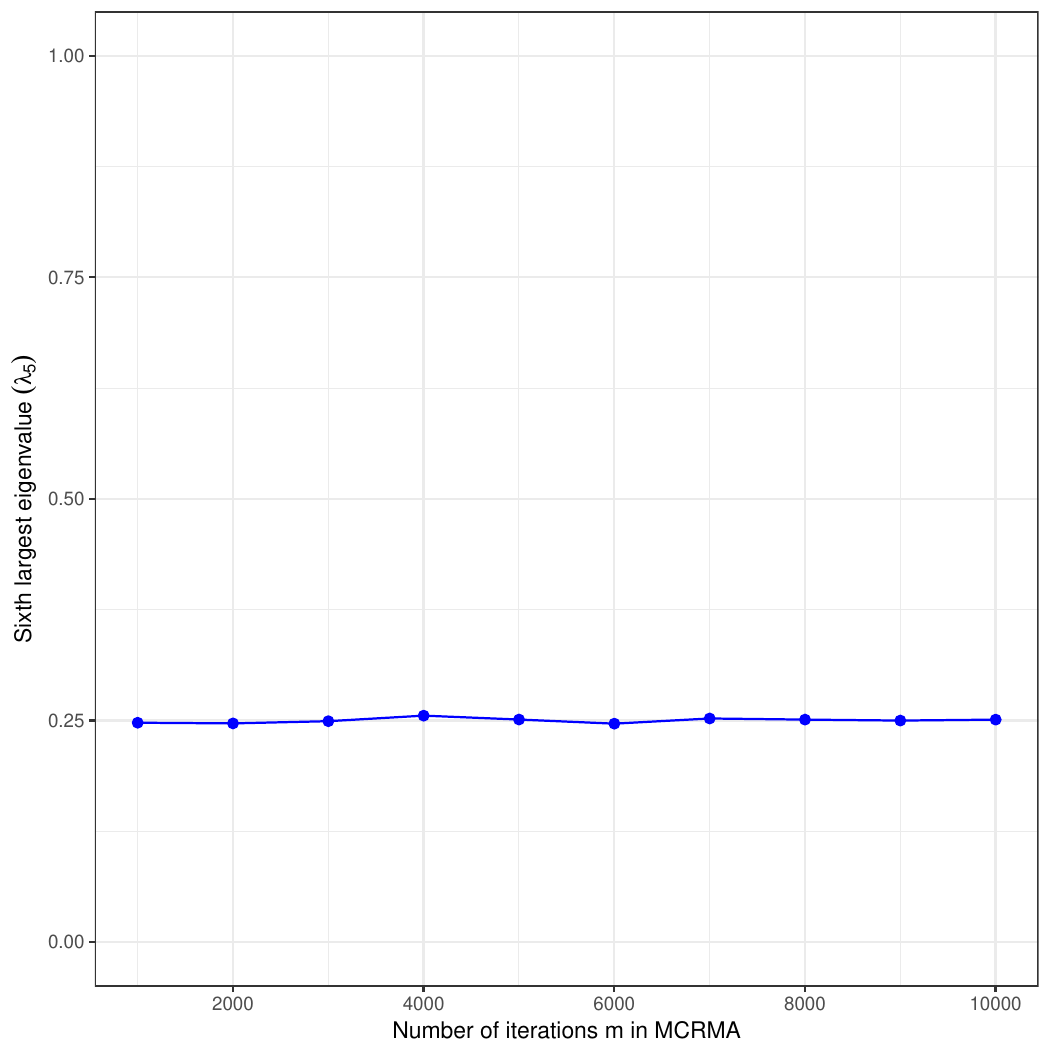} \label{psw_lambda_5}}
	\\
	\caption{Eigenvalue estimates for the \psw DA Markov chain using the MCRMA algorithm.}
	\label{psw_plots}
\end{figure}

Figure~\ref{psw_all} shows all 30 eigenvalues obtained from each of the 20 MCRMA instances, plotted as 20 curves, one for each MCRMA instance. The second, third, fourth, fifth and sixth largest estimated eigenvalues, viewed  as functions of the MCRMA iteration size $m$, are shown in Figures~\ref{psw_lambda_1} through \ref{psw_lambda_5}. As is clear from the plots, the MCRMA spectrum estimates for the \psw chain show adequate signs of convergence when $m \geq 5000$, thereby providing confidence on the accuracy of estimation.


\subsection{Finite State Space Application: Two Component Normal Mixture} \label{sec_illus_finite}
In this section we consider the problem of Bayesian finite mixture modeling with two components. Let $\yb = (y_1, \cdots, y_n)$ be a random sample from the two component equal variance mixture normal density 
\begin{equation} \label{2_comp_normmix}
f(y|\mub, p) = p \frac{1}{\tau} \phi \left(\frac{y-\mu_1}{\tau}\right) + (1-p) \frac{1}{\tau} \phi \left(\frac{y-\mu_2}{\tau}\right)
\end{equation}
where $p \in [0,1]$ is the mixing proportion, $\mub = (\mu_1 ,\mu_2 ) \in \R^2$ is the vector of component means and $\tau^2 > 0$ is the \emph{known} variance for both components, and $\phi(\cdot)$ is the standard normal density function. The objective is to make inferences on the unknown parameter vector $\thetab = (\mub, p)$ through the data $\yb$, and we adopt a Bayesian approach. The prior density for $\thetab$ is taken to be of the form $\pi(\thetab) = \pi(p) \pi(\mu_1) \pi(\mu_2)$, with $\pi(p)$ being the Uniform$(0,1)$ density, and $\pi(\mu_j)$ being the $\N(0, \tau^2)$ density. 
Then the posterior density for $\thetab$ is given by 
\begin{equation} \label{post_y_normmix} 
\pi(\thetab \mid \yb) =  \pi(\mub,  p \mid \yb) = \frac{1}{c(\yb)} \prod_{i=1}^n \left\lbrace p \frac{1}{\tau} \phi \left(\frac{y_i-\mu_1}{\tau}\right) + (1-p) \frac{1}{\tau} \phi \left(\frac{y_i-\mu_2}{\tau}\right) \right \rbrace \pi(\thetab)
\end{equation}
where $c(\yb)$ is the normalizing constant that makes (\ref{post_y_normmix}) a proper density. It is clear that $\pi(\thetab \mid \yb)$ is intractable, which makes evaluation of posterior mean or IID  simulation infeasible. We therefore resort to approximate sampling via MCMC. A slightly general version of this problem (with \emph{unknown} and \emph{different} component variances $\tau_1^2$ and $\tau_2^2$) is considered in \citet[section 6.2]{hobert:roy:robert:2011} and the authors consider two different Gibbs sampling algorithms, namely, the mixture Data Augmentation algorithm (MDA algorithm, or simply, MDA) and the \fs  algorithm (FS algorithm, or simply, FS),  to generate MCMC samples from the posterior. 

\subsubsection{MDA Algorithm}
Let us introduce latent component indicators $z_1, \cdots, z_n$, with $z_i = j$ indicating that the $i$th observation $y_i$ is coming from the $j$th component $\N(\mu_j, \tau^2)$ for $j=1,2$.  Then,
\begin{enumerate}
	\item the full conditional posterior distribution of the components of $\thetab$ given $\zb$ are independent, with 
	$p$ being Beta$(c_1+1, c_2+1)$ and $\mu_j$ being $\N\left(\frac{c_j}{c_j+1} \bar y_j, \frac{\tau^2}{c_j+1}\right)$ with $c_j = \sum_{i=1}^n \one_{\{j\}}\left(z_i\right)$ and $\bar y_j = c_j^{-1} \sum_{i=1}^n y_i \one_{\{j\}}\left(z_i\right) $ for $j = 1,2$. We shall denote the corresponding density of $\thetab$ by $\pi(\thetab \mid \zb, \yb)$.
	
	\item the full conditional posterior density (\emph{mass}, with respect to the counting measure $\zeta$) of $\zb$ given $\thetab$ is given by 
	\begin{align*}
	\pi(\zb \mid \thetab, \yb) \propto \prod_{i=1}^n \left(\pt_i{\one_{\{1\}}(z_i)} +  (1-\pt_i){\one_{\{2\}}(z_i)}\right) \numbereqn 
	\label{pi_z_given_theta}
	\end{align*}
	where 
	\[
	\pt_i = \frac{ p \: \phi \left(\frac{y_i-\mu_1}{\tau}\right)}{ p \: \phi \left(\frac{y_i-\mu_1}{\tau}\right) + (1-p) \: \phi \left(\frac{y_i-\mu_2}{\tau}\right)}.
	\]
\end{enumerate}
The MDA algorithm entails iterative generation of $\zb$ from $\pi(\zb \mid \thetab, \yb)$, and $\thetab$ from $\pi(\thetab \mid \zb, \yb)$. The resulting Gibbs sampler is formally displayed in Algorithm~\ref{algo_mda_gs}.  

\begin{algorithm}  [h]  
	Given a starting value $(\mub_0, p_0)$ for the parameter vector $\thetab = (\mub, p)$, iterate between the following two steps:
	\begin{enumerate}[label = (\roman*)]
		\item 
		Draw independent $z_1, \cdots, z_n$ with $z_i$ having a categorical probability distribution with categories 1 and 2, and   
		\[
		P(z_i = j) = \begin{cases}
		\frac{ p \: \phi \left(\frac{y_i-\mu_1}{\tau}\right)}{ p \: \phi \left(\frac{y_i-\mu_1}{\tau}\right) + (1-p) \: \phi \left(\frac{y_i-\mu_2}{\tau}\right)} & \text{ if } j = 1 \\
		\frac{ (1-p) \: \phi \left(\frac{y_i-\mu_2}{\tau}\right)}{ p \: \phi \left(\frac{y_i-\mu_1}{\tau}\right) + (1-p) \: \phi \left(\frac{y_i-\mu_2}{\tau}\right)} & \text{ if } j = 2
		\end{cases}.
		\]
		
		\item Compute $c_j = \sum_{i=1}^n \one_{\{j\}}\left(z_i\right)$ and $\bar y_j = c_j^{-1} \sum_{i=1}^n y_i \one_{\{j\}}\left(z_i\right) $. Then independently generate:
		\begin{enumerate}
			\item $p$ from Beta($c_1+1, c_2+1$)
			\item $\mu_j$ from $\displaystyle \N\left(\frac{c_j}{c_j+1} \bar y_j, \frac{\tau^2}{c_j+1}\right)$ for $j = 1, 2.$
		\end{enumerate}
	\end{enumerate}
	\caption{The Mixture DA (MDA) Gibbs Sampler}
	\label{algo_mda_gs}  
	
\end{algorithm}

Note that, although the parameter  vector $\thetab = (\mub, p)$ in the MDA algorithm lives on the infinite space $\X = \R^2 \times [0,1]$, the latent data $\zb = (z_1, \cdots, z_n)$ lives on the finite state space $\Z = \{1,2\}^n$. We shall, therefore, study the spectrum of the Markov operator $K^*$ associated with the latent data $\zb$ (see Remark~\ref{rem_z_finite}). The Markov transition density associated with the operator $K^*$ is given by 
\begin{equation} \label{mtd_nmda}
k^*(\zb, \zb') = \int_\X \pi(\zb' \mid \thetab, \yb) \: \pi(\thetab \mid \zb, \yb) \: d\thetab
\end{equation}
which is, of course, not available in closed form, because of the denominators of the product terms in $\pi(\zb' \mid \thetab, \yb)$. However, $\pi(\zb' \mid \thetab, \yb)$ is avaiable in closed form and $\pi(\thetab \mid \zb, \yb)$ is easy to sample from.  Thus, the MCRMA method can be applied here, and the estimates are guaranteed to be strongly consistent (Remark~\ref{rem_z_finite}). Recall that MCRMA requires evaluation of the stationary density $\pi(\zb \mid \yb)$ for $\zb$. Straight-forward calculations show that 
\begin{equation} \label{pi_z_mda}
\pi(\zb \mid \yb) \propto B(c_1+1, c_2+1) \: \prod_{j=1}^2 \left[(1+c_j)^{-\frac12}\: \exp\left(\frac{c_j^2 \: \bar y_j^2}{2 \tau^2 (1+c_j) } \right) \right].
\end{equation}
Since the normalizing constant that makes (\ref{pi_z_mda}) a density is not available in closed form, we shall, therefore use Algorithm~\ref{algo_mcrma_ext}.

\subsubsection{FS Algorithm} \label{sec_fs_algo}
Along with MDA, \citet{hobert:roy:robert:2011} consider another Gibbs sampling algorithm, called the \fs   (FS) algorithm \citep{fruhwirth:sylvia:2001}, which is obtained by inserting an intermediate random label switching step in between the two steps of MDA. The key idea here is to randomly permute the labels of the latent variable $\zb$ obtained in the first step of MDA, before moving on to the second step. That is, after generating $\zb$ from the conditional distribution of $\zb \mid \thetab$, instead of drawing the next state of $\thetab$ directly from $\thetab \mid \zb$, here one first randomly permutes the labels of components in the mixture model, and switches the labels of $\zb$ according to that random permutation to get $\zb'$. The next state of $\thetab$ is then generated from the conditional distribution of $\thetab \mid \zb'$. In the context of two component mixture models, the intermediate step $\zb \rightarrow \zb'$ simply entails performing a Bernoulli experiment with probability of success 0.5. One then takes $\zb' = \bar \zb$ or $\zb' = \zb$ according as whether the Bernoulli experiment results in a success or a failure, where $\bar \zb$ denotes $\zb$ with its 1's and 2's flipped.

The computationally inexpensive label switching step in the FS algorithm is introduced to force movement between the symmetric modes of the posterior density $\pi(\thetab \mid \yb)$. This makes the  FS algorithm superior to the MDA algorithm in terms of  convergence and mixing. The FS algorithm is in fact a member of a wide class of so-called \emph{sandwich algorithms}, where one inserts an inexpensive intermediate \emph{meat} step inside the two \emph{bread} steps of a DA algorithm to achieve better convergence and mixing. In fact, when the operator associated with a Markov chain is trace class, the spectrum of a sandwich chain is guaranteed to be bounded above by that of the parent DA chain, with at least one strict inequality \citep{khare:hobert:2011}. In the current setting, since the MDA Markov chain is trace class (the latent state space is finite), the FS chain is therefore guaranteed to be better mixing than the DA chain. To visualize or quantify the improvement, however, information on the actual spectra of the two chains is needed. Clearly, the spectrum of the FS chain, similar to the MDA chain, can neither be evaluated analytically nor can be estimated in  exact RMA method of \cite{adamczak:bednorz:2015}, since the associated Markov transition density is not available in closed form. Instead, we make use of MCRMA estimation, as described in the following.

The  usual sandwich representation (with three steps - two \emph{bread} steps similar to MDA and one additional \emph{meat} step) of the FS algorithm does not furnish a Markov transition density in the form (\ref{mtd_da_general}); however, following \citet[Section 5.2]{hobert:roy:robert:2011}, one can represent the algorithm as a DA algorithm with different joint (and hence, different full conditional), but same marginal posterior distributions as MDA. In particular, the DA representation of the FS algorithm  entails iterative random generation of $\zb$ from the conditional density $\pit(\zb \mid \thetab, \yb)$, and  $\thetab$ from $\pit(\thetab \mid \zb, \yb)$, where
\begin{gather*}
\pit(\thetab \mid \zb, \yb) = \int_{\Z} \pi(\thetab \mid \zb', \yb) \: r(\zb, \zb') \:d\zeta(\zb') \\
\text{and } \pit(\zb \mid \thetab, \yb) =  \frac{\pi(\zb \mid \yb)}{\pi(\thetab \mid \yb)} \int_{\Z} \pi(\thetab \mid \zb') \: r(\zb, \zb') \:d\zeta(\zb')
\end{gather*} 
and $r(\zb, \zb')$ is the transition density (with respective to the counting measure $\zeta$) associated with the intermediate \emph{meat} step $\zb \rightarrow \zb'$ in the sandwich representation of FS. Since the intermediate \emph{meat} step is that of  random label switchings, we have
\[
r(\zb, \zb') = \frac12 \one_{\{\zb\}}(\zb') + \frac12 \one_{\{\bar \zb\}}(\zb'),
\]
where $\bar \zb$ is $\zb$ with its 1's and 2's flipped, and therefore,
\begin{align*}
\pit(\thetab \mid \zb, \yb) = \frac12 \pi(\thetab \mid \zb, \yb) + \frac12 \pi(\thetab \mid \bar \zb, \yb) \numbereqn  \label{pi_tilde_theta_given_z}
\end{align*}
and
\begin{align*}
\pit(\zb \mid \thetab, \yb) = \frac12 \pi(\zb \mid \thetab, \yb) + \frac12 \pi(\bar \zb \mid \thetab, \yb) \numbereqn  \label{pi_tilde_z_given_theta}
\end{align*}
with (\ref{pi_tilde_z_given_theta}) being a consequence  of (\ref{pi_z_mda}). Note that $\pit(\thetab \mid \zb, \yb)$ and $\pit(\zb \mid \thetab, \yb)$ are just half-half mixtures of standard densities, and  can be easily sampled. The DA form of the FS algorithm is formally displayed in Algorithm~\ref{algo_fs_gs}.   

\begin{algorithm}  [h]  
	Given a starting value $(\mub_0, p_0)$ for the parameter vector $\thetab = (\mub, p)$, iterate between the following two steps:	
	\begin{enumerate}[label = (\roman*)]

		\item 	Draw independent $z_1', \cdots, z_n'$ with $z_i'$ having a categorical probability distribution with categories 1 and 2, and   
		\[		
		P(z_i' = j) = \begin{cases}
		\frac{ p \: \phi \left(\frac{y_i-\mu_1}{\tau}\right)}{ p \: \phi \left(\frac{y_i-\mu_1}{\tau}\right) + (1-p) \: \phi \left(\frac{y_i-\mu_2}{\tau}\right)} & \text{ if } j = 1 \\
		\frac{ (1-p) \: \phi \left(\frac{y_i-\mu_2}{\tau}\right)}{ p \: \phi \left(\frac{y_i-\mu_1}{\tau}\right) + (1-p) \: \phi \left(\frac{y_i-\mu_2}{\tau}\right)} & \text{ if } j = 2
		\end{cases},
		\]
		and call $\zb' = (z_1', \cdots, z_n')$.	Now perform a Bernoulli experiment with probability of success 0.5. If the experiment results in a success, define $\zb = \zb'$, or else define $\zb = \bar{\zb'}$, where $\bar{\zb'}$ is $\zb'$ with its 1’s and 2’s flipped.
		
		\item Perform another Bernoulli experiment with probability of success 0.5. Define $\zb^* = \zb$ if the experiment results in a success, and $\zb^* = \bar \zb$ otherwise. Compute $c_j = \sum_{i=1}^n \one_{\{j\}}\left(z_i^*\right)$ and $\bar y_j = c_j^{-1} \sum_{i=1}^n  y_i \one_{\{j\}}\left(z_i^*\right)$ for $j=1,2$. Then independently generate:
		\begin{enumerate}
			\item $p$ from Beta($c_1+1, c_2+1$)
			\item $\mu_j$ from $\displaystyle \N\left(\frac{c_j}{c_j+1} \bar y_j, \frac{\tau^2}{c_j+1}\right)$ for $j = 1, 2.$
			
		\end{enumerate}
	\end{enumerate}
	\caption{The \fs (FS) Gibbs Sampling algorithm (in the DA form)}
	\label{algo_fs_gs}  
	
\end{algorithm}

Similar to the MDA case, the spectrum of the Markov operator associated with the $\thetab$ sub-chain of an FS Markov chain can be studied through that of the Markov operator $\tilde K^*$ corresponding to the $\zb$ sub-chain. From the DA representation described in the previous paragraph, it follows that the Markov transition density associated with $\tilde K^*$ can be written as
\begin{equation} \label{mtd_fsa}
\tilde k^*(\zb, \zb') = \int_\X \pit(\zb' \mid \thetab, \yb) \: \pit(\thetab \mid \zb, \yb) \: d\thetab.
\end{equation} 
Owing to the above representation and the facts that $\pit(\zb \mid \thetab, \yb)$ is available in closed form, and $\pit(\thetab \mid \zb, \yb)$ is easy to sample from, one can use the MCRMA method to estimate the spectrum of $\tilde K^*$.

\subsubsection{Simulation Study}

To illustrate the performance of the MCRMA method in estimating the spectra of $K^*$ and $\tilde K^*$ (the MDA and FS Markov operators respectively), we consider a simulated dataset with sample size $n = 20$ from the mixture density $(\ref{2_comp_normmix})$, with $\mu_1 = 0$, $\mu_2 = 0.1$, $p = 0.5$ and fixed $\tau = 0.1$. Then, with $k$-means estimates taken as the starting values, we separately generate 10,000 realizations of MDA and FS Markov chains  after discarding first 20,000 realizations as burn-in from each chain. Then we extract the $\zb$ sub-chains from the two MCMC samples and use them in the MCRMA method to estimate their spectra. Note that the  latent space $\Z$ in both algorithms consist of $2^{20} = 1048576$ states, which means, each of the associated Markov operators corresponds to a  $1048576 \times 1048576$ matrix of transition probabilities. Hence, in order to find the true eigenvalues, one needs to compute the eigenvalues of $1048576 \times 1048576$ matrices, which is practically infeasible even though the state space is finite. However, the MCRMA method can still be applied here to provide estimates, as we discuss in the following.

For each of the two Markov chains, we run 10 separate instances of MCRMA, with number of Markov chain iterations $m = 1000, 2000, \cdots, 10,000$, and  Monte Carlo sample size $N = 5000$, to estimate the eigenvalues, and then create plots similar to Figure~\ref{psw_plots}. Note that, because the latent state space $\Z$ is finite, strong consitency of the MCRMA estimator is automatically ensured, and no relationship between the rate of growth of $N$ and $m$ is required. For each of the two chains, and for each of the 10 MCRMA instances, we record the first 21 estimated eigenvalues  (including the trivial eigenvalue $\lambda_0 = 1$) and plot them in Figure~\ref{n_20_mixnorm_plots}. Figure~\ref{n_20_mixnorm_all} shows all 21 eigenvalues obtained from each of the 10 MCRMA instances and for each Markov chain, plotted as 20 curves. The second, third, fourth, fifth and sixth largest estimated eigenvalues, viewed  as functions of the MCRMA iteration size $m$, are shown in Figures~\ref{n_20_mixnorm_lambda_1} through \ref{n_20_mixnorm_lambda_5}. From these plots, it appears that the MCRMA estimates for the MDA chain show some instability. Most of these estimates eventually stabilize, but it is interesting to note that the behavior of FS spectrum estimates is much more stable than the corresponding MDA spectrum estimates, even for smaller $m$'s. This is due to the fact that the FS chain is better mixing than the MDA chain, which in turn, is a consequence of the theoretically proven fact that the true spectrum of the MDA chain dominates that of the FS chain (see Section~\ref{sec_fs_algo}). As clearly displayed by the plots, the MCRMA estimates also exhibit this dominance, and provides us a visual idea of  the gains achieved in the FS algorithm in terms of convergence and mixing. 

\begin{figure}[htpb]
	\centering	
	\subfloat[The largest 21 eigenvalues of the MDA and FS chains. There are 10 curves for each Markov chain, each corresponding to the choices $m = 1000, \cdots, 10,000$ in the MCRMA algorithm.]{\includegraphics[height= 2.1in, width = 2.1in]{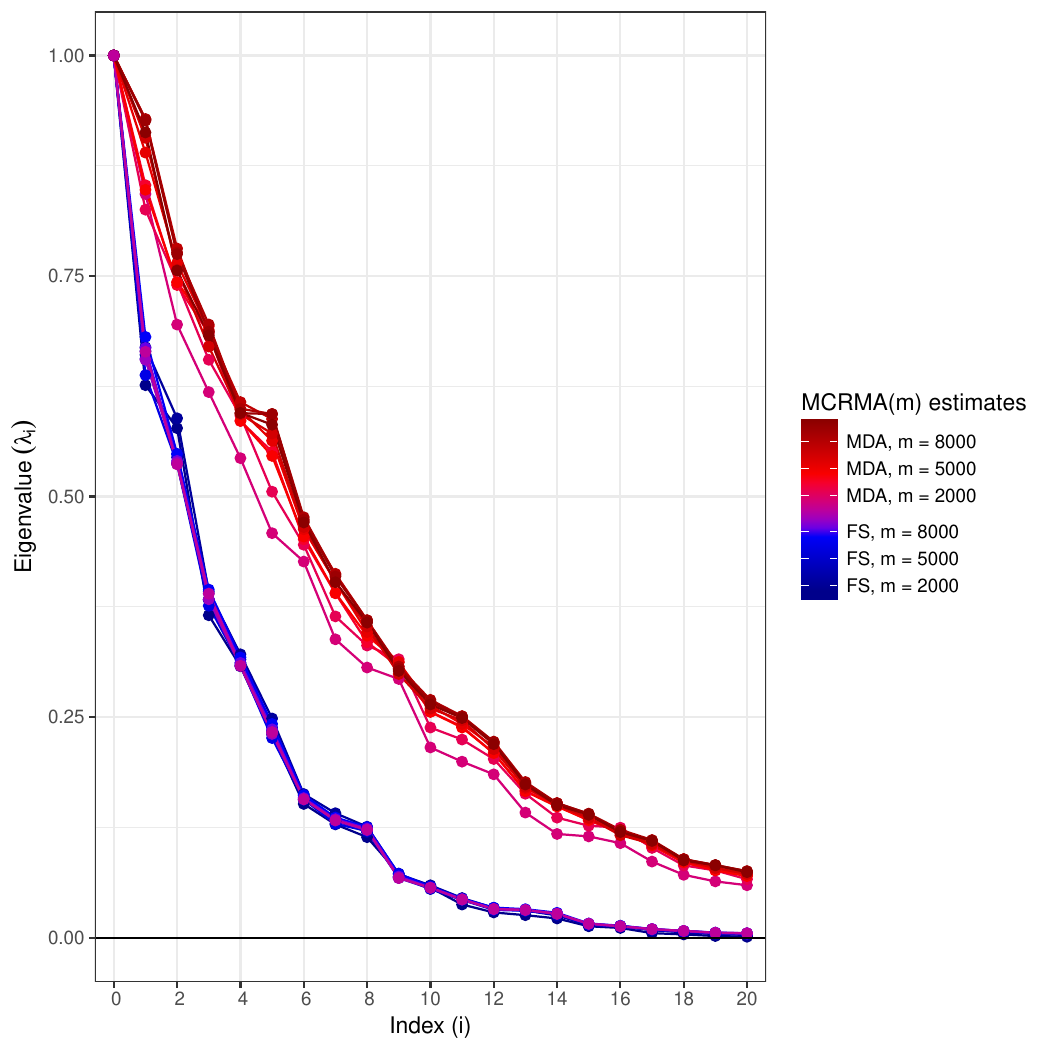} \label{n_20_mixnorm_all}}
	\qquad
	\subfloat[Second largest eigenvalues of the MDA and FS algorithm as functions of iterations $m$ in the MCRMA algorithm.]{\includegraphics[height= 2.1in, width = 2.1in]{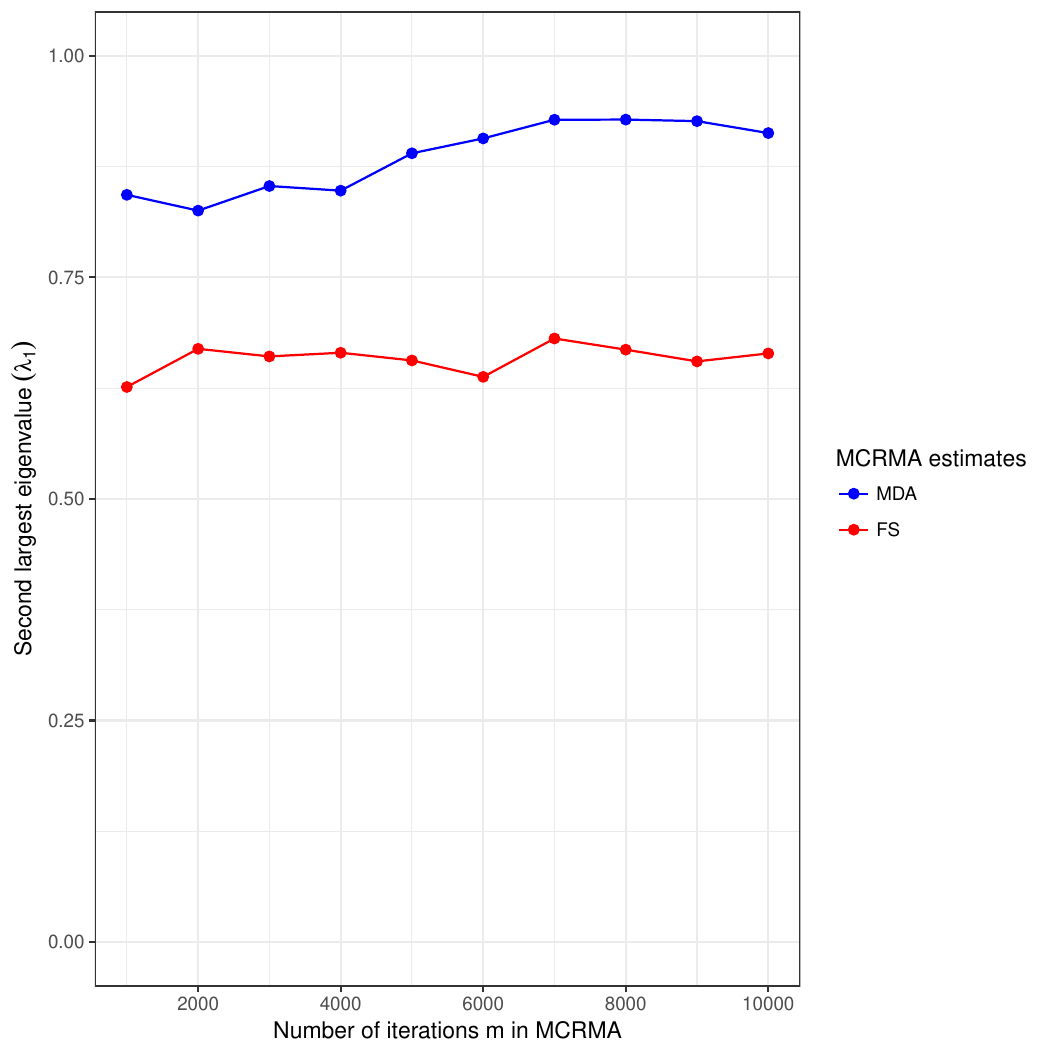} \label{n_20_mixnorm_lambda_1} }\\
	\subfloat[Third largest eigenvalues of the MDA and FS algorithm as functions of iterations $m$ in the MCRMA algorithm.]{\includegraphics[height= 2.1in, width = 2.1in]{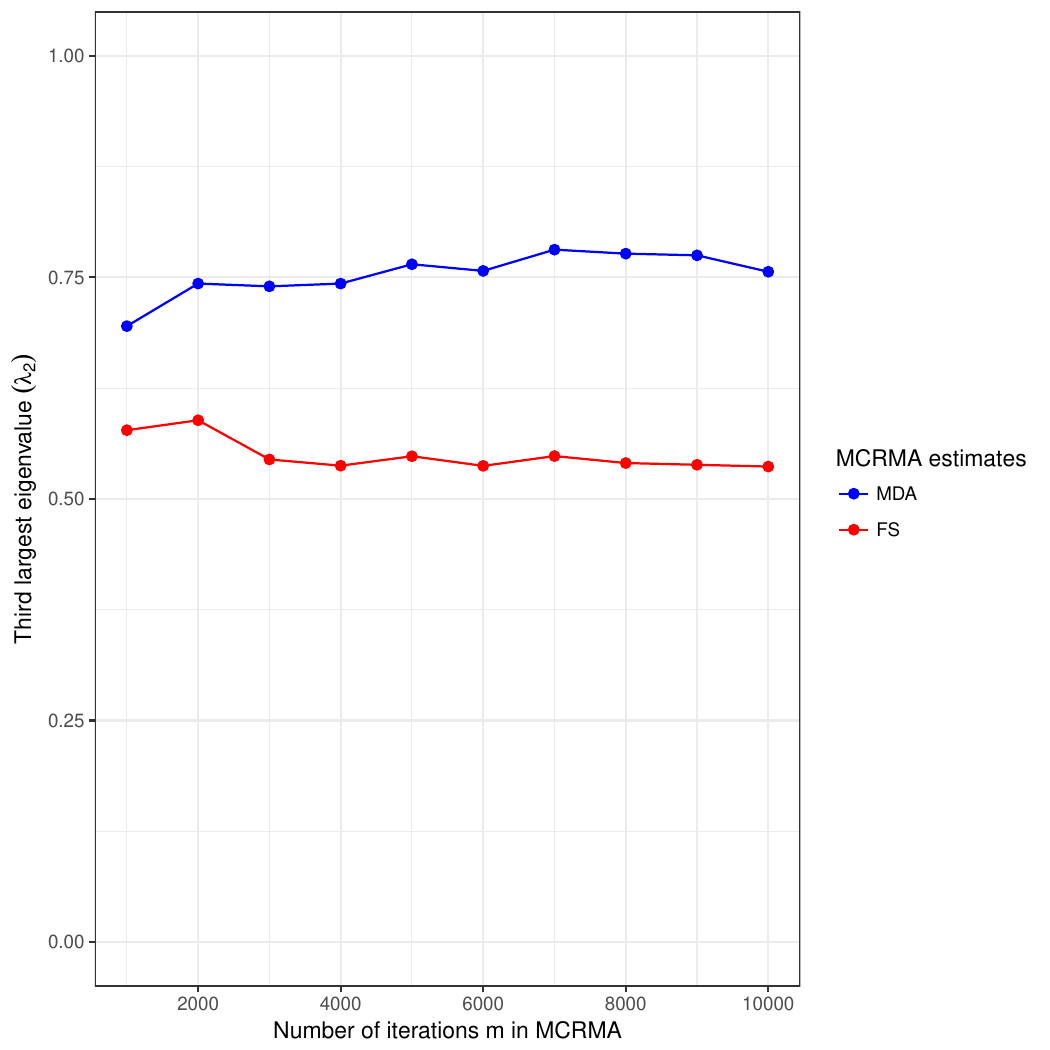} \label{n_20_mixnorm_lambda_2}}
	\qquad
	\subfloat[Fourth largest eigenvalues of the MDA and FS algorithm as functions of iterations $m$ in the MCRMA algorithm.]{\includegraphics[height= 2.1in, width = 2.1in]{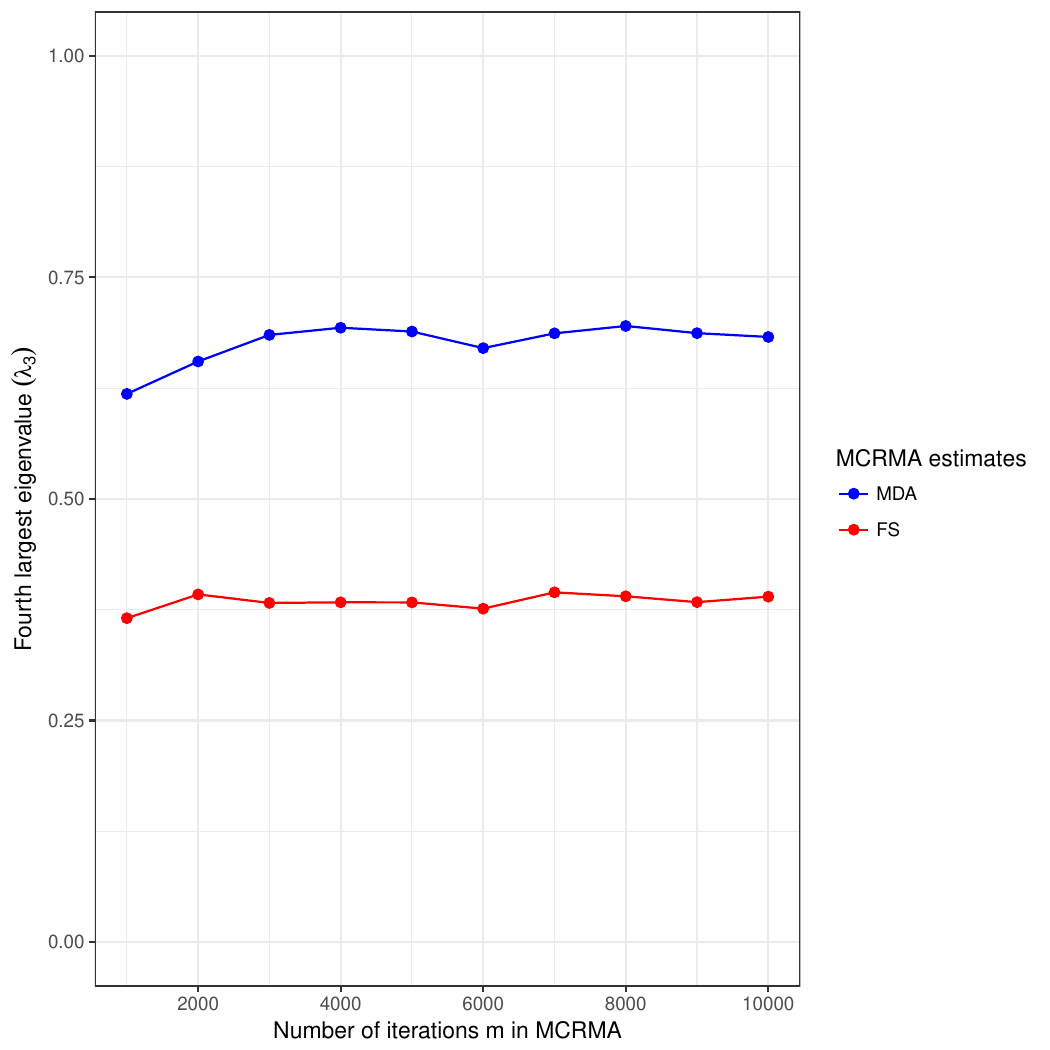} \label{n_20_mixnorm_lambda_3}} \\ 
	\subfloat[Fifth largest eigenvalues of the MDA and FS algorithm as functions of iterations $m$ in the MCRMA algorithm.]{\includegraphics[height= 2.1in, width = 2.1in]{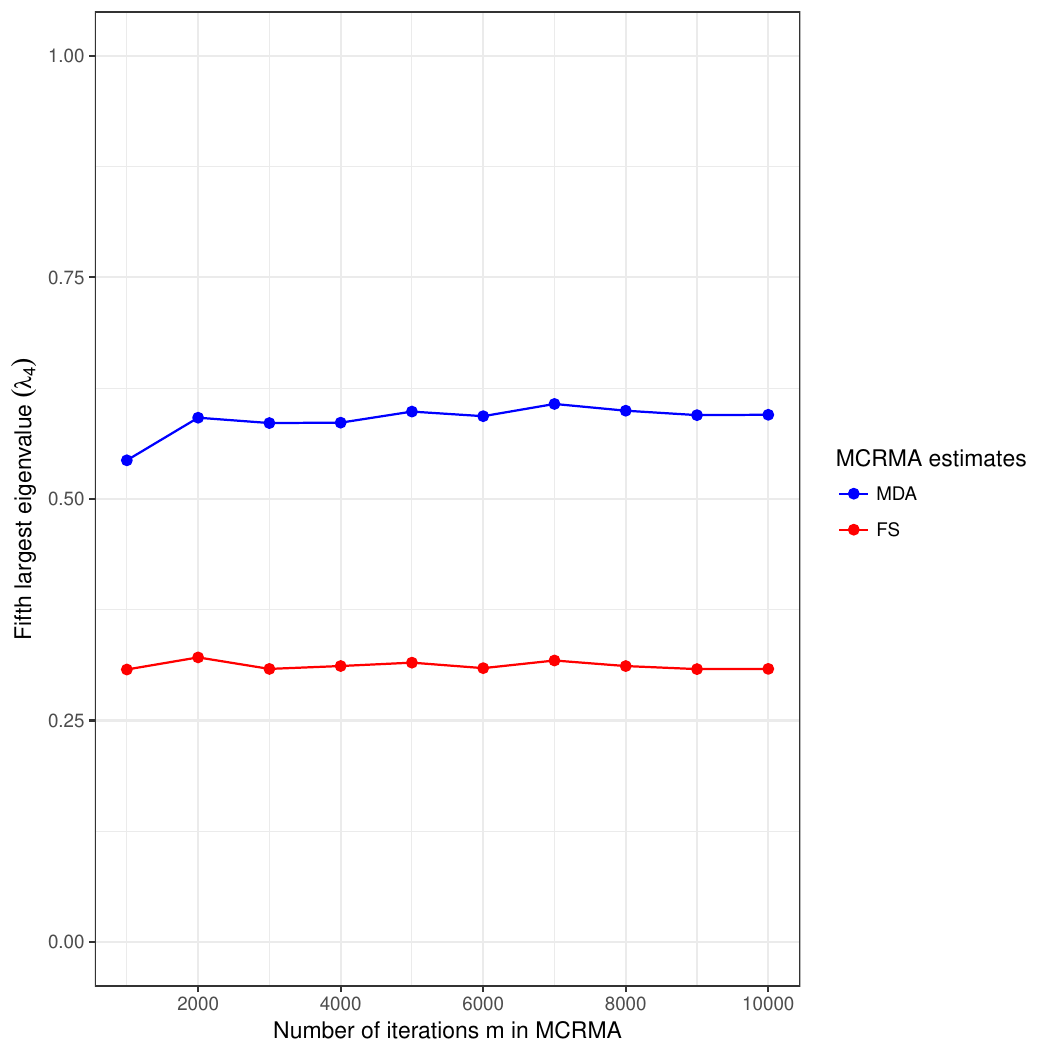} \label{n_20_mixnorm_lambda_4}}
	\qquad
	\subfloat[Sixth largest eigenvalues of the MDA and FS algorithm as functions of iterations $m$ in the MCRMA algorithm.]{\includegraphics[height= 2.1in, width = 2.1in]{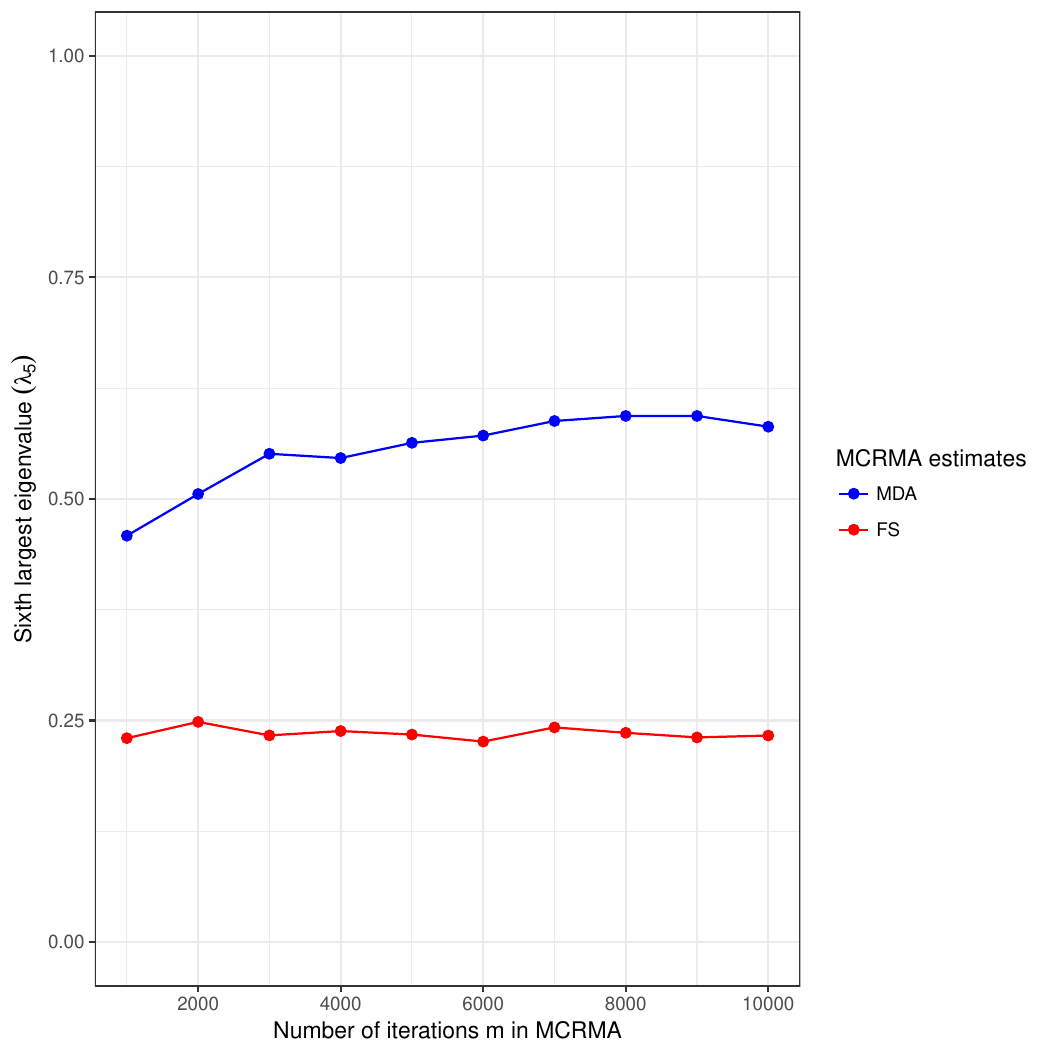} \label{n_20_mixnorm_lambda_5}}
	\\
	\caption{Eigenvalue estimates for the MDA and FS Markov chains using the MCRMA algorithm.}
	\label{n_20_mixnorm_plots}
\end{figure}

\begin{remark}
	It should be noted that the performance of MCRMA can be poor when the state space of the Markov chain is \emph{finite, but extremely large}, especially if the chain is poorly mixing. Although the spectrum estimates are guaranteed to converge to the truth for any Markov chain with a finite state space, in practice however, the value of $m$ required for a reasonable approximation can be too large to handle (recall that we need to find the eigenvalues of an $m \times m$ matrix to obtain the eigenvalue estimates).  In our case, we tried running the MCRMA algorithm for MDA and FS chains with $n = 30$ (more than a billion states), but the estimates did not show enough signs of convergence with $m \leq 10,000$. 
\end{remark}

\section{Discussion} \label{sec_discuss}

\noindent
As stated in the introduction, while bounding or estimating the spectral gap (or equivalently, the second largest eigenvalue) has received a lot of attention over the past three decades, very few methods have been proposed for accurately estimating the entire spectrum of Markov chains arising in modern applications. Building on the work of \citet{koltchinskii:gine:2000}, \citet{adamczak:bednorz:2015} develop an elegant method to estimate the spectrum of a trace class 
Markov operator using random matrix approximations. However, this method requires closed form expressions for the Markov transition density (and the stationary density), which is often unavailable in practice. We consider the general class of Markov chains arising from trace class 
Data Augmentation algorithms, where the transition density can typically only be expressed as an intractable integral. We develop a Monte Carlo based random matrix approximation method to consistently estimate the entire spectrum of the corresponding DA Markov operators. 

The particular integral form of the DA transition density in (\ref{mtd_da_general}) was critical in the development of our method. This form enables us to provide Monte Carlo based approximations for the intractable Markov transition density. We are able to show in Theorem \ref{thm_mcrma_consist} that the eigenvalues of the subsequently constructed random matrix still consistently estimates the desired spectrum. Methods to approximate general intractable transition densities, which may not necessarily have the integral form in (\ref{mtd_da_general}), have been proposed in the literature, see for example \citet{athreya:atuncar:1998}. The next obvious question in this line of research is: if the intractable 
transition densities appearing in the random matrix approximation of \citet{adamczak:bednorz:2015} are replaced by approximations based on these methods, does that still lead to consistent estimates of the desired spectrum? This is a challenging question, and will be investigated in future research.

\bibliographystyle{plainnat}
\bibliography{eigvalest_citations}

\pagebreak

\section*{Appendix}
\renewcommand{\theequation}{A\arabic{equation}}

This Appendix provides proofs of theorems and lemmas introduced in the original text. The referred equations in this Appendix are labeled as (A1), (A2) etc., whereas labels such as (1), (2), etc. refer to equations from the main text. A proof of an assertion from the main text ends with a $\square$, whereas a $\blacksquare$ marks the end of a proof of an assertion introduced in this Appendix. Proofs are organized by the sections in which they appear in the main text.

\begin{appendix}

\section{Proofs for  Section~\ref{sec_rma_exact}} \label{supp_sec_rma_exact}

\begin{proof}[Proof of Lemma~\ref{lemma_eqv}]
    Note that $h$ is necessarily non-negative and measurable and  symmetric in its arguments (since $K$ is self-adjoint). We prove the two implications $(i) \implies (ii)$ and $(ii) \implies (i)$ separately.
    
    \subsubsection*{$(i) \implies (ii)$:}
    Let there exist $F : \X \rightarrow \R$ such that $\pi F^2 < \infty$ and $h(x,x')  \leq F(x)F(x')$ for all $x, x' \in \X$. This means $h(x, x) \leq F(x)^2$ for all $x \in \X$. Therefore, 
    \[
    \int_{\X} \frac{k(x,x)}{\pi(x)}\:d\Pi(x) =\int_{\X} h(x,x)\:d\Pi(x)  \leq \int_{\X} F(x)^2 \:d\Pi(x) = \pi F^2 < \infty,
    \] 
    which, from (\ref{tr_cond}), implies that $K$ is trace class.
    
    \subsubsection*{$(ii) \implies (i)$:}
    Let $K$ be trace class. This means $K$ is also Hilbert Schmidt, and therefore from (\ref{hs_cond}),
    \[
    \int_{\X} \int_{\X} \: h(x, x')^2  \: d\Pi(x) \:d\Pi(x') = \int_{\X} \int_{\X} \: \left[\frac{k(x, x')}{\pi(x')}\right]^2  \: d\Pi(x) \:d\Pi(x')   < \infty.
    \]
    We shall prove the existence of $F$ by construction. Let us first denote by $(\lambda_m)_{m = 0}^\infty$  the sequence of eigenvalues of $K$. Then by the trace class property of $K$,  $\sum_{m = 0}^\infty \lambda_m = \tr K < \infty$,  and  by spectral theorem (see, e.g. \citet{jorgens:1982}), $h(x, x') = \sum_{m = 0}^\infty \lambda_m \varphi_m(x) \varphi_m(x')$ for all $x, x' \in \X$, where $(\varphi_m)_{m = 0}^\infty$ is an orthonormal basis of $L^2(\pi)$. Hence, for all $x, x' \in \X$
    \[
    h(x, x') = \sum_{m = 0}^\infty \left(\sqrt{\lambda_m} \varphi_m(x)\right) \: \left(\sqrt{\lambda_m} \varphi_m(x')\right) \leq \sqrt{ \sum_{m = 0}^\infty \lambda_m \varphi_m(x)^2} \: \sqrt{ \sum_{m = 0}^\infty \lambda_m \varphi_m(x')^2} = F(x) F(x'), 
    \]
    where $F := \sqrt{\sum_{m = 0}^\infty \lambda_m \varphi_m^2}$ and the inequality is due to Cauchy-Schwarz. The proof is completed by noticing that $\pi F^2 = \pi( \sum_{m = 0}^\infty \lambda_m \varphi_m^2) = \sum_{m = 0}^\infty \lambda_m \: \pi \varphi_m^2  = \sum_{m = 0}^\infty \lambda_m < \infty$.
\end{proof}

\section{Proofs for  Section~\ref{sec_mcrma}} \label{supp_sec_mcrma}

\begin{proof}[Proof of Theorem~\ref{thm_mcrma_consist}]
    On the outset, note that, by the triangle inequality and then (\ref{hoff_wiel_ineq}), we have 
    \begin{align} \label{del2_upper}
    \delta_2\left(\sp(\Hmn), \sp(K)\right)  \leq \left\|\Hmn - H_m \right\|_\hs + \delta_2\left(\sp(H_m), \sp(K)\right)
    \end{align}
    Since $\delta_2\left(\sp(H_m), \sp(K)\right) \rightarrow 0$ almost surely (Theorem~\ref{thm_rma_consis}) as $m \rightarrow \infty$, therefore, we only need to show the almost sure or in probability convergence (to zero) of the first term on the right hand side of the above inequality, as $N \rightarrow \infty$ and $m \rightarrow \infty$. We shall prove this convergence separately for the cases where $\X$ is finite and infinite.
    
    \subsubsection*{$\ref{case_finite}:$ $\X$ is finite}
    Note that, for any matrix $A = (a_{ij})\in \R^{m \times m}$,  $\|A\|_{\hs} = \left(\sum_i \sum_j a_{ij}^2\right)^{1/2} \leq m \max_{i,j} |a_{ij}|$. Therefore,
    \begin{align*}
    &\sup_{m \geq 1} \left\|\Hmn - H_m \right\|_\hs \\
    \leq & \sup_{m \geq 1} \: \max_{0 \leq j \neq j' \leq m-1} \frac{\left|\hat{k}_N(X_j, X_{j'}) - k(X_j, X_{j'})\right|}{\pi(X_{j'})} \\
    \leq & \max_{x, x' \in \X} \frac{\left|\hat{k}_N(x, x') - k(x, x')\right|}{\pi(x')} 
    \leq \sum_{x \in \X} \sum_{x' \in \X} \frac{\left|\hat{k}_N(x, x') - k(x, x') \right|}{\pi(x')} \rightarrow 0 \text{, almost surely} 
    \numbereqn \label{finite_hhat_minus_h}
    \end{align*}
    as $N \rightarrow \infty$, since the (double) summation includes a finite number of terms, with each term converging almost surely to zero (Monte Carlo convergence). Thus, combining (\ref{del2_upper}), (\ref{finite_hhat_minus_h}) and Theorem~\ref{thm_rma_consis}, we have
    \[
    \delta_2\left(\sp\left(\Hmn\right), \sp(K)\right) \rightarrow 0 \text{ almost surely as } m \rightarrow \infty \text{ and } N \rightarrow \infty.
    \]

    \subsubsection*{$\ref{case_infinite}:$ $\X$ is infinite:}
    By the variance condition~\ref{var_condn}, we have,
    \[
    M := \sup_{m \geq 1} \: \max_{0 \leq j < j' \leq m-1} \int_{\X} \int_{\X} \int_{\Z}   \left( \frac{ f_{X \mid Z}\left( x_{j'}\mid z \right) }{\pi(x_{j'})} \right)^2  f_{Z \mid X}\left(z \mid x_{j} \right) q_{jj'}(x_j, x_{j'}) \:d\zeta(z) \:d\nu(x_j)\: d\nu(x_{j'}) < \infty.
    \]
    
    \noindent Let $\epsilon > 0$ be arbitrary. Since the Hilbert Schmidt norm for matrices are the same as the Frobenius norm, and since $\hat{h}_N(\cdot, \cdot)$ $h(\cdot, \cdot)$ are both symmetric in its arguments, therefore, we have
    \begin{align*}
    \left\|\Hmn - H_m \right\|_\hs ^2 &= \frac{2}{m^2} \underset{0 \leq j < j' \leq m-1}{\sum \sum} \left(\hat{h}_N(X_j, X_{j'}) - h(X_j, X_{j'}) \right)^2 \\
    & = \frac{2}{m^2} \underset{0 \leq j < j' \leq m-1}{\sum \sum} \left( \frac{\hat{k}_N(X_j, X_{j'}) - k(X_j, X_{j'})}{\pi(X_{j'})} \right)^2.
    \end{align*}
    
    \noindent Therefore, by Markov inequality,
    \begin{align*}
    & \quad P\left(\left\|\Hmn - H_m \right\|_\hs > \epsilon \right) \\
    & \leq \cfrac{\E \left\|\Hmn - H_m \right\|_\hs ^2}{\epsilon^2} \\
    &= \frac{2}{m^2 \: \epsilon^2} \underset{0 \leq j < j' \leq m-1}{\sum \sum} \E \left( \frac{\hat{k}_N(X_j, X_{j'}) - k(X_j, X_{j'})}{\pi(X_{j'})} \right)^2 \\
    &= \frac{2}{m^2 \: \epsilon^2} \underset{0 \leq j < j' \leq m-1}{\sum \sum} \E \left\lbrace  \frac{\E  \left[ \left. \left( \hat{k}_N(X_j, X_{j'}) - k(X_j, X_{j'}) \right)^2  \right|  \Phi_m \right]}{\pi(X_{j'})^2} \right\rbrace  \\
    &= \frac{2}{m^2 \: \epsilon^2} \underset{0 \leq j < j' \leq m-1}{\sum \sum} \E \left\lbrace  \frac{\var  \left( \left.  \hat{k}_N(X_j, X_{j'}) \right|  \Phi_m \right)}{\pi(X_{j'})^2} \right\rbrace  \\
    &= \frac{2}{m^2 \: \epsilon^2} \underset{0 \leq j < j' \leq m-1}{\sum \sum} \E \left\lbrace  \frac{\var  \left( \left.  f_2\left( X_{j'}|Z^{(j)}_1 \right) \right|  \Phi_m \right)}{N\:\pi(X_{j'})^2} \right\rbrace  \\
    &\leq \frac{2}{N m^2 \: \epsilon^2  } \underset{0 \leq j < j' \leq m-1}{\sum \sum} \E \left\lbrace  \frac{\E  \left( \left.  f_2\left( X_{j'}|Z^{(j)}_1 \right)^2 \right|  \Phi_m \right)}{\pi(X_{j'})^2} \right\rbrace  \\
    &\leq \frac{2}{N m^2 \: \epsilon^2  } \: \frac{m(m-1)}{2} \: \max_{0 \leq j < j' \leq m-1} \E \left\lbrace  \frac{\E  \left( \left.  f_2\left( X_{j'}|Z^{(j)}_1 \right)^2 \right|  \Phi_m \right)}{\pi(X_{j'})^2} \right\rbrace  \\
    &\leq \frac{1}{N \: \epsilon^2} \: \sup_{m \geq 1} \: \max_{0 \leq j < j' \leq m-1} \int_{\X} \int_{\X} \int_{\Z}  \left( \frac{ f_{X \mid Z} \left( x_{j'}\mid z \right) }{\pi(x_{j'})} \right)^2 \\
    & \qquad \qquad \qquad \qquad \qquad \qquad \qquad \qquad  \qquad  f_{Z \mid X} \left(z \mid x_{j} \right) q_{jj'}(x_j, x_{j'}) \:d\zeta(z) \:d\nu(x_j)\: d\nu(x_{j'}) \\
    &= \frac{1}{N \: \epsilon^2}\: M = \frac{M}{\epsilon^2}\: \frac{1}{N(m)}.
    \end{align*}
    
    \noindent Therefore, 
    \begin{enumerate}[label = (\roman*)]
        \item if $\frac{1}{N(m)} \rightarrow 0$ as $m \rightarrow \infty$, then 
        \[
        \varlimsup_{m \rightarrow \infty} P\left(\left\|\Hmn - H_m \right\|_\hs  > \epsilon \right) \leq  \lim\limits_{m \rightarrow \infty} \frac{M}{\epsilon^2}\: \frac{1}{N(m)} = 0,
        \]
        since $M < \infty$. This means $\left\|\Hmn - H_m \right\|_\hs \xrightarrow{P} 0$, which, by (\ref{del2_upper}) implies 
        \[
        \delta_2\left(\sp(\Hmn), \sp(K)\right) \xrightarrow{P} 0 \text{ as } m \rightarrow \infty.
        \]
        
        \item if $\sum_{m=0}^\infty \frac{1}{N(m)} < \infty$ as $m \rightarrow \infty$, then
        \[
        \sum_{m=0}^\infty P\left(\left\|\Hmn - H_m \right\|_\hs  > \epsilon \right) \leq  \frac{M}{\epsilon^2}  \sum_{m=0}^\infty \frac{1}{N(m)} < \infty.
        \]
        which, by the Borel-Cantelli theorem, implies $\left\|\Hmn - H_m \right\|_\hs \rightarrow 0$ almost surely. Therefore, from (\ref{del2_upper}), 
        \[
        \delta_2\left(\sp(\Hmn), \sp(K)\right) \rightarrow 0 \text{ almost surely,  as } m \rightarrow \infty.
        \]
    \end{enumerate}
    
\end{proof}

\begin{proof}[Proof of Theorem~\ref{thm_mcrma_ext_consist}]
    Let us first define $\hat c_m = 1/ \spmax\left( \Smn \right)$ and $\Hmnc = \hat c_m \Smn$. Then 
    \[
    \sp\left(\Hmnc \right) = \frac{\sp\left( \Smn \right)}{\spmax\left( \Smn \right)},
    \]
    which means it will be enough to prove the convergence of $\delta_2\left(\sp\left( \Hmnc \right), \sp(K) \right)$ to zero. 
    
    \noindent As in (\ref{Hmnhat}), define $\hat h_N(X_j, X_{j'}) = \hat k_N \left( X_{j}, X_{j'} \right) / \pi(X_{j'})$ for $j < j'$ and set  $\hat{h}_N(X_j, X_{j'}) = \hat{h}_N(X_{j'}, X_{j})$ for $j > j'$, and construct the matrix 
    \[
    \Hmn = \frac{1}{m} \left((1-\delta_{jj'})\: \hat h_N(X_j, X_{j'})\right)_{0 \leq j,  j' \leq m-1}.
    \]
    Then, because $\pi(\cdot) = \eta(\cdot)/c$, we have $\Hmn = c \: \Smn$. This implies
    \[
    \hat c_m = \frac{1}{\spmax\left( \Smn \right)} = \frac{c}{\spmax\left(c\: \Smn \right)} = \frac{c}{\spmax \left(c\: \Hmn \right)}.
    \]
    Because $\max \sp \left(c\: \Hmn \right) \rightarrow \max \sp (K) = 1$ (applying continuous mapping theorem on the results of Theorem~\ref{thm_mcrma_consist}), therefore $\hat c_m \rightarrow c$, where the convergences are in almost sure sense under cases \ref{case_finite_ext} and \ref{case_infinite_ext}\ref{mcrma_ext_consist_cond_ii}, and in probability under condition \ref{case_infinite_ext}\ref{mcrma_ext_consist_cond_i}.  Now, by triangle inequality, 
    \[
    \delta_2\left(\sp\left( \Hmnc \right), \sp(K) \right) \leq \delta_2\left( \sp\left( \Hmnc \right), \sp\left( \Hmn \right) \right) + \delta_2\left(\sp\left( \Hmn \right), \sp(K) \right)
    \]
    Because $\delta_2\left(\sp\left( \Hmn \right), \sp(K) \right) \rightarrow 0$ [by Theorem~\ref{thm_mcrma_consist}; almost surely under condition \ref{case_finite_ext} and \ref{case_infinite_ext}\ref{mcrma_ext_consist_cond_ii}, and in probability under condition \ref{case_infinite_ext}\ref{mcrma_ext_consist_cond_i}], so we only need to show the convergence of  
    \[
    \delta_2\left( \sp\left( \Hmnc \right), \sp\left( \Hmn \right) \right).
    \] 
    Observe that
    \begin{align*}
    0 \leq \delta_2\left( \sp\left( \Hmnc \right), \sp\left( \Hmn \right) \right) &= \delta_2\left( \hat c_m \sp\left( \Smn \right), c\: \sp\left( \Smn \right) \right) \\
    & = \left\| \left(\hat c_m \sp\left( \Smn \right)\right)^{\uparrow\downarrow} - \left( c \sp\left( \Smn \right)\right)^{\uparrow\downarrow} \right\|_{\ell_2} \;(\text{from } (\ref{delta2_l2})) \\
    &= |\hat c_m - c| \: \left\| \sp\left( \Smn \right)^{\uparrow\downarrow} \right\|_{\ell_2} \\
    &= \frac{|\hat c_m - c|}{c} \: \delta_2\left( \sp\left( \Hmn \right), 0 \right) \\
    &\stackrel{(\star)}{\leq} \frac{|\hat c_m - c|}{c}\:  \left[\delta_2\left( \sp\left( \Hmn \right), \sp(K) \right) +  \delta_2\left( \sp(K), 0 \right) \right] \\
    &\stackrel{(\dagger)}{\leq}  \frac{|\hat c_m - c|}{c}\:  \left[\delta_2\left( \sp\left( \Hmn \right), \sp(K) \right) +  \left\| K \right\|_{\hs} \right]. \numbereqn \label{del2_hdiff_upper}
    \end{align*} 
    where $(\star)$ follows from the triangle inequality, and $(\dagger)$ follows from (\ref{hoff_wiel_ineq}). Since $K$ is trace class and hence Hilbert-Schmidt, $\left\| K \right\|_{\hs} < \infty$, and hence by Theorem~\ref{thm_mcrma_consist}, the  sum within the square brackets on the right hand side of (\ref{del2_hdiff_upper}) converges to $\left\| K \right\|_{\hs} < \infty$ as $m \rightarrow \infty$ [in almost sure sense under cases \ref{case_finite_ext} and \ref{case_infinite_ext}\ref{mcrma_ext_consist_cond_ii}, and in probability under condition \ref{case_infinite_ext}\ref{mcrma_ext_consist_cond_i}]. This, together with the convergence of $\hat c_m$ completes the proof. 
\end{proof}

\section{Proofs for  Section~\ref{sec_illus}} \label{supp_sec_illus}
\begin{proof}[Proof of Theorem~\ref{tr_thm_psw}]
    The Markov transition density of $\Phi$ is given by
    \begin{equation} \label{mtd_psw}
    k(\betab, \betab') = \int_{\R_+^n} \pi(\betab'|\wb, \yb) \:\pi(\wb \mid\betab, \yb) \:d\wb 
    \end{equation}
    where $\pi(\betab \mid \wb, \yb)$ and $\pi(\wb \mid\betab, \yb)$ are as given in (\ref{pi_beta_given_w}) and (\ref{pi_w_given_beta}) respectively. Our objective is to establish (\ref{tr_cond}), i.e., to prove
    \begin{equation} \label{tr_cond_psw}
    I = \int_{\R^p} k(\betab, \betab)\: d\betab = \int_{\R^p} \int_{\R_+^n} \pi(\betab \mid \wb, \yb) \:\pi(\wb \mid\betab, \yb) \:d\betab < \infty.
    \end{equation}
    
    \noindent From (\ref{pi_beta_given_w}) it follows that 
    \begin{align*}
    \pi(\betab \mid \wb, \yb) &= \left({2\pi}\right)^{-p/2} \: \left|U^T \Omega(\wb) U + B^{-1}\right|^{1/2} \: \exp \left[-\frac12 (\betab - \Sigma(\wb) \mub)^T \Sigma(\wb)^{-1}(\betab - \Sigma(\wb) \mub) \right] \\
    & \stackrel{(\star)}{\leq} C_1 \: \left(\sum_{i=1}^{n}w_i + 1\right)^{p/2} \: \exp\left[-\frac12 \left( \betab^T \Sigma(\wb)^{-1} \betab - 2\betab^T\mub + \mub^T \Sigma(\wb) \mub \right) \right] \\
    & \stackrel{(\star \star)}{\leq} C_1 \: \left(\sum_{i=1}^{n}w_i + 1\right)^{p/2}\: \exp\left(-\frac12 \betab^T B^{-1} \betab \right) \: \exp(\betab^T\mub)  \numbereqn \label{pi_beta_upper}
    \end{align*}
    where $C_1$ is a  constant,  $(\star)$ follows from Proposition~\ref{prop_det_ineq}, and $(\star \star)$ follows from the facts that $\Sigma(\wb)^{-1} = \left(U^T \Omega(\wb) U + B^{-1}\right) \geq B^{-1}$ and $\mub^T \Sigma(\wb) \mub \geq 0$.\\
    
    \noindent Again, from (\ref{pi_w_given_beta}) we get 
    \begin{align*}
    \pi(\wb \mid\betab, \yb) &= \prod_{i=1}^{n} \left\lbrace  \cosh\left(\frac{|\ub_i^T\betab \mid }{2}\right) \: \exp\left[-\frac12 (\ub_i^T\betab)^2 w_i\right] \: \gt(w_i) \right\rbrace \\
    & \leq \prod_{i=1}^{n} \left\lbrace  \cosh\left(\frac{|\ub_i^T\betab \mid }{2}\right) \: \gt(w_i)\right\rbrace  \\
    & \leq \prod_{i=1}^{n} \left\lbrace  \exp\left(\frac{|\ub_i^T\betab \mid }{2}\right) \: \gt(w_i)\right\rbrace  = \exp\left(\frac12 \sum_{i=1}^n \left|\ub_i^T\betab \right| \right) \prod_{i=1}^{n} \gt(w_i)
    \numbereqn \label{pi_w_upper}
    \end{align*} 
    where the last inequality follows from the fact that $\cosh(u) = \frac12 (e^u + e^{-u}) \leq e^u$, if $u \geq 0$.
    
    \noindent Therefore, from (\ref{mtd_psw}), (\ref{pi_beta_upper}) and (\ref{pi_w_upper}) we get
    \begin{align*}
    I
    &\leq \int_{\R^p} \int_{\R_+^n} C_1 \: \left(\sum_{i=1}^{n}w_i + 1\right)^{p/2}\: \exp\left(-\frac12 \betab^T B^{-1} \betab \right) \: \exp(\betab^T\mub)  \\
    &\qquad \qquad \:\exp\left(\frac12 \sum_{i=1}^n \left|\ub_i^T\betab \right| \right) \prod_{i=1}^{n} \gt(w_i) \:d\wb \:d\betab \\
    &= C_1 \int_{\R^p} \exp\left(-\frac12 \betab^T B^{-1} \betab \right) \: \exp(\betab^T\mub)  \:\exp\left(\frac12 \sum_{i=1}^n \left|\ub_i^T\betab \right| \right) \:d\betab \\
    &\qquad \qquad \times \int_{\R_+^n} \left(\sum_{i=1}^{n}w_i + 1\right)^{p/2} \prod_{i=1}^{n} \gt(w_i) \:d\wb\\
    &= C_1 \: I_1 \: I_2, \text{ say},
    \end{align*}
    where 
    \begin{align}
    I_1 &= \int_{\R^p} \exp\left(-\frac12 \: \betab^T B^{-1} \betab + \betab^T\mub + \frac12 \sum_{i=1}^n\left|\ub_i^T\betab\right| \right) \:d\betab \label{I_1_trace} \\
    \text{and } I_2 &= \int_{\R_+^n} \left(\sum_{i=1}^{n}w_i + 1\right)^{p/2} \prod_{i=1}^{n} \gt(w_i) \:d\wb. \label{I_2_trace}
    \end{align}
    
    \noindent Thus, to prove that $K$ is trace class, it is enough to show that $I_1 < \infty$ and $I_2 < \infty$. Now, from Proposition~\ref{prop_I1_intgrnd_upper}, it follows that the integrand of $I_1$ can be bounded above by 
    \[
    C_2 \: \exp\left[-\frac14 \: \betab^T B^{-1} \betab \right].
    \]
    for an appropriately chosen constant $C_2$. Note that the above upper bound is a constant (only dependent of $\yb$) multiple of a  multivariate normal density, and hence, integrable, which implies $I_1$ is finite. \\
    
    \noindent As for $I_2$, because existence of higher moments ensures that of lower moments, it will be enough to show that 
    \begin{equation} \label{I_3_trace}
    I_3 = \int_{\R_+^n} \left(\sum_{i=1}^{n}w_i + 1\right)^{p} \prod_{i=1}^{n} \gt(w_i) \:d\wb < \infty. 
    \end{equation}
    First observe that 
    \[
    \left(\sum_{i=1}^{n}w_i + 1\right)^{p} \leq (n+1)^{p} \left(\sum_{i=1}^{n}w_i^{p} + 1\right).
    \]
    Therefore
    \[
    I_3 \leq \int_{\R_+^n} (n+1)^{p} \left(\sum_{i=1}^{n}w_i^{p} + 1\right) \prod_{i=1}^{n} \gt(w_i) \:d\wb = (n+1)^{p} \left[\left(\sum_{i=1}^{n} \int_{\R_+} w_i^{p}\:\gt(w_i)\:dw_i\right) + 1\right],
    \]
    which means in order to prove $I_3 < \infty$, it is enough to show that
    \[
    \E\left(W^{p}\right) = \int_{\R_+} w^{p} \: \gt(w_i) \:dw < \infty,
    \]
    where $W$ is a random variable with density $\gt$. Now using the representation (\ref{w_from_E}), we can write  
    \[
    W = \frac{2}{\pi^2} \sum_{l=1}^\infty \frac{E_l}{(2l-1)^2}
    \]
    where $E_l, l \geq 1$ are IID standard exponential random variables.  Therefore,
    \begin{align*}
    \E\left(W^{p}\right) &= \left( \frac{2}{\pi^2}\right)^p \E\left(\sum_{l=1}^\infty \frac{E_l}{(2l-1)^2}\right)^p \\
    &= \left( \frac{2}{\pi^2}\right)^p \E\left(\sum_{l_1=1}^\infty \cdots \sum_{l_p=1}^\infty \frac{E_{l_1}}{(2{l_1}-1)^2} \cdots \frac{E_{l_p}}{(2{l_p}-1)^2}   \right) \\
    &= \left( \frac{2}{\pi^2}\right)^p \sum_{l_1=1}^\infty \cdots \sum_{l_p=1}^\infty \frac{\E\left(E_{l_1} \cdots E_{l_p} \right) } {(2{l_1}-1)^2 \cdots (2{l_p}-1)^2}. \numbereqn \label{expon_sum}
    \end{align*} 
    Note that, in the right hand side of (\ref{expon_sum}), the term $\E\left(E_{l_1} \cdots E_{l_p} \right)$ provides the  expected value of product of at most $p$ distinct (and independent) exponential random variables $E_{l_1}, \dots, E_{l_p}$, and each random variable can be repeated at most $p$ times. Therefore,
    \begin{equation} \label{E_prod_upper}
    \E\left(E_{l_1} \cdots E_{l_p} \right) \leq \left(\E E_{l_1}^p\right) \cdots \left(\E E_{l_p}^p\right) = (p!)^p.
    \end{equation}
    Hence, from (\ref{expon_sum}) and (\ref{E_prod_upper})
    \[
    \E\left(W^{p}\right) \leq \left( \frac{2\:p!}{\pi^2}\right)^p \left(\sum_{l=1}^\infty \frac{1}{(2l-1)^2}\right)^p < \infty.
    \]
    Therefore, $I_3$ and hence $I_2$ is finite, which, together finiteness of $I_1$, implies that $I < \infty$. This completes the proof.
\end{proof}

\begin{proof}[Proof of Theorem~\ref{var_thm_psw}]
    
    Let $\Phi_m = \{\betab_0, \dots, \betab_{m-1} \}$ denote the first $m$ realizations of the \psw Markov chain $\Phi$. If $q_{j_1 j_2 \cdots j_k}$ denotes the joint density of $\betab_{j_1}, \dots, \betab_{j_k}$, $0 \leq j_1 < \cdots < j_k \leq m-1$, $k = 1, \cdots, m$, then we need to show that 
    \[
    \sup_{m \geq 1}\: \max_{0 \leq j < j' \leq m-1} \int_{\R^p} \int_{\R^p} \left[\int_{\R_+^n} \left(\cfrac{\pi(\betab \mid \wb, \yb)}{\pi(\betab \mid \yb)}\right)^2 \: \pi(\wb \mid\betat, \yb)\: d\wb \right] q_{jj'}(\betat, \betab)\: d\betat \: d\betab \label{var_int_psw} < \infty.
    \]
    
    \noindent We shall prove that for any $m \geq 1$ and any $0\leq j < j' \leq m$,  
    \begin{equation} 
    \It_{jj'} = \int_{\R^p} \int_{\R^p} \left[\int_{\R_+^n} \left(\cfrac{\pi(\betab \mid \wb, \yb)}{\pi(\betab \mid \yb)}\right)^2 \: \pi(\wb \mid\betat, \yb)\: d\wb \right] q_{jj'}(\betat, \betab)\: d\betat \: d\betab < C^* \label{I_jj'}
    \end{equation} 
    where $C^*$ is a finite constant free of $j, j'$, and $m$, and that will complete the proof. To this end, We first find an upper bound for the squared ratio $({\pi(\betab \mid \wb, \yb)}/{\pi(\betab \mid \yb)})^2$. Note that from (\ref{pi_beta_upper}), we get
    \begin{align*}
    \pi(\betab \mid \wb, \yb) \leq & \;C_1 \left(\sum_{i=1}^{n}w_i + 1\right)^{p/2}  \exp \left[\betab^T U^T \left(\yb - \frac12 \bm 1_n \right) \right]\\
    &\qquad \times \exp\left[-\frac12 \left(\betab^T B^{-1} \betab  - 2 \betab^T B^{-1} \bbo \right) \right]  \numbereqn \label{pi_beta_upper_2}  
    \end{align*}
    
    \noindent Now the prior density for $\betab$ is given by
    \begin{align*}
    \pi(\betab) &= (2\pi)^{-p/2} |B|^{-1/2} \exp\left[-\frac12 (\betab - \bbo)^T B^{-1} (\betab - \bbo)\right] \\
    &= \Ct_0 \exp\left[-\frac12 \left(\betab^T B^{-1} \betab - 2 \betab^T B^{-1} \bbo \ \right)\right] 
    \end{align*}
    where $\Ct_0 = (2\pi)^{-p/2} |B|^{-1/2}$.  Therefore, from (\ref{post_beta_inc}), the complete posterior density for $\betab$ is obtained as
    \begin{align*}
    \pi(\betab \mid \yb) &= \frac{\Ct_0}{c(\yb)}   \left[\prod_{i=1}^n {\left\lbrace F(\ub_i^T \betab) \right\rbrace}^{y_i} {\left\lbrace 1 - F(\ub_i^T \betab) \right\rbrace}^{1-y_i} \right]  \exp\left[-\frac12 \left(\betab^T B^{-1} \betab - 2 \betab^T B^{-1} \bbo \ \right)\right] \numbereqn \label{post_beta}
    \end{align*}

    \noindent Therefore, from (\ref{pi_beta_upper_2}) and (\ref{post_beta}), we get 
    \begin{align*}
    \left(\frac{\pi(\betab \mid \wb, \yb)}{\pi(\betab \mid \yb)}\right)^2 &\leq  \left(\frac{c(\yb) \: C_1}{\Ct_0}\right)^2  \left(\sum_{i=1}^{n}w_i + 1\right)^{p}  \exp \left[2\betab^T U^T \left(\yb - \frac12 \bm 1_n \right) \right]    \\
    & \qquad \qquad \qquad\times  \left[\prod_{i=1}^n {\left\lbrace F(\ub_i^T \betab) \right\rbrace}^{y_i} {\left\lbrace 1 - F(\ub_i^T \betab) \right\rbrace}^{1-y_i} \right]^{-2}  \\
    & \leq \Ct_1  \left(\sum_{i=1}^{n}w_i + 1\right)^{p} \exp\left(\frac12 \sum_{i=1}^n\left|\ut_i^T\betab\right| \right)
    \numbereqn \label{ratio2_upper_bd}
    \end{align*} 
    where $\Ct_1$ is a constant and the last inequality follows from Proposition~\ref{prop_exp_F_upper}.  In the following, we use the symbol  $q_{j' \mid  j}(\cdot, \cdot)$ to denote the conditional density for $\betab_{j'}$ given $\betab_j$, with $q_{j \mid  j}(\cdot, \betab)$ being the degenerate density associated with the point measure $\one_{\{\betab\}}(\cdot)$, and $q_{j}(\cdot)$ to denote the marginal density for $\betab_j$. Therefore, an upper bound for the integral $\It_{jj'}$ in (\ref{I_jj'}) is obtained as follows:

    \begin{align*}
    & \quad \Ct_1 \int_{\R^p} \int_{\R^p} \int_{\R_+^n}  \: \left(\sum_{i=1}^{n} w_i + 1\right)^{p} \exp\left(\frac12 \sum_{i=1}^n \left|\ut_i^T\betab \right| \right) \exp\left(\frac12 \sum_{i=1}^n \left|\ub_i^T\betat \right| \right) \\
    & \qquad \qquad \qquad \qquad \qquad \qquad \prod_{i=1}^{n} \gt(w_i) \: q_{jj'}(\betat, \betab) \: d\wb \: d\betat \: d\betab \\
    & \quad \Ct_1 \int_{\R^p} \int_{\R^p} \int_{\R_+^n}  \: \left(\sum_{i=1}^{n}w_i + 1\right)^{p} \exp\left(\frac12 \sum_{i=1}^n \left|\ut_i^T\betab \right| \right) \exp\left(\frac12 \sum_{i=1}^n \left|\ub_i^T\betat \right| \right) \\
    & \qquad \qquad \qquad \qquad \qquad \qquad \prod_{i=1}^{n} \gt(w_i) \: q_{j' \mid  j}(\betab \mid  \betat) \: q_j(\betat) \: d\wb \: d\betat \: d\betab \\
    & \quad \Ct_1 \int_{\R^p} \int_{\R^p} \int_{\R_+^n} \int_{\R^p}  \: \left(\sum_{i=1}^{n}w_i + 1\right)^{p} \exp\left(\frac12 \sum_{i=1}^n \left|\ut_i^T\betab \right| \right) \exp\left(\frac12 \sum_{i=1}^n \left|\ub_i^T\betat \right| \right) \\
    & \qquad \qquad \qquad \qquad \qquad \qquad \prod_{i=1}^{n} \gt(w_i) \: k(\betas, \betab)\: q_{j'-1 \mid  j}(\betas \mid \betat) \: q_j(\betat) \: d\nut(\betas)  \: d\wb \: d\betat \: d\betab \\
    &= \Ct_1 \left(\int_{\R_+^n} \left(\sum_{i=1}^{n}w_i + 1\right)^{p} \prod_{i=1}^{n} \gt(w_i) \: d\wb \right) \\
    &\qquad \qquad \times \int_{\R^p} \int_{\R^p} \left(\int_{\R^p} \exp\left(\frac12 \sum_{i=1}^n \left|\ut_i^T\betab \right| \right) \: k(\betas, \betab) \: d\betab \right) \\
    & \qquad \qquad \qquad \qquad \times \exp\left(\frac12 \sum_{i=1}^n \left|\ub_i^T\betat \right| \right) \: q_{j'-1 \mid  j}(\betas \mid \betat) \: q_j(\betat) \: d\nut(\betas)  \: d\betat \\
    &= \Ct_1 I_3 \int_{\R^p} \int_{\R^p} \left(\int_{\R^p} \exp\left(\frac12 \sum_{i=1}^n \left|\ut_i^T\betab \right| \right) \: k(\betas, \betab) \: d\betab \right) \\
    & \qquad \qquad \qquad \qquad \times \exp\left(\frac12 \sum_{i=1}^n \left|\ub_i^T\betat \right| \right) \: q_{j'-1 \mid  j}(\betas \mid \betat) \: q_j(\betat) \: d\nut(\betas)  \: d\betat. \numbereqn \label{I_jj'_upper}
    \end{align*}
    Here $\nut$ denotes the Lebesgue measure on $\R^p$ if $j' - 1 > j$, and the counting measure on $\R^p$ if $j'-1 = j$, and $I_3$, as defined in (\ref{I_3_trace}), is finite  (see the proof of Theorem~\ref{tr_thm_psw}).
    To show that $\It_{jj'}$ is bounded, we first find an upper bound for the inner integral on the right hand side of (\ref{I_jj'_upper}). We have
    \begin{align*}
    &\quad \int_{\R^p} \exp\left(\frac12 \sum_{i=1}^n \left|\ut_i^T\betab \right| \right) k(\betas, \betab)  \: d\betab \\
    &=  \int_{\R^p} \int_{\R_+^n} \exp\left(\frac12 \sum_{i=1}^n \left|\ut_i^T\betab \right| \right)  \pi(\betab \mid \wb, \yb) \:\pi(\wb \mid\betas, \yb)  \: d\wb \: d\betab  \\
    &= \Ct_2(X)  \int_{\R^p} \int_{\R_+^n} \exp\left(\frac12 \sum_{i=1}^n \left|\ut_i^T\betab \right| \right)  \: \left|U^T \Omega(\wb) U + B^{-1}\right|^{1/2} \: \exp \left[\betab^T \mub \right] \\
    &\qquad \qquad \qquad \qquad \times  \exp \left[-\frac12 \:\betab^T \left(U^T \Omega(\wb) U + B^{-1}\right)\betab \right]    \:\pi(\wb \mid\betas, \yb)  \: d\wb \: d\betab  \\
    &= \Ct_2(X) \int_{\R^p} \int_{\R_+^n}  \left|U^T \Omega(\wb) U + B^{-1}\right|^{1/2} \exp\left(\frac12 \sum_{i=1}^n \left|\ut_i^T\betab \right| + \betab^T \mub \right)     \\
    &\qquad \qquad \qquad \qquad \times  \exp \left[-\frac12 \:\betab^T \left(U^T \Omega(\wb) U + B^{-1}\right)\betab \right]    \:\pi(\wb \mid\betas, \yb)  \: d\wb \: d\betab \\
    & \stackrel{(\dagger)}{\leq} \Ct_3(X,\Ut)  \int_{\R^p} \int_{\R_+^n} \left|U^T \Omega(\wb) U + B^{-1}\right|^{1/2}  \exp \left(\frac14\: \betab^T B^{-1} \betab \right) \\
    &\qquad \qquad \qquad \qquad \qquad \times  \exp \left[-\frac12 \:\betab^T \left(U^T \Omega(\wb) U + B^{-1}\right)\betab \right]    \:\pi(\wb \mid\betas, \yb)  \: d\wb \: d\betab \\
    & = \Ct_3(X, \Ut)  \int_{\R^p} \int_{\R_+^n} \left|U^T \Omega(\wb) U + B^{-1}\right|^{1/2}  \exp \left[-\frac14 \:\betab^T \left(2U^T \Omega(\wb) U + B^{-1}\right)\betab \right]  \\
    &\qquad \qquad \qquad \qquad \qquad \times   \pi(\wb \mid\betas, \yb)  \: d\wb \: d\betab  \\
    &\leq  \Ct_3(X, \Ut)  \int_{\R_+^n} \left( \int_{\R^p}  \left|U^T \Omega(\wb) U + B^{-1}\right|^{1/2}  \exp \left[-\frac14 \:\betab^T \left(U^T \Omega(\wb) U + B^{-1}\right)\betab \right] \: d\betab\right)  \\
    &\qquad \qquad \qquad \qquad \qquad \times   \pi(\wb \mid\betas, \yb)  \: d\wb   \\
    & \stackrel{(\dagger \dagger)}{=} \Ct_3(X, \Ut) (4\pi)^{p/2}  \int_{\R_+^n} \pi(\wb \mid\betas, \yb)  \: d\wb  = \Ct_3(X, \Ut) (4\pi)^{p/2}.
    \end{align*}
    Here  $\Ct_2(X)$ and $\Ct_3(X, \Ut)$ are constants, $\Ut^T = (\ut_1, \dots, \ut_n)$, $(\dagger)$ follows from Proposition~\ref{exp_prod_upper}, and $(\dagger \dagger)$ follows from the fact that the inner integrand is $(4\pi)^{p/2}$ times a  Gaussian density. Thus, for any $\betas$, 
    \begin{equation} \label{int_exp_k_upper}
    \int_{\R^p} \exp\left(\frac12 \sum_{i=1}^n \left|\ut_i^T\betab \right| \right) k(\betas, \betab)  \: d\betab \leq \Ct_4(X, \Ut)
    \end{equation} 
    where $\Ct_4(X, \Ut) := \Ct_3(X, \Ut) (4\pi)^{p/2}$ is a constant free of $\betas$. Therefore, using this upper bound from  (\ref{int_exp_k_upper}) into (\ref{I_jj'_upper}), we get
    \begin{align*}
    \It_{jj'} &\leq C_1\: I_3 \: \Ct_4(X, \Ut) \int_{\R^p} \int_{\R^p}  \exp\left(\frac12 \sum_{i=1}^n \left|\ub_i^T\betat \right| \right) \: q_{j'-1 \mid  j}(\betas \mid \betat) \: q_j(\betat) \: d\nut(\betas)  \: d\betat \\
    &= C_1 \: I_3 \: \Ct_4(X, \Ut) \int_{\R^p} \left(\int_{\R^p} q_{j'-1 \mid  j}(\betas \mid \betat)  \: d\nut(\betas)  \right) \exp\left(\frac12 \sum_{i=1}^n \left|\ub_i^T\betat \right| \right) \: q_j(\betat) \: d\betat \\
    &= C_1 \: I_3 \: \Ct_4(X, \Ut) \int_{\R^p}  \exp\left(\frac12 \sum_{i=1}^n \left|\ub_i^T\betat \right| \right) \: q_j(\betat) \: d\betat \\
    &= C_1 \: I_3 \: \Ct_4(X, \Ut) \: I_4(q_j), \numbereqn \label{I_jj'_upper_2}
    \end{align*}
    where
    \begin{equation*}
    I_4(q_j) := \int_{\R^p} \exp\left(\frac12 \sum_{i=1}^n \left|\ub_i^T\betat \right| \right) \: q_j(\betat) \: d\betat.
    \end{equation*}
    and we show that for all $0 \leq j \leq m-1$ and all $m = 1,2, \cdots$, $I_4(q_j)$ is bounded above by the same constant. To this end, first observe that $q_0$ is the density associated with the initial distribution $\nu_0$, and hence
    \begin{equation} \label{I_4_q_0_upper}
    I_4(q_0) = \int_{\R^p} \exp\left(\frac12 \sum_{i=1}^n \left|\ub_i^T\betat \right| \right) q_0(\betat) \: d\betat = \int_{\R^p} \exp\left(\frac12 \sum_{i=1}^n \left|\ub_i^T\betat \right| \right) \: d\nu_0(\betat)  < \infty
    \end{equation}
    by assumption. Now for $j\geq 1$,
    \begin{align*}
    I_4(q_j) &= \int_{\R^p} \exp\left(\frac12 \sum_{i=1}^n \left|\ub_i^T\betat \right| \right) q_j(\betat) \: d\betat \\
    &= \int_{\R^p} \int_{\R^p} \exp\left(\frac12 \sum_{i=1}^n \left|\ub_i^T\betat \right| \right) k(\betab, \betat) \: q_{j-1}(\betab) \: d\betab \: d\betat \\
    &= \int_{\R^p} \left(\int_{\R^p} \exp\left(\frac12 \sum_{i=1}^n \left|\ub_i^T\betat \right| \right) k(\betab, \betat) \: d\betat\right)  q_{j-1}(\betab) \: d\betab  \\
    & \leq \Ct_4(X, X) \int_{\R^p}  q_{j-1}(\betab) \: d\betab = \Ct_4(X, X) \numbereqn \label{I_4_q_j_upper},
    \end{align*}
    where the last inequality follows from (\ref{int_exp_k_upper}). Combining (\ref{I_4_q_0_upper}) and (\ref{I_4_q_j_upper}), we get for all $j=0, \dots, m-1$ and all $m\geq 1$
    \begin{equation} \label{I_4_upper}
    I_4(q_j) \leq \Ct_5(X) := \max\{I_4(q_0), \Ct_4(X, X) \}. 
    \end{equation}
    Therefore, from (\ref{I_jj'_upper_2}) and (\ref{I_4_upper}), we have,
    \[
    \It_{jj'} \leq C_1 \: I_3 \: \Ct_4(X, \Ut) \: \Ct_5(X) = C^*
    \]
    where $C^* = C_1 \: I_3 \: \Ct_4(X, \Ut) \: \Ct_5(X)$ is a constant independent of $j, j'$ and $m$. This completes the proof.

\end{proof}

\section{Proofs of some inequalities used in this Appendix} \label{appA}

\begin{prop} \label{prop_det_ineq}
    There exists a constant $a > 0$ such that	
    \[
    \left|U^T \Omega(\wb) U + B^{-1}\right| \leq a \left(\sum_{i=1}^{n}w_i + 1\right)^{p}. 
    \]
\end{prop}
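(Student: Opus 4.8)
The plan is to bound $|U^T\Omega(\wb)U+B^{-1}|$ entrywise and then invoke the Leibniz expansion of the determinant. Write $M := U^T\Omega(\wb)U + B^{-1}$; since $U^T\Omega(\wb)U$ is positive semidefinite and $B^{-1}$ is positive definite, $M$ is positive definite, so $|M| = \det M > 0$ and it suffices to bound $\det M$ from above by $a\bigl(\sum_{i=1}^n w_i + 1\bigr)^p$. First I would record the $(k,l)$ entry explicitly,
\[
M_{kl} = \sum_{i=1}^n w_i\, U_{ik}U_{il} + (B^{-1})_{kl},
\]
where $U_{ik}$ denotes the $(i,k)$ entry of $U$ (equivalently the $k$-th coordinate of $\ub_i$). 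Because $U$ and $B$ are fixed and each $w_i \ge 0$, setting $C := \max\bigl\{\max_{i,k,l}|U_{ik}U_{il}|,\ \max_{k,l}|(B^{-1})_{kl}|\bigr\}$ (a maximum over finite index sets, hence finite) gives the uniform bound $|M_{kl}| \le C\bigl(\sum_{i=1}^n w_i + 1\bigr)$ for every $k,l$.

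Next I would apply the Leibniz formula $\det M = \sum_{\sigma\in S_p}\operatorname{sgn}(\sigma)\prod_{k=1}^p M_{k,\sigma(k)}$ together with the triangle inequality. Each of the $p!$ summands is a product of $p$ entries of $M$, and each such entry is bounded by $C\bigl(\sum_i w_i + 1\bigr)$, so
\[
\det M \;\le\; p!\, C^p\Bigl(\sum_{i=1}^n w_i + 1\Bigr)^p .
\]
Taking $a := p!\, C^p$ then completes the proof.

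There is no genuine obstacle here; the only points needing a little care are that the constant $C$ depends on $U$ and $B$ but not on $\wb$, and that non-negativity of the $w_i$ is what lets the sum $\sum_i w_i$ and the constant term coming from $B^{-1}$ be absorbed into the single factor $\sum_i w_i + 1$. If one wanted a more structural statement — that $\det M$ is actually a polynomial in $w_1,\dots,w_n$ of total degree at most $p$ with non-negative coefficients — an alternative route is to write $B^{-1} = \sum_{j=1}^p \bm{v}_j\bm{v}_j^T$ via its spectral (or Cholesky) decomposition, observe $M = AA^T$ for the $p\times(n+p)$ matrix $A$ with columns $\sqrt{w_1}\,\ub_1,\dots,\sqrt{w_n}\,\ub_n,\bm{v}_1,\dots,\bm{v}_p$, and apply the Cauchy--Binet formula; then every monomial $w^\alpha$ that appears has $|\alpha|\le p$ and is dominated by $(\sum_i w_i + 1)^p$, yielding the same conclusion. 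I would present the shorter entrywise argument in the write-up.
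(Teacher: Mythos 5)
Your argument is correct, but it takes a different route from the paper. The paper works in the Loewner order: it bounds $w_i \ub_i\ub_i^T \preceq w_i\, U^TU$, sums to get $U^T\Omega(\wb)U + B^{-1} \preceq \bigl(\sum_{i=1}^n w_i + 1\bigr)\bigl(U^TU + B^{-1}\bigr)$, and then invokes monotonicity of the determinant on the cone of positive semidefinite matrices to conclude with the clean constant $a = \bigl|U^TU + B^{-1}\bigr|$. You instead bound the matrix entrywise by $C\bigl(\sum_i w_i + 1\bigr)$ and expand the determinant via the Leibniz formula, which is more elementary (no appeal to determinant monotonicity under $\preceq$) but yields the cruder constant $a = p!\,C^p$. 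Since the constant is irrelevant for the application (it is absorbed into $C_1$ in Theorem~\ref{tr_thm_psw}), both proofs serve equally well; your use of the non-negativity of the $w_i$ to absorb the $B^{-1}$ contribution into the single factor $\sum_i w_i + 1$ mirrors exactly the role that non-negativity plays in the paper's semidefinite comparison. One cosmetic point: you should note that $C>0$ (which holds because $B^{-1}$ is positive definite and hence has a nonzero entry), so that $a = p!\,C^p$ is indeed strictly positive as the statement requires. Your Cauchy--Binet aside is also a valid third route and would in fact recover the sharper structural fact that the determinant is a polynomial in $w_1,\dots,w_n$ of degree at most $p$ with non-negative coefficients, but as you say the entrywise argument suffices.
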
 
\begin{supplproof}
    We have,
    \begin{align*}
    U^T \Omega(\wb) U + B^{-1} &= \sum_{i=1}^n w_i \ub_i \ub_i^T + B^{-1} \\
    &\leq \sum_{i=1}^n w_i \left(\sum_{j=1}^n \ub_i \ub_i^T\right) + B^{-1}  \\
    &= \left(\sum_{i=1}^n w_i\right) U^TU   + B^{-1} \leq \left(\sum_{i=1}^n w_i + 1\right) \left(U^TU + B^{-1}\right). 
    \end{align*}
    Therefore,
    \[
    \left|U^T \Omega(\wb) U + B^{-1}\right| \leq a \left(\sum_{i=1}^{n}w_i + 1\right)^{p},\; \text{where } a = \left|U^TU + B^{-1}\right|.
    \]
\end{supplproof}

\begin{prop} \label{prop_exp_F_upper}
    There exist constants $M > 0$ and $\xi > 0$ with $\ut_i = \xi \ub_i$, $i = 1, \dots, n$ such that
    \[
    \exp \left[\betab^T U^T \left(\yb - \frac12 \bm 1_n \right) \right]  \left[\prod_{i=1}^n {\left\lbrace F(\ub_i^T \betab) \right\rbrace}^{y_i} {\left\lbrace 1 - F(\ub_i^T \betab) \right\rbrace}^{1-y_i} \right]^{-1} \leq M \: \exp\left(\frac14 \sum_{i=1}^n\left|\ut_i^T\betab\right| \right).
    \]
\end{prop}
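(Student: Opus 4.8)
The plan is to rewrite the Bernoulli likelihood factor in terms of $\cosh$ so that the exponential prefactor cancels exactly, and then bound the resulting product of hyperbolic cosines by an exponential. First I would record the elementary identity for the logistic distribution function $F(t) = e^t/(1+e^t)$: writing $F(t) = e^{t/2}/\bigl(e^{t/2}+e^{-t/2}\bigr) = e^{t/2}/(2\cosh(t/2))$ and $1-F(t) = e^{-t/2}/(2\cosh(t/2))$, one gets for $y \in \{0,1\}$
\[
\{F(t)\}^{y}\{1-F(t)\}^{1-y} = \frac{e^{(y - 1/2)t}}{2\cosh(t/2)}.
\]
Applying this with $t = \ub_i^T\betab$ and $y = y_i$, and taking the product over $i = 1,\dots,n$, the numerators combine to $\exp\bigl[\sum_{i=1}^n (y_i - \tfrac12)\ub_i^T\betab\bigr] = \exp\bigl[\betab^T U^T(\yb - \tfrac12 \bm 1_n)\bigr]$, which is exactly the reciprocal of the leading factor on the left-hand side.

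Hence the left-hand side of the claimed inequality collapses to $\prod_{i=1}^n 2\cosh\bigl(\ub_i^T\betab/2\bigr)$, with no $\yb$-dependence in the exponent remaining. The second step is the trivial bound $2\cosh(s) = e^s + e^{-s} \leq 2\,e^{|s|}$, which gives
\[
\prod_{i=1}^n 2\cosh\!\left(\frac{\ub_i^T\betab}{2}\right) \leq 2^n \exp\!\left(\frac12 \sum_{i=1}^n \left|\ub_i^T\betab\right|\right).
\]
Finally I would choose the constants: take $\xi = 2$ so that $\ut_i = 2\ub_i$, and $M = 2^n$. Then $\tfrac14 \sum_{i=1}^n |\ut_i^T\betab| = \tfrac14 \sum_{i=1}^n |\,2\,\ub_i^T\betab\,| = \tfrac12 \sum_{i=1}^n |\ub_i^T\betab|$, and the two displays above combine to give exactly the asserted bound.

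There is no real obstacle here; the only thing requiring a moment's care is recognizing the $\cosh$-representation that produces the cancellation (this is the same algebraic fact underlying the Polya--Gamma data augmentation itself), and then tracking the factor of $2$ so that the constant $\xi$ is chosen large enough for $\tfrac{\xi}{4} \geq \tfrac12$. Everything else is a one-line estimate.
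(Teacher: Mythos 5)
Your proof is correct, and it takes a genuinely different (and sharper) route than the paper's. You exploit the exact identity $\{F(t)\}^{y}\{1-F(t)\}^{1-y} = e^{(y-1/2)t}/(2\cosh(t/2))$, under which the prefactor $\exp\bigl[\betab^T U^T(\yb - \tfrac12 \bm 1_n)\bigr]$ cancels the numerators exactly, reducing the left-hand side to $\prod_{i=1}^n 2\cosh(\ub_i^T\betab/2)$ and yielding the bound with $M = 2^n$ and $\xi = 2$. The paper instead bounds the two factors separately: it estimates the reciprocal likelihood by $4^n\exp\bigl(\sum_i|\ub_i^T\betab|\bigr)$ via $(1+e^{-u})(1+e^{u}) \leq 4e^{|u|}$, and the exponential prefactor by $\exp\bigl(\xi_0\sum_i|\ub_i^T\betab|\bigr)$ with $\xi_0 = \max_i|y_i - \tfrac12|$, then adds the exponents to get $M = 4^n$ and $\xi = 4(1+\xi_0)$. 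Both are valid since the proposition only asserts existence of some $M$ and $\xi$, and the downstream uses (Theorems~\ref{tr_thm_psw} and \ref{var_thm_psw}) are insensitive to the particular value of $\xi$; your cancellation argument is tidier, avoids the paper's slightly garbled intermediate display for the exponential prefactor bound, and produces strictly smaller constants. The only thing to note is that your $\ut_i = 2\ub_i$ differs from the paper's $\ut_i = 4(1+\xi_0)\ub_i$, so if one were matching constants across the two proofs of the trace-class and variance conditions, the definition of $\Ut$ would have to be kept consistent — but this is cosmetic.
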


\begin{supplproof}
    First note that, for all $i=1,\cdots,n$,
    \begin{align*}
    F(\ub_i^T \betab) = \frac{e^{\ub_i^T\betab}}{1+e^{\ub_i^T\betab}} = \left(1+e^{-\ub_i^T\betab}\right)^{-1}
    \text{ and } 1 - F(\ub_i^T \betab) &= \left(1 + e^{\ub_i^T\betab} \right)^{-1}.
    \end{align*}
    
    \noindent Therefore, 
    \begin{align*}
    &\quad \left[\prod_{i=1}^n {\left\lbrace F(\ub_i^T \betab) \right\rbrace}^{y_i} {\left\lbrace 1 - F(\ub_i^T \betab) \right\rbrace}^{1-y_i} \right]^{-1} = \; \prod_{i=1}^n \left(1+e^{-\ub_i^T\betab}\right)^{y_i} \left(1 + e^{\ub_i^T\betab} \right)^{1-y_i} \\
    &\leq \prod_{i=1}^n \left(1+e^{-\ub_i^T\betab}\right) \left(1 + e^{\ub_i^T\betab} \right)  = \prod_{i=1}^n \left(2 + e^{-\ub_i^T\betab} + e^{\ub_i^T\betab} \right) \\
    & \leq \prod_{i=1}^n \left(2 e^{\left| \ub_i^T\betab \right|} + e^{\left| \ub_i^T\betab \right|} + e^{\left| \ub_i^T\betab \right|} \right) = 4^n  \exp\left( \sum_{i=1}^n\left|\ub_i^T\betab\right| \right). \numbereqn \label{F_prod}
    \end{align*}
    
    \noindent Also,
    \[
    \betab^T U^T \left(\yb - \frac12 \bm 1_n \right) = \sum_{i=1}^n \ub_i^T\betab \left(y_i - \frac12\right) \leq  \sum_{i=1}^n \left|\ub_i^T\betab\right|  \left|y_i - \frac12\right| \leq \xi_0 \sum_{i=1}^n \left|\ub_i^T\betab\right|.
    \]
    where $\xi_0 = \max_{1 \leq i \leq n} |y_i - \frac12|$. This implies
    \begin{equation} \label{exp_term}
    \exp\left[ \betab^T U^T \left(\yb - \frac12 \bm 1_n \right) \right] \leq \left[ \xi_0 \sum_{i=1}^n \exp \left|\ub_i^T\betab\right| \right].
    \end{equation}
    Therefore, from (\ref{F_prod}) and (\ref{exp_term}) we have
    \begin{align*}
    &\quad \exp \left[\betab^T U^T \left(\yb - \frac12 \bm 1_n \right) \right]  \left[\prod_{i=1}^n {\left\lbrace F(\ub_i^T \betab) \right\rbrace}^{y_i} {\left\lbrace 1 - F(\ub_i^T \betab) \right\rbrace}^{1-y_i} \right]^{-1} \\
    &\leq 4^n \exp \left[(1+\xi_0) \sum_{i=1}^n  \left|\ub_i^T\betab\right|\right] = 4^n \exp \left[\frac14 \sum_{i=1}^n  \left|(\xi\ub_i)^T\betab\right|\right] \\ 
    &= M  \exp \left[\frac14 \sum_{i=1}^n  \left|\ut_i^T\betab\right|\right],
    \end{align*}
    with $M = 4^n$, $\ut_i = \xi \ub_i$ for all $i = 1, \dots, n$, and $\xi = 4(1+\xi_0)$.
\end{supplproof}

\begin{prop} \label{exp_prod_upper}
    For an appropriately chosen constant $a_0$,
    \[
    \exp\left(\frac12 \sum_{i=1}^n \left|\ub_i^T\betab \right| + \betab^T \mub \right)  \leq a_0\: \exp \left(\frac14\: \betab^T B^{-1} \betab \right).
    \]
\end{prop}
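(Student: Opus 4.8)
The plan is to exploit a dimensional mismatch: the exponent on the left is at most \emph{linear} in $\betab$ (a finite sum of absolute values of linear forms, plus the linear term $\betab^T\mub$), whereas the exponent on the right is a \emph{positive definite quadratic} form in $\betab$. A quadratic with positive leading part dominates any linear function up to an additive constant, so it suffices to show that
\[
g(\betab) := \frac12 \sum_{i=1}^n \left|\ub_i^T\betab\right| + \betab^T\mub - \frac14\: \betab^T B^{-1}\betab, \qquad \betab \in \R^p,
\]
is bounded above by a finite constant, and then take $a_0 = \exp\left(\sup_{\betab \in \R^p} g(\betab)\right)$.

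First I would bound the linear part. By Cauchy--Schwarz, $\left|\ub_i^T\betab\right| \le \|\ub_i\|\,\|\betab\|$ and $\betab^T\mub \le \|\mub\|\,\|\betab\|$, so
\[
\frac12 \sum_{i=1}^n \left|\ub_i^T\betab\right| + \betab^T\mub \;\le\; L\,\|\betab\|, \qquad L := \frac12\sum_{i=1}^n \|\ub_i\| + \|\mub\| < \infty .
\]
Next I would bound the quadratic part from below. Since $B$ is a (proper) multivariate normal covariance matrix, $B^{-1}$ is symmetric positive definite, hence $\betab^T B^{-1}\betab \ge \rho\,\|\betab\|^2$, where $\rho > 0$ is the smallest eigenvalue of $B^{-1}$. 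Combining the two bounds,
\[
g(\betab) \;\le\; L\,\|\betab\| - \frac{\rho}{4}\,\|\betab\|^2 .
\]
Viewed as a function of $t = \|\betab\| \ge 0$, the right-hand side is a downward-opening parabola, attaining its maximum value $L^2/\rho$ at $t = 2L/\rho$. Therefore $g(\betab) \le L^2/\rho$ for every $\betab$, and choosing $a_0 = e^{L^2/\rho}$ yields the claimed inequality.

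There is no real obstacle here; the only point that needs care is that $B^{-1}$ is positive definite, so that $\rho > 0$, which is guaranteed by the standing assumption that $B$ is the covariance matrix of a proper Gaussian prior. Everything else is the elementary comparison of a linear and a quadratic function, and the final constant $a_0$ depends only on the data (through $\ub_1,\dots,\ub_n$ and $\mub$) and the prior hyperparameter $B$, as required.
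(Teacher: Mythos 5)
Your proof is correct, but it takes a genuinely different route from the paper's. You reduce everything to the scalar comparison of a linear function against a quadratic in $t = \|\betab\|$: bound the whole linear part by $L\|\betab\|$ via Cauchy--Schwarz, bound the quadratic form below by $\rho\|\betab\|^2$ using the smallest eigenvalue of $B^{-1}$, and maximize the resulting downward parabola to get the explicit constant $a_0 = e^{L^2/\rho}$. The paper instead handles the two pieces of the exponent separately and keeps everything at the level of quadratic forms: it applies the AM--GM inequality termwise in the form $\tfrac12|\ub_i^T\betab| \le \tfrac{\epsilon}{8}(\ub_i^T\betab)^2 + \tfrac{1}{2\epsilon}$, sums to get $\tfrac{\epsilon}{8}\betab^T U^TU\betab + \tfrac{n}{2\epsilon}$, and then chooses $\epsilon$ small enough that $B^{-1} - \epsilon U^TU \succeq 0$ so this is absorbed into $\tfrac18\betab^T B^{-1}\betab$; a parallel Cauchy--Schwarz/AM--GM step after the change of variables $\betas = \tfrac12 B^{-1/2}\betab$ absorbs $\betab^T\mub$ into another $\tfrac18\betab^T B^{-1}\betab$, and the two eighths add to the required quarter. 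Your argument is more elementary and yields a cleaner closed-form constant; the paper's budgeted version makes explicit how much of the coefficient of $\betab^T B^{-1}\betab$ each term consumes, which is the bookkeeping that matters when the bound is later fed into Gaussian integrals in Theorems~\ref{tr_thm_psw} and \ref{var_thm_psw}. Either way the conclusion and the dependence of $a_0$ on $U$, $\mub$, and $B$ are the same, so your proof is a valid substitute.
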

\begin{supplproof}
    Let $\epsilon > 0$ be arbitrary. Then, by AM-GM inequality,
    \begin{align*}
    &\frac{\left|\ub_i^T\betab \right|}{2} = \sqrt{\frac{\epsilon \left(\ub_i^T\betab \right)^2}{4}\cdot \frac{1}{\epsilon}} \leq \frac{\epsilon\: \frac{\left(\ub_i^T\betab \right)^2}{4} + \frac{1}{\epsilon}}{2} = \frac18\:\epsilon\: \betab^T \ub_i\ub_i^T \betab + \frac{1}{2\epsilon} \\
    \implies & \frac12 \sum_{i=1}^n \left|\ub_i^T\betab \right| \leq \frac18\:\epsilon\: \betab^T U^TU \betab + \frac{n}{2\epsilon} \\
    \implies & \exp \left(\frac12 \sum_{i=1}^n \left|\ub_i^T\betab \right|\right) \leq a' \exp \left[ \frac18\: \betab^T (\epsilon \: U^TU) \betab\right] \leq  a' \exp \left(\frac18\: \betab^T B^{-1} \betab\right) \numbereqn \label{upper_1}
    \end{align*}
    where $a' = \exp\left(\frac{n}{2\epsilon}\right)$, and $\epsilon$ is chosen to be sufficiently small such that $B^{-1} - \epsilon U^TU \geq 0$. Note that because $B$ and hence $B^{-1}$ is positive definite, and $U^TU$ is positive semi-definite, there always exists such an $\epsilon > 0$. (See, e.g., Proposition~A.3 in \citet{chakraborty:khare:2017}.) Now, letting $\betas =  \left(\frac12 B^{-1/2}\right) \betab$ and $\mustar = \left(2B^{1/2}\right)\mub$ yields
    \begin{align*}
    & \betab^T\mub = \betas^T \mustar  \stackrel{(\star)}{\leq} \sqrt{\left(\betas^T \betas\right) \left(\mustar^T\mustar\right) } \stackrel{(\star \star)}{\leq} \frac{\betas^T\betas + \mustar^T \mustar}{2} \\
    \implies & \exp \left(\betab^T\mub \right) \leq a'' \exp \left(\frac12 \: \betas^T\betas \right) = a'' \exp \left(\frac18 \betab^T B^{-1} \: \betab \right), \numbereqn \label{upper_2} 
    \end{align*}
    where $(\star)$ follows form Cauchy-Schwarz inequality, $(\star \star)$ follows from AM-GM inequality, and 
    \[
    a'' = \exp\left(\frac12\:\mustar^T \mustar\right).
    \]
    
    \noindent Therefore, from (\ref{upper_1}) and (\ref{upper_2}), and by letting $a_0 = a'\cdot a''$, we get
    \[
    \exp\left(\frac12 \sum_{i=1}^n \left|\ub_i^T\betab \right| + \betab^T \mub \right)  \leq  a_0\: \exp \left(\frac14\: \betab^T B^{-1} \betab \right).
    \]
    
\end{supplproof}

\begin{prop}\label{prop_I1_intgrnd_upper}
    For an appropriately chosen constant $a_0$,
    \[ 
    \exp\left(-\frac12 \: \betab^T B^{-1} \betab + \betab^T\mub + \frac12 \sum_{i=1}^n|\ub_i^T\betab \mid  \right) \leq a_0 \: \exp\left(-\frac14 \: \betab^T B^{-1} \betab \right).
    \]
\end{prop}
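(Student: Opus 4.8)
The plan is to observe that this statement is essentially a restatement of Proposition~\ref{exp_prod_upper}, obtained by splitting off the quadratic term. First I would write the left-hand integrand as a product,
\[
\exp\left(-\frac12 \betab^T B^{-1}\betab + \betab^T\mub + \frac12\sum_{i=1}^n |\ub_i^T\betab|\right) = \exp\left(-\frac12 \betab^T B^{-1}\betab\right)\cdot \exp\left(\frac12\sum_{i=1}^n |\ub_i^T\betab| + \betab^T\mub\right),
\]
so that the second factor is exactly the quantity bounded in Proposition~\ref{exp_prod_upper}.

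Next I would apply Proposition~\ref{exp_prod_upper} to the second factor, which gives a constant $a_0 > 0$ with
\[
\exp\left(\frac12\sum_{i=1}^n |\ub_i^T\betab| + \betab^T\mub\right) \leq a_0 \exp\left(\frac14 \betab^T B^{-1}\betab\right).
\]
Substituting this in and combining exponents yields
\[
\exp\left(-\frac12 \betab^T B^{-1}\betab + \betab^T\mub + \frac12\sum_{i=1}^n |\ub_i^T\betab|\right) \leq a_0 \exp\left(-\frac12 \betab^T B^{-1}\betab + \frac14 \betab^T B^{-1}\betab\right) = a_0 \exp\left(-\frac14 \betab^T B^{-1}\betab\right),
\]
which is the claimed bound with the same constant $a_0$.

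There is essentially no obstacle here: the content of the estimate (controlling the linear-in-$\betab$ and absolute-value terms by a small multiple of the quadratic form $\betab^T B^{-1}\betab$, using positive definiteness of $B^{-1}$ versus the positive semidefinite $U^TU$, together with AM--GM and Cauchy--Schwarz) has already been carried out in Proposition~\ref{exp_prod_upper}. The only thing to be careful about is bookkeeping of which half of the quadratic form is "spent" absorbing the lower-order terms and which half survives; allocating $\tfrac14\betab^TB^{-1}\betab$ to the absorption and keeping $\tfrac14\betab^TB^{-1}\betab$ is exactly what makes the constant match. Hence the proof is a two-line consequence of the earlier proposition.
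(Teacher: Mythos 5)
Your proposal is correct and matches the paper's own proof exactly: the paper also derives this as a corollary of Proposition~\ref{exp_prod_upper}, multiplying both sides of that inequality by $\exp\left(-\frac12 \betab^T B^{-1}\betab\right)$ and combining the exponents $-\frac12 + \frac14 = -\frac14$. Nothing further is needed.
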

\begin{supplproof}
    This is result is a corollary to Proposition~\ref{exp_prod_upper}, which ensures the existence of a constant $a_0$ such that
    \[
    \exp\left(\frac12 \sum_{i=1}^n \left|\ub_i^T\betab \right| + \betab^T \mub \right)  \leq  a_0\: \exp \left(\frac14\: \betab^T B^{-1} \betab \right).
    \]
    
    \noindent The proof is completed by multiplying both sides of the above inequality by $\exp\left(-\frac12 \: \betab^T B^{-1} \betab\right)$. 
\end{supplproof}

\end{appendix}


\end{document}